\documentclass[%
 reprint,
superscriptaddress,
]{revtex4-2}
\usepackage[utf8]{inputenc}
\usepackage{graphicx}
\usepackage{dcolumn}
\usepackage{dsfont}
\usepackage{amssymb}
\usepackage{MnSymbol}
\usepackage{amsfonts}
\usepackage{mathtools}
\usepackage{amsmath}
\usepackage{amsthm}
\usepackage{xcolor}
\usepackage{hyperref}
\usepackage{comment}
\usepackage{physics}
\usepackage{soul}
\usepackage{tikz}
\usetikzlibrary{shapes.geometric}

\newcommand\mc[1]{\mathcal{#1}}

\newcommand{\cH}{{\mc{H}}}
\newcommand{\cG}{{\mc{G}}}

\newcommand{\cP}{{\mc{P}}}

\newcommand{\aP}{{\widetilde{\mc{P}}}}
\newcommand{\cJ}{{J}}
\newcommand{\cS}{{\mc{S}}}

\newcommand{\ta}{{\widetilde{a}}}

\newcommand{\tA}{{\widetilde{A}}}

\newcommand{\tB}{{\widetilde{B}}}

\newcommand{\tP}{{\widetilde{P}}}
\newcommand{\tQ}{{\widetilde{Q}}}

\newcommand{\cU}{{\mc{U}}}

\newcommand{\R}{{\mathbb R}}
\newcommand{\C}{{\mathbb C}}
\newcommand{\CP}{{\mathbb{CP}}}

\newcommand{\Cl}{{\mathcal{C}}}
\newcommand{\one}{{\mathds{1}}}

\newcommand{\zz}{{\mathbb{Z}}}
\newcommand{\ra}{{\rightarrow}}

\newcommand{\tot}{{\mathrm{tot}}}

\newcommand{\Sp}{{\mathrm{Sp}}}
\newcommand{\fg}{{L}}
\newcommand{\PI}{{\cS}}

\newcommand{\pentagon}{%
  \mathbin{%
    \tikz[baseline=-0.55ex]%
      \node[
        draw,
        regular polygon,
        regular polygon sides=5,
        minimum size=1.2ex,
        inner sep=0pt,
        line width=0.3pt
      ] {};
  }%
}
\newcommand{\pent}{{\cS^{\pentagon}}}
\newcommand{\pentone}{{\cS^{\pentagon}_\one}}

\newcommand{\cx}[1]{{\textcolor{green!70!black}{{#1}}}}
\newcommand{\cz}[1]{{\textcolor{red}{{#1}}}}
\newcommand{\cw}[1]{{\color{red}{{#1}}}}

\newtheorem{definition}{Definition}
\newtheorem{lemma}{Lemma}

\newtheorem{proposition}{Proposition}
\newtheorem{theorem}{Theorem}
\newtheorem*{theorem*}{Theorem}

\begin{document}
\preprint{APS/123-QED}

\title{An algebraically closed family of informational $n$-qubit purity invariants}
\author{Markus Frembs}
\email{markus.frembs@itp.uni-hannover.de}
\thanks{(first author)}
\affiliation{Institut f\"ur Theoretische Physik, Leibniz Universit\"at Hannover, Appelstraße 2, 30167 Hannover, Germany\\ Okinawa Institute of Science and Technology Graduate University, 1919-1 Tancha, Onna-son, Okinawa 904-0495, Japan}
\author{Giovanni Natale}
\email{giovanninatale96@gmail.com}
\affiliation{Okinawa Institute of Science and Technology Graduate University, 1919-1 Tancha, Onna-son, Okinawa 904-0495, Japan}
\author{Christopher S.P. Wever}
\email{christopher.wever@de.bosch.com}
\affiliation{Corporate Sector Research and Advance Engineering, Robert Bosch GmbH, Robert-Bosch-Campus 1, D-71272 Renningen, Germany}
\author{Philipp A. H\"ohn}
\email{philipp.hoehn@oist.jp}
\affiliation{Okinawa Institute of Science and Technology Graduate University, 1919-1 Tancha, Onna-son, Okinawa 904-0495, Japan}

\begin{abstract}
    We present a family of quadratics in Pauli expectation values, and prove that they constitute state-independent invariants for all $n$-qubit pure states. This family generalises the two-qubit `pentagon identities', discovered in the reconstruction programme of Ref.~\cite{Hoehn2017,HoehnWever2017}, where they characterise the space of pure states, as well as the unitary group, and are interpreted as complementarity equalities in the Brukner-Zeilinger information measure. The generalisation to arbitrarily many qubits is nontrivial as it requires new tools which in turn reveal novel structural properties that are absent in the two-qubit case. A thorough analysis of these properties, and their relation with mutual unbiasedness and complementarity  in the $n$-qubit Pauli group, can be found in two companion papers.
\end{abstract}

\maketitle

\textbf{Introduction.} Unlike the intrinsically geometric structure of state spaces in classical physics, in the form of symplectic/Poisson manifolds, quantum mechanics is first and foremost an algebraic framework. States are normalised, positive linear functionals on the noncommutative von Neumann or $C^*$-algebra, whose operators contain the observables of the theory. This, together with the richer structure of correlations described by quantum states - including nonlocal and contextual correlations - impedes an intuitive geometric perspective on the set of states describing a (possibly composite) quantum system.

In quantum information and computation, quantum systems are usually taken to be composed from $n$-fold tensor powers of a single unit: a `qubit', that is, a quantum system of Hilbert space dimension two. Most computational architectures are devised for $n$-qubit systems \footnote{Models for higher-dimensional `qudits' exist as well \cite{WangHuSandersKais2020} - and sometimes are preferrable to qubits - yet, their study is much less developped than their qubit counterparts.}, which motivates their study in particular. Yet, even in this special case, few structural results (beyond the case of two qubits) are known \cite{JakobczykSiennicki2001,BengtssonZyczkowski2006,UskovRau2008,MorelliEtAl2024,Mosseri_2001,Bernevig_2003,Sudbery_2001,Jing_2015}.

For a single qubit, every density operator is of the form $\rho=\frac{1}{2}\one+\frac{1}{2}(\alpha_X X+\alpha_Y Y+\alpha_Z Z)$, where $\alpha_X,\alpha_Y,\alpha_Z\in\R$ and $\aP:=\{\one,X,Y,Z\}$ is the set of Hermitian Pauli matrices,
\begin{align*}
    \one&=\begin{pmatrix}
        1 & 0 \\ 0 & 1
    \end{pmatrix} &
    X&=\begin{pmatrix}
        0 & 1 \\ 1 & 0
    \end{pmatrix} &
    Y&=\begin{pmatrix}
        0 & -i \\ i & 0
    \end{pmatrix} &
    Z&=\begin{pmatrix}
        1 & 0 \\ 0 & -1
    \end{pmatrix}\; .
\end{align*}
Pure states lie on the `Bloch sphere', described by, $\alpha^2_X+\alpha^2_Y+\alpha^2_Z=1$, whereas general mixed states fill the inner `Bloch ball', $\alpha^2_X+\alpha^2_Y+\alpha^2_Z<1$. Yet, this geometric picture quickly becomes complicated for $n\geq2$ qubits, since unlike for $n=1$, the (pure-state) space is much more constrained: there are many more purity constraints than the single one $\sum_{\one\neq P\in\aP_n}\alpha^2_P=2^n-1$. While some representations exist for two- and three-qubit systems \cite{BengtssonZyczkowski2006,Mosseri_2001,Bernevig_2003,HoehnWever2017}, the situation for $n$ qubits remains structurally opaque.

Another subject in which the characterization of quantum states assumes a central role is reconstructions of quantum theory from operational axioms \cite{Hardy2001,Masanes:2010tt,dakic2011quantum,chiribella2011informational,selby2021reconstructing,muller2021probabilistic,Hoehn2017,HoehnWever2017}. Specifically, the informational reconstruction in Refs.~\cite{Hoehn2017,HoehnWever2017} (see \cite{Hoehn:2016otu,Hohn:2017cpr} for summaries) leads to certain `pentagon identities' for two qubits. These are pure state complementarity equalities in the quadratic Brukner-Zeilinger information measure \cite{BruknerZeilinger1999,BruknerZeilinger2001} that characterise the set of pure states and the unitary group. To the best of our knowledge, this remains the only  `prediction' by any reconstruction of novel structural properties in the standard formalism of quantum theory. To complete the reconstruction  for arbitrarily many qubits in Ref.~\cite{Hoehn:2016otu}, certain universality results made it feasible to sidestep the question what the analogous informational complementarity identities are for $n>2$ that similarly characterize the space of pure states and the unitary group.

Here, we finally answer this question, deriving a family of remarkably simple informational invariants of $n$-qubit pure states in Bloch representation, while expounding on the complementarity properties of the associated observable sets and revealing previously unnoticed structure in the $n$-qubit Pauli group in Ref.~\cite{FrembsHoehn2026b}. Furthermore, Ref.~\cite{Frembs2026} proves that this family fully encodes the unitary orbit of $n$-qubit pure states, thus providing a new characterisation of the intricate geometry of the $n$-qubit state space, generalising this the two-qubit case in Ref.~\cite{HoehnWever2017}.

\section{Background and Preliminaries}\label{sec: background and preliminaries}

In this section, we fix our notation, introduce the Brukner-Zeilinger information measure \cite{BruknerZeilinger1999,BruknerZeilinger2001}, review the known two-qubit invariants from Ref.~\cite{HoehnWever2017}, and demonstrate that a straightforward generalisation fails.

\subsection{Hermitian $n$-qubit Pauli operators}\label{sec: Pauli operators}

Denote the \emph{$n$-qubit Pauli group} by $\cP_n:=\langle\one,X,Y,Z\rangle^{\otimes n}$, and its normaliser, the \emph{Clifford group}, by $\mathrm{Cl}_n=\mathrm{Aut}(\cP_n)$. The subset $\aP_n:=\{\one,X,Y,Z\}^{\otimes n}$ forms an orthogonal basis in the space of Hermitian matrices, with respect to the Hilbert-Schmidt inner product $(A,B):=\tr[A^*B]$. It follows that every density operator $\rho\in\mc{D}(\cH)$ for $\cH=\C^{2^n}$ inside the space of density operators $\mc{D}(\cH)$ has a \emph{Bloch representation $\rho=\frac{1}{2^n}(\one+\sum_{\one\neq P\in\aP_n}\alpha_P P)$} whose \emph{Bloch components $\alpha_P\in\R$} satisfy $|\alpha^2|=\sum_{\one\neq P\in\aP_n}\alpha^2_P\leq 2^n-1$, with equality for pure states.

As our main quantity of interest (see Eq.~(\ref{eq: BZ information measure})) is further invariant under re-phasing $P\ra (-1)^{v(P)}P$ for $v:\aP_n\ra\zz_2$ arbitrary, we will disregard the central phase group $K=\{\pm\one,\pm i\one\}<\cP_n$ and restrict to the Hermitian Pauli representatives $\aP_n\simeq\cP_n/K$ (as sets) instead (see App.~\ref{app: Abelian Pauli group}). In particular, when considering (anti-)commutation relations between elements in $\aP_n$, we will identify $\zeta P\in\cP_n$ for $\zeta\in K$ and $P\in\aP_n$ with its unique representative in $\aP_n$, and write $\zeta P\sim P$.

\subsection{Pentagon identities for two qubits}\label{sec: pentagon identities}

In Ref.~\cite{BruknerZeilinger1999,BruknerZeilinger2001}, Brukner and Zeilinger argue that the Shannon entropy, defined for classical random variables, fails to capture a crucial aspect of quantum mechanics: complementarity. Unlike in classical physics, observables in quantum theory can in general not be measured jointly with arbitrary precision. As a consequence, information about a quantum system is generally spread out across complementary observables. To capture this, they propose a quadratic measure in the Bloch components of a quantum state. This measure also emerges in the axiomatic reconstruction of qubit quantum theory in Refs.~\cite{Hoehn2017,HoehnWever2017}, which postulates the existence of a finite set of complementary observables - which, given the axioms, are shown to correspond with Hermitian $n$-qubit Pauli matrices $\aP_n$ - and quantifies the information contained in a given (generically complementary) subset $\PI\subset\aP_n$ in a given state \footnote{That states correspond with quantum states is, of course, part of the reconstruction in Ref.~\cite{Hoehn2017,HoehnWever2017}.} $\rho=\frac{1}{2^n}(\one+\sum_{\one\neq P\in\aP_n}\alpha_P P)$ to be 
\begin{align}\label{eq: BZ information measure}
    I_\rho(\PI)
    :=\sum_{Q\in\PI}\left(\tr[\rho Q]\right)^2
    =\sum_{Q\in\PI}\alpha^2_Q\; ,
\end{align}
which for pure states $\rho=\dyad{\psi}$ is equivalent to $I_\psi(\PI)=\sum_{Q\in\PI}|\langle\psi|Q|\psi\rangle|^2=\sum_{Q\in\PI}\tr[(\rho Q)^2]$. We call $\PI\subset\aP_n$ a \emph{purity invariant}  if $I_\psi(\PI)=\mathrm{const}$ for all pure states $\psi\in\CP^{2^n-1}$ \footnote{The full set $\aP=\{\one,X,Y,Z\}^{\otimes n}$ and the identity always yield (trivial) identities, $I_\tot:=\sum_{P\in\aP_n}I_\psi(P)-1=2^n-1$ and $I_\psi(\one)=1$. For $n=1$ this also yields a maximal set of mutually anti-commuting Pauli operators and this identity too is crucial for recovering the Bloch ball as the correct state space of a single qubit \cite{Hoehn2017}.}. A crucial ingredient of the reconstruction in Ref.~\cite{Hoehn2017,HoehnWever2017} is that for $n=2$, there exist non-trivial  such invariants in the form of maximal sets of mutually anti-commuting Pauli operators, where maximality refers to both unextendability \emph{and} cardinality \footnote{Maximal cardinality in our notion of maximality is necessary as not all unextendible sets of mutually anti-commuting Paulis have the same cardinality \cite{HoehnWever2017}.}. For example, for $\pent=\{X\one,Z\one,YX,YY,YZ\}$ one finds \footnote{For brevity, we sometimes omit tensor products, and write $PQ$ for $P\otimes Q$ with $P,Q\in\aP_n$. Overloading notation slightly, we will also write $\one$ for the identity in $M_{2^n}(\C)$ (the explicit dimension will be clear from context).}
\begin{align}\label{eq: pentagon identity}
   I_\psi(\pent)
   =\alpha_{X\one}^2+\alpha_{Z\one}^2+\alpha_{YX}^2+\alpha_{YY}^2+\alpha_{YZ}^2=1\; ,  
\end{align}
for every pure state $\psi\in\CP^3$. Since maximal sets of mutually anti-commuting two-qubit Pauli operators have cardinality $5=2n+1$ for $n=2$ \cite{SarkarVanDenBerg2021}, they are called `pentagon identities' in Ref.~\cite{HoehnWever2017}, and as shown in that reference they characterise the sets of pure states and the unitary group. Given the results in Ref.~\cite{Hoehn2017,HoehnWever2017}, it is natural to ask whether such (non-trivial) purity invariants exist also for $n>2$ qubits.

\subsection{A first (failed) generalisation attempt}\label{sec: naive generalisation}

Since maximal sets of anti-commuting Paulis give rise to purity invariants for $n=1,2$ one might suspect this to generalise to $n>2$. However, this is not the case.

\begin{proposition}\label{prop: naive generalisation}
    Maximal sets $\cS\subset\aP_n$ of mutually anti-commuting Paulis are not purity invariants for $n>2$, that is, $I_\psi(\cS)\neq I_{\psi'}(\cS)$ in general for pure states $\psi\neq\psi'$.
\end{proposition}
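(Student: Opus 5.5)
The plan is to exhibit, for each $n>2$, an explicit maximal set $\cS$ of mutually anti-commuting Paulis and two pure states on which $I_\psi(\cS)$ differs. We will use the standard fact \cite{SarkarVanDenBerg2021} that maximal anti-commuting sets in $\aP_n$ have cardinality $2n+1$, together with an explicit Jordan--Wigner type representative:
\begin{align*}
    \cS_n=\{\,Z^{\otimes k}X\one^{\otimes n-k-1},\; Z^{\otimes k}Y\one^{\otimes n-k-1}\;:\;k=0,\dots,n-1\,\}\cup\{Z^{\otimes n}\}.
\end{align*}
These $2n+1$ elements pairwise anti-commute. Any two strings of this form share positions only where at most one of them carries $X$ or $Y$ against a $Z$ in the other, except for the pair $X,Y$ at the same level. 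Since the size equals the known maximum, $\cS_n$ is maximal in both senses (unextendable and of maximal cardinality).

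Next we evaluate $I_\psi(\cS_n)$ on two product states. For $\psi_1=\ket{0}^{\otimes n}$, every element except $Z^{\otimes n}$ has an $X$ or $Y$ factor and so has zero expectation, while $\alpha_{Z^{\otimes n}}=1$. This gives $I_{\psi_1}(\cS_n)=1$.

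For the second state we want several terms to be simultaneously nonzero, which requires some nontrivial $X$ or $Y$ weight. Take $\psi_2=\ket{+}\otimes\ket{+}\otimes\ket{0}^{\otimes n-2}$. Then $\alpha_{X\one\cdots\one}=1$. The term $Z X\one\cdots$ has expectation $\langle Z\rangle_+\langle X\rangle_+=0$, and similarly the other terms vanish, so this state gives $1$ again. That is not useful, which is why I expect choosing a separating state to be the main obstacle: product states tend to saturate exactly one term, because anti-commuting observables cannot be jointly sharp. I would therefore use entangled states, exploiting that for $n\ge3$ the elements of $\cS_n$ can be made to have commuting \emph{products} of triples. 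Concretely, $A=X\one\one\cdots$, $B=Z Z X\one\cdots$ and $C=ZY\one\cdots$ (so $k=0,2,1$) satisfy that $ABC$ is, up to phase, a Hermitian Pauli commuting with all three, namely $\one X X\one\cdots$ up to a factor $i$. The generalized uncertainty relation for anti-commuting observables, $\sum\alpha^2\le1$, is then tight only in certain ways. More useful is the observation that for $n\ge3$ we can choose a state that is a joint $\pm1$ eigenstate of two \emph{commuting} products, such as $\mathrm{i}\,AB$ and $\mathrm{i}\,CD$ for disjoint pairs. In such a state all $\alpha$ in $\cS_n$ vanish, because $\langle A\rangle=\langle (iAB)(-iB)\rangle$ and anti-commutation forces $\langle A\rangle=-\langle A\rangle$.

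Concretely, with $n=3$ take $P_1=XY\one$-type products. We have $iA_0B_0=i\,XY\cdot\ldots$; it is simplest to take $iX_1Y_1=-Z_1$ at site 1, while the second level gives $i(ZX\one)(ZY\one)=-\one Z\one$. So $\psi_3=\ket{000}$ is a joint eigenstate of $Z\one\one$ and $\one Z\one$. Then every element of $\cS_3$ anti-commutes with one of these except $Z_1Z_2X_3$, $Z_1Z_2Y_3$ and $Z^{\otimes3}$, and $\ket{000}$ returns $1$. Instead, pick the joint eigenstate of the commuting pair $Z\one\one$ and $\one Z\one$ with $\ket{+}$ on site 3 replaced appropriately: $\psi_3=\ket{0}\ket{0}\ket{+}$. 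This gives $\alpha_{ZZX}=1$ and all others $0$, so again $1$. To separate, I would finally check that the fully mixed-in-$\cS$ state with $I=0$ exists as a pure state. Take $\psi=(\ket{000}+\ket{011})/\sqrt2$, which is a $+1$ eigenstate of $Z\one\one$ and $\one XX$. Every element of $\cS_3$ anti-commutes with $Z\one\one$ (those with $k=0$) or with $\one XX$ (all others, including $Z^{\otimes3}$, which anti-commutes since it has two $Z$'s against two $X$'s, giving a commuting sign $+$). This last case must be checked and adjusted if needed. Any element anti-commuting with a stabilizer of $\psi$ has zero expectation, so $I_\psi(\cS_3)=0\neq1$. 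Tensoring with $\ket{0}^{\otimes n-3}$, and using the analogous $\cS_n$ in which extra qubits enter only through $Z$'s, extends the argument to all $n>2$.
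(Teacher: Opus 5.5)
Your Jordan--Wigner set $\cS_n$ is a valid maximal anti-commuting set, and $I_{\ket{0}^{\otimes n}}(\cS_n)=1$ is correct. Tensoring a three-qubit example with $\ket{0}^{\otimes n-3}$ also correctly reduces general $n$ to $n=3$: the $k\ge3$ terms vanish and $\alpha_{Z^{\otimes n}}=\alpha_{ZZZ}$. The gap is the three-qubit separating state itself. Your $\psi=(\ket{000}+\ket{011})/\sqrt2=\ket{0}\otimes\ket{\Phi^+}$ is stabilised by $Z\one\one$ and $\one ZZ$, hence by their product $ZZZ\in\cS_3$. So $\alpha_{ZZZ}=1$, the other six elements have zero expectation, and $I_\psi(\cS_3)=1$, exactly the value for $\ket{000}$. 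Your justification is also off elsewhere. $ZX\one$ and $ZZY$ commute with both $Z\one\one$ and $\one XX$; they vanish only because they anti-commute with $\one ZZ$. And $ZZZ$ commutes with $\one XX$, as your own parenthetical sign count shows. Every state you actually exhibit gives $I=1$, so the proposition is not proved.

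The underlying issue is one your Majorana heuristic misses. Write $\gamma_{2k+1}=Z^{\otimes k}X\one\cdots$ and $\gamma_{2k+2}=Z^{\otimes k}Y\one\cdots$. Then the last element $Z^{\otimes n}$ equals $\gamma_1\cdots\gamma_{2n}$ up to phase. Being an eigenstate of $iAB$ and $iCD$ only kills $\langle A\rangle,\dots,\langle D\rangle$. Moreover, if a stabiliser group consists only of even products of $\gamma$'s, it commutes with $Z^{\otimes n}$ and hence, by maximality, contains it, forcing $\alpha_{Z^{\otimes n}}=\pm1$. Your state is of this type: its stabilisers $Z\one\one$, $\one XX$, $\one YY$ are, up to phase, $\gamma_1\gamma_2$, $\gamma_4\gamma_5$, $\gamma_3\gamma_6$.

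You need a stabiliser group containing no element of $\cS_n$, which forces odd products of $\gamma$'s. This is what the paper's state achieves for its set: its stabiliser group $\langle XX\one,YY\one,\one\one Y\rangle$ contains none of the seven elements $XXX,XYX,XZX,X\one Y,X\one Z,Y\one\one,Z\one\one$. A working choice for your $\cS_n$ is $\ket{\Phi^+}_{13}\otimes\ket{+}_2\otimes\ket{0}^{\otimes n-3}$, whose stabiliser group contains $X_2\sim\gamma_1\gamma_2\gamma_3$. Qubit 1 is maximally mixed, so $\langle X_1\rangle=\langle Y_1\rangle=\langle Z_1X_2\rangle=0$. Every other element of $\cS_n$, including $Z^{\otimes n}$, carries $Y$ or $Z$ on the unentangled qubit 2, where $\langle Y\rangle_+=\langle Z\rangle_+=0$. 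This gives $I=0\neq1$. With this replacement your argument is complete, and unlike the paper's single $n=3$ counterexample it covers all $n>2$ at once.
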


\begin{proof}
    We provide an explicit counterexample for $n=3$. Consider the maximally anti-commuting set of three-qubit Pauli operators \cite{SarkarVanDenBerg2021}, of the form
    \begin{align*}
        \cS
        =\{XXX, XYX, XZX, X\one Y, X\one Z, Y\one\one, Z\one\one\}\; ,
    \end{align*}
    and the pure state $\psi\in\CP^{2^3-1}$ in Bloch representation,
    \begin{align*}
        \rho=\dyad{\psi} 
        =\frac{1}{8} (&\one\one\one+XX\one+YY\one+ZZ\one\\
        &+XXY+YYY+ZZY+\one\one Y)\; .
    \end{align*}
    Note that $\psi$ is a pure state as the simultaneous eigenstate of a maximal set of mutually commuting Pauli operators. One computes
    ${I_\psi(\cS)
    =\sum_{P\in\cS} \tr[\dyad{\psi}P]^2
    =0}$, while e.g.\ $I_{\psi'}(\cS)=1$ for $|\psi'\rangle=|000\rangle$.
\end{proof}

It follows that the cases $n=1,2$ are exceptional (whether or not purity invariants exist for more than two qubits). In the next section, we will see why $n=1,2$ do in fact represent exceptional cases in a general family of purity invariants for $n$ qubits with $n\in\mathbb{N}$.

\section{Family of purity invariants}\label{sec: family of pure state invariants}

Prop.~\ref{prop: naive generalisation} implies that both the complementarity relations and the algebraic structure of mutually anti-commutating Pauli operators constituting the pentagon identities for two qubits require generalisation. We therefore reevaluate them from these two perspectives.

To extract their complementarity structure, consider the partition of $\aP_2\backslash\{\one\}$ into $5$ maximal Abelian subgroups (arranged vertically) with the identity removed:
\begin{center}
\setlength{\tabcolsep}{12pt}
\renewcommand{\arraystretch}{1.5}
\begin{tabular}{ccccc}
    $X\one$ & $Z\one$ & $YX$ & $YY$ & $YZ$ \\
    $\one X$ & $\one Z$ & $ZY$ & $Y\one$ & $XY$ \\
    $XX$ & $ZZ$ & $XZ$ & $\one Y$ & $ZX$
\end{tabular}
\end{center}
The operators in every column mutually commute and thus share a common eigenbasis; moreover, the eigenbases of different columns are easily seen to be mutually unbiased \cite{LawrenceBruknerZeilinger2002}. Note that the first row corresponds with $\pent$ above. Its complementarity structure can therefore be understood in terms of mutual unbiasedness. The latter reduces to mutual anti-commutation if one picks a single element in every maximal Abelian column subgroups in the partition, but is readily generalised to subgroups of higher dimension (with the identity included). 

In other words, maximal complementarity between individual observables becomes supplanted by maximal complementarity between \emph{sets} of compatible (commuting) observables; maximal information about one set of compatible observables comes at the expense of total ignorance about another. For a detailed study of such maximal complementarity structures in the Pauli group and their relation with mutual unbiasedness, see Ref.~\cite{FrembsHoehn2026b}. As we show there, maximal complementarity structures give rise to purity invariants, of which the family in Def.~\ref{def: family of purity invariants} below constitutes a special case with particularly nice algebraic closure properties (see Ref.~\cite{Frembs2026}).

To gain further insight into its algebraic structure, let us parametrise the elements in $\pent$ in terms of operators in the first two columns, that is, in terms of the Weyl representation (see App.~\ref{app: Abelian Pauli group}). In order to do so, in addition to the elements $X_1,Z_1\in\pent$ \footnote{Here, $X_i/Z_i$ denote the Pauli strings with only non-identity factor $X/Z$ in the $i$-th slot, e.g.\ $X_3=\one\one X\one\cdots\one$.}, we need to specify two further elements, one from the first and one from the second column. It turns out that a canonical choice is to choose an element $g^Z_X$ from the first column, which commutes with $Z_1$, and an element $g^X_Z$ from the second column, which commutes with $X_1$, since these constraints fix these elements uniquely to be $g^Z_X=X_2$ and $g^X_Z=Z_2$ (as well as more generally under the generalisation of the elements $X_1,Z_1\in\pent$ to complementary subgroups $\tA_X$ and $\tA_Z$, see below and Ref.~\cite{FrembsHoehn2026b}). With respect to these elements, the set $\pentone:=\pent\cup\{\one\}$ can be written in terms of the following (anti-)commutators:
\begin{equation}\label{eq: two-qubit relations}
\begin{aligned}
    \pentone
    =\{&\{\one,\one\},\{X\one,\one\},\{\one,Z\one\}, & &\{XX,ZZ\},\\
    &[X\one,ZZ], & &[XX,Z\one]\}\; .
\end{aligned}
\end{equation}
$\pentone$ thus admits a straightforward parametrisation with respect to elements $\tA_X:=\pentone\cap A_X=\langle X_1\rangle$ and $\tA_Z:=\pentone\cap A_Z=\langle Z_2\rangle$ in the intersection with the subgroups $A_X=\langle X_i\rangle_{i=1}^2$ and $A_Z=\langle Z_i\rangle_{i=1}^2$, together with two elements $A_X\ni g^Z_X\notin\pent$ and $A_Z\ni g^X_Z\notin\pent$, uniquely determined by $[g^Z_X,\tA_Z]=0=[g^X_Z,\tA_X]$. 

Moreover, this parametrisation applies to all pentagon identities in Ref.~\cite{HoehnWever2017}, and thus provides an alternative characterisation of their complementarity structure, e.g.\ the (anti-)commutation relations in Eq.~(\ref{eq: two-qubit relations}) encode that two non-identity elements in $\pentone$ anti-commute. Yet, unlike mutual anti-commutation, the above parametrisation generalises to arbitrarily many qubits as follows.

\begin{definition}\label{def: family of purity invariants}
    For $n\geq 1$, let $A_X=\langle X_i\rangle_{i=1}^n$, $A_Z=\langle Z_i\rangle_{i=1}^n$. For any subgroups $\tA_X<A_X$, $\tA_Z<A_Z$ of ($\zz_2$-)dimension $n-1$ and elements $g^Z_X\in A_X\backslash\tA_X$, $g^X_Z\in A_Z\backslash\tA_Z$ such that $[g^Z_X,\tA_Z]=0=[g^X_Z,\tA_X]$ (and thus $[g^Z_X,g^X_Z]\neq 0$), define
    \begin{widetext}
    \begin{align}\label{eq: family of purity invariants}
    \begin{split}
        \cJ(\tA_X,\tA_Z,g^Z_X,g^X_Z)\ 
        &:=\cJ_{\{\tA_X,\tA_Z\}}\ \cupdot\ \cJ_{\{g^Z_X\tA_X,g^X_Z\tA_Z\}}\ \cupdot\ \cJ_{[\tA_X,g^X_Z\tA_Z]}\ \cupdot\ \cJ_{[g^Z_X\tA_X,\tA_Z]}\\
        &:=\big\{\{\ta_X,\ta_Z\}\,\big\mid \,\ta_X\in\tA_X,\ta_Z\in\tA_Z\big\}\\
        &\hspace{1.95cm}\cupdot
        \big\{\{g^Z_X\ta_X,g^X_Z\ta_Z\}\,\big\mid \,\ta_X\in\tA_X,\ta_Z\in\tA_Z\big\}\\
        &\hspace{4.7cm}\cupdot \big\{[\ta_X,g^X_Z\ta_Z]\,\big\mid \,\ta_X\in\tA_X,\ta_Z\in\tA_Z
        \big\}\\
        &\hspace{7.05cm}\cupdot
        \big\{[g^Z_X\ta_X,\ta_Z]\,\big\mid \,\ta_X\in\tA_X,\ta_Z\in\tA_Z\big\}\; ,
    \end{split}
    \end{align}
    \end{widetext}
    where $[\cdot,\cdot]$ ($\{\cdot,\cdot\}$) denote the (normalised) (anti-) commutator, respectively (cf.~App.~\ref{app: Abelian Pauli group}).
\end{definition}

In App.~\ref{app: n=3}, we list the sets in Def.~\ref{def: family of purity invariants}, with their defining data highlighted in colour for the three-qubit case $n=3$, as the first non-trivial generalisation beyond two qubits.

From Def.~\ref{def: family of purity invariants}, $\tA_X\cupdot g_X^Z\tA_X = \langle\tA_X, g_X^Z\rangle=A_X$ (similarly for $X\leftrightarrow Z$). A Venn diagram depicting the set relations of the group elements defining $J$ is shown in Fig.~\ref{fig:venn_diagram}

The sets in Eq.~(\ref{eq: family of purity invariants}) are indeed disjoint, as they contain products ($\{Q,P\}\sim QP$ if $\{Q,P\}\neq 0$ and $[Q,P]\sim QP$ if $[Q,P]\neq 0$) involving different combinations of $g^Z_X$ and $g^X_Z$ \footnote{This also follows from Lm.~\ref{lm: conjugate generators} since disjointness is not affected by conjugation}. Their respective cardinalities are given in Lm.~\ref{lm: cardinalities} of App.~\ref{app: Abelian Pauli group}; in particular, $|\cJ|=2^{n-1}(2^n-1)$.

We write $\fg=\aP_n\backslash\cJ$ for the complement of $\cJ$ in $\aP_n$. Since $I_\psi(\aP_n)=2^n$ it follows immediately that $I_\psi(J)=\mathrm{const}$ if and only if $I_\psi(L)=\mathrm{const}$. Note that for $n=1$, Def.~\ref{def: family of purity invariants} gives $\cJ=\{\one\}$ and $\fg=\aP^1\backslash\{\one\}=\{X,Y,Z\}$, which corresponds with the trivial invariant $I_\psi(\fg)=1$ for pure states $\psi\in\CP^1$ \footnote{There are no other (nontrivial) purity invariants in this case, since a single-qubit pure state $\psi\in\mathbb{PC}^1$ is described by two real parameters, which coincides (only for $n=1$) with the parametrisation of the `Bloch sphere', defined by $\tr[\rho^2]=1(=\tr[\rho])$, that is, $\sum_{\one\neq P\in\aP_n}\alpha^2_P=1$ (and $\alpha_\one=\frac{1}{2}$).}. Moreover, Def.~\ref{def: family of purity invariants} recovers the pentagon identities from Ref.~\cite{HoehnWever2017} (see above).

Before turning to the proof that the sets $J$ define purity invariants for all $n$, we first show that the choice of reference subgroups in Def.~\ref{def: family of purity invariants} is without loss of generality.

\begin{figure}[htbp]
    \centering
    \includegraphics[width=0.3\textwidth]{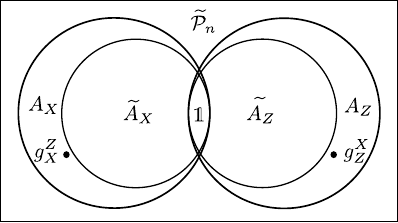} 
    \caption{Venn diagram showing the set structure of the relevant group elements defining the  sets in Def.~\ref{def: family of purity invariants}.}
    \label{fig:venn_diagram}
\end{figure}

\begin{lemma}\label{lm: conjugate generators}
    Let $A,B<\aP_n$ be two maximal Abelian subgroups such that $A\cap B=\{\one\}$. Moreover, let $\tA<A$, $\tB<B$ be Abelian subgroups of dimension $n-1$ and let $a\in A$, $b\in B$ satisfy the conditions in Def.~\ref{def: family of purity invariants}, that is, $\langle\tA,a\rangle=A$, $\langle\tB,b\rangle=B$ and $[\tA,b]=0=[\tB,a]$. Then there exists a Clifford unitary $U\in\Cl_n$ such that
    \begin{align*}
        U\tA U^\dagger&=\tA_Z=\langle Z_i\rangle_{i=1}^{n-1} &
        UaU^\dagger&=Z_n \\ 
        U\tB U^\dagger&=\tA_X=\langle X_i\rangle_{i=1}^{n-1} &
        UbU^\dagger&=X_n\; .
    \end{align*}
\end{lemma}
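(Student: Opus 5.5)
We work in the symplectic picture, identifying $\aP_n$ modulo phases with $\zz_2^{2n}$ equipped with the commutation form $\omega$. Maximal Abelian subgroups then correspond to Lagrangian subspaces, and $\Cl_n$ acts on this quotient as the full symplectic group $\Sp(2n,\zz_2)$. Every symplectic matrix lifts to a Clifford unitary, with signs absorbed by the identification $\zeta P\sim P$. The lemma therefore becomes a statement in linear algebra: we must build a symplectic basis $\{a_1,\dots,a_n,b_1,\dots,b_n\}$, with $\omega(a_i,b_j)=\delta_{ij}$ and $\omega(a_i,a_j)=0=\omega(b_i,b_j)$, such that $\tA=\langle a_1,\dots,a_{n-1}\rangle$, $a_n=a$, $\tB=\langle b_1,\dots,b_{n-1}\rangle$ and $b_n=b$. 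The symplectic map sending $a_i\mapsto Z_i$ and $b_i\mapsto X_i$ then yields the required $U$.

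The first step is to show that $\omega$ restricted to $A\times B$ is a nondegenerate pairing. Suppose $x\in A$ commutes with all of $B$. Then $x$ commutes with $A+B$. Since $A\cap B=\{\one\}$ and both have dimension $n$, we get $A+B=\zz_2^{2n}$. Hence $x$ lies in the radical of $\omega$, which is trivial, so $x=\one$. The same argument applies with $A$ and $B$ exchanged.

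Next we account for the given data. We have $\omega(b,\tA)=0$, and $\omega(b,a)=1$, since otherwise $b$ would pair trivially with all of $A$, contradicting nondegeneracy. Likewise $\omega(a,\tB)=0$, so $\tA=a^{\perp}\cap A$ is the annihilator of $b$ in $A$ and $\tB$ is the annihilator of $a$ in $B$. It follows that the pairing restricted to $\tA\times\tB$ is still nondegenerate. To see this, take $x\in\tA$ orthogonal to $\tB$. It is also orthogonal to $b$, hence to all of $B$, so $x=\one$.

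Now choose any basis $a_1,\dots,a_{n-1}$ of $\tA$. By nondegeneracy of the pairing $\tA\times\tB\to\zz_2$, there is a dual basis $b_1,\dots,b_{n-1}$ of $\tB$ with $\omega(a_i,b_j)=\delta_{ij}$. Set $a_n=a$ and $b_n=b$. The relations within $A$ and within $B$ vanish because these subgroups are Abelian. The mixed relations hold because $\omega(a_i,b)=0=\omega(a,b_j)$ for $i,j<n$ and $\omega(a,b)=1$. So this is a symplectic basis, which completes the construction.

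The only place needing care is the lifting from $\Sp(2n,\zz_2)$ to $\Cl_n$ and the handling of phases. Surjectivity of $\Cl_n\to\Sp(2n,\zz_2)$ is standard, for instance because $\Sp(2n,\zz_2)$ is generated by transvections, and each transvection is realised by a Clifford of the form $e^{i\pi P/4}$. The resulting $U$ maps each Pauli to the target only up to a sign, which is immaterial under the convention $\zeta P\sim P$. I expect this bookkeeping, rather than the linear algebra, to be the step that needs the most care to state precisely.
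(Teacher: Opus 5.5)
Your proof is correct. It uses the same toolkit as the paper (Lagrangian subspaces, symplectic bases, and surjectivity of $\mathrm{Cl}_n\to\Sp(2n,\zz_2)$), but organises the argument differently.

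The paper works in two stages. It first applies Lemma~\ref{lm: conjugate pairs} to move $(A,B)$ to $(A_Z,A_X)$. It then uses CNOT and SWAP gates, which realise any $M\in\mathrm{GL}(n,\zz_2)$ on $A_Z$ (and $M^{-T}$ on $A_X$), hence fix both Lagrangians setwise, to send $\tA\mapsto\tA_Z$ and $a\mapsto Z_n$. The images of $\tB$ and $b$ are then not constructed but forced by the commutation hypotheses: $U\tB U^\dagger\subset A_X\cap C(Z_n)=\tA_X$, and $UbU^\dagger$ is the only element of $A_X$ that commutes with $\tA_Z$ and anti-commutes with $Z_n$, namely $X_n$.

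You instead build one symplectic basis adapted to all four pieces of data at once: the dual-basis construction on $\tA\times\tB$, supplemented by the pair $(a,b)$. The hypotheses $[\tA,b]=0=[\tB,a]$ then enter as vanishing off-diagonal entries of the Gram matrix, rather than through a rigidity argument after normalisation, and a single symplectic map suffices.

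Your route is more self-contained. It proves the nondegeneracy of $\omega$ on $A\times B$ and on $\tA\times\tB$, which the paper's Lemma~\ref{lm: conjugate pairs} only sketches, and it makes explicit why $\omega(a,b)=1$. The paper's route is more modular and more concrete, since it gives an explicit CNOT circuit for the second stage. Its rigidity step, that $\tB$ and $b$ are determined once $\tA$ and $a$ are fixed, is the same fact reused later for the uniqueness of $g^X_Z$ in the counting argument.

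One small slip: you wrote $\tA=a^{\perp}\cap A$, which should read $b^{\perp}\cap A$, since $a^{\perp}\cap A=A$. This is harmless, because your nondegeneracy argument for $\tA\times\tB$ uses the hypothesis $[\tA,b]=0$ directly.
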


\begin{proof}
    We give the proof in App.~\ref{app: Abelian Pauli group}.
\end{proof}

Clifford conjugation thus yields a large class of purity invariant candidate sets, which satisfy the same algebraic properties as in Def.~\ref{def: family of purity invariants}. Whenever we write $J$ in what follows, we refer to any such set in this larger class.\\

\textbf{Algebraic properties.} A key property of the partition $\aP_n=\cJ\cupdot\fg$, given by Def.~\ref{def: family of purity invariants}, is that it satisfies the following closure relations with respect to (anti-)commutators between (elements in) $\cJ$ and $\fg$.

\begin{lemma}\label{lm: algebraic closure}
    $\cJ$ is closed under anti-commutators and $\fg$ is closed under commutators. Moreover, for $n\geq 2$,
    \vspace{-.2cm}
    \begin{equation}\label{eq: algebraic closure relations}
    \begin{aligned}
        \{\cJ,\cJ\}&=\cJ &\ \
        \{\cJ,\fg\}&=\fg &\ \
        \{\fg,\fg\}&=\cJ \\ 
        [\cJ,\cJ]&=\fg &\ \ 
        [\cJ,\fg]&=\cJ &\ \ 
        [\fg,\fg]&=\fg\\
    \end{aligned}
    \end{equation}
    whereas $[\cJ,\cJ]=[\cJ,\fg]=0$ for $n=1$.
\end{lemma}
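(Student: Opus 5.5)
By Lm.~\ref{lm: conjugate generators} and the fact that Clifford conjugation preserves products and (anti-)commutation relations, it suffices to prove the statement for one fixed representative. I will use $\tA_X=\langle X_i\rangle_{i=1}^{n-1}$, $\tA_Z=\langle Z_i\rangle_{i=1}^{n-1}$, $g^Z_X=X_n$, $g^X_Z=Z_n$. I then pass to the symplectic (Weyl) picture of App.~\ref{app: Abelian Pauli group}. Up to phase, a Pauli $P\in\aP_n$ corresponds to $w=(u,v)\in\zz_2^{2n}$, with $P\sim X^uZ^v$. Products correspond to addition, and commutation is governed by the symplectic form $\omega(w,w')=u\cdot v'+u'\cdot v$. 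The normalised anti-commutator $\{P,Q\}$ is $\sim PQ$ if $\omega=0$ and zero otherwise; the normalised commutator $[P,Q]$ is $\sim PQ$ if $\omega=1$ and zero otherwise.

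\emph{Step 1: $\cJ$ as the zero set of a quadratic form.} Write $u=(u',u_n)$ and $v=(v',v_n)$. Reading off Def.~\ref{def: family of purity invariants} case by case gives:
\begin{itemize}
\item the first set is $u_n=v_n=0$ with $u'\cdot v'=0$;
\item the second set is $u_n=v_n=1$ with $u'\cdot v'=1$ (the two factors must commute);
\item the third and fourth sets are $(u_n,v_n)\in\{(0,1),(1,0)\}$ with $u'\cdot v'=1$.
\end{itemize}
All four cases combine into $\cJ=\{w: q(w)=0\}$ with
\begin{align*}
q(u,v):=u\cdot v+u_n+v_n \pmod 2 .
\end{align*}
Since $u\cdot v$ polarises to $\omega$ and $u_n+v_n$ is linear, $q$ is a quadratic refinement of $\omega$:
\begin{align*}
q(w+w')=q(w)+q(w')+\omega(w,w').
\end{align*}
As a sanity check, $q$ has Arf invariant $1$, and its number of zeros is $2^{n-1}(2^n-1)$, matching Lm.~\ref{lm: cardinalities}. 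For $n=1$ the only zero is $w=0$, so $\cJ=\{\one\}$.

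\emph{Step 2: inclusions.} The refinement identity immediately gives the six inclusions. For commuting $P,Q$ we have $q(PQ)=q(P)+q(Q)$, which yields $\{\cJ,\cJ\}\subseteq\cJ$, $\{\cJ,\fg\}\subseteq\fg$ and $\{\fg,\fg\}\subseteq\cJ$. For anti-commuting $P,Q$ we have $q(PQ)=q(P)+q(Q)+1$, which yields $[\cJ,\cJ]\subseteq\fg$, $[\cJ,\fg]\subseteq\cJ$ and $[\fg,\fg]\subseteq\fg$. In particular $\cJ$ is closed under anti-commutators and $\fg$ under commutators. For $n=1$, $\cJ=\{\one\}$ commutes with everything, so $[\cJ,\cJ]=[\cJ,\fg]=0$.

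\emph{Step 3: equalities for $n\geq2$.} Two of the equalities are trivial because $\one\in\cJ$: $\{\one,P\}=P$ gives $\{\cJ,\cJ\}=\cJ$ and $\{\cJ,\fg\}=\fg$. Also $\one\in\{\fg,\fg\}$ via $\{P,P\}$. For the remaining equalities, fix a target $w\neq0$ and desired values $c,d\in\zz_2$ with $w\in q^{-1}(\cdot)$ as required. I must find $w'$ with $\omega(w,w')=d$, $q(w')=c$, and $q(w+w')=q(w)+c+d$ equal to the prescribed class; the last condition is automatic from the first two. So it suffices to show that the affine hyperplane $H_d=\{w':\omega(w,w')=d\}$ meets both level sets of $q$, with $w'\ne0$ and $w'\ne w$ where that matters.

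This counting is the main obstacle. I would handle it with the standard count of zeros of a nondegenerate quadratic form restricted to a hyperplane. $q|_{w^\perp}$ has radical spanned by $w$, and for $n\geq2$ the quotient is a nondegenerate form on $2n-2\geq2$ dimensions, which takes both values many times. On the complementary coset $H_1$, $q$ takes each value exactly $2^{2n-2}$ times: translate by a fixed $w_0\in H_1$ and use that $q(w_0+x)+q(x)$ is a nonconstant affine function on $w^\perp$. This gives enough solutions to also avoid the trivial choices $w'\in\{0,w\}$. For $n=1$ these counts degenerate, consistent with the exceptional behaviour stated.
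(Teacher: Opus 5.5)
Your Steps 1 and 2 are correct, and they take a cleaner route than the paper. The paper proves the inclusions by a case analysis over pairs of the four pieces of $\cJ$ using Lm.~\ref{lm: nested commutativity}. Only later, in App.~\ref{app: no max ab subgroup}, does it read off from this lemma that the indicator of $\fg$ is a quadratic refinement of $\omega$. You write that refinement down directly, and all six inclusions become one line.

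The gap is in Step 3. You claim that $q$ takes each value exactly $2^{2n-2}$ times on $H_1=\{w':\omega(w,w')=1\}$; this is false whenever $w\in\fg$. Your justification does not establish it either: knowing that $q(w_0+x)+q(x)=q(w_0)+\omega(w_0,x)$ is a nonconstant affine function on $w^\perp$ does not by itself fix the distribution of $q(w_0+x)$.

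The correct picture comes from the involution $x\mapsto x+w$, under which $q(x+w)=q(x)+q(w)+\omega(x,w)$.
If $q(w)=0$, it flips $q$ on $H_1$, so your count $2^{2n-2}$ is right there.
If $q(w)=1$, it flips $q$ on $w^\perp$ instead. Then $|q^{-1}(0)\cap w^\perp|=2^{2n-2}$, and therefore $|q^{-1}(0)\cap H_1|=2^{n-1}(2^n-1)-2^{2n-2}=2^{n-1}(2^{n-1}-1)$ and $|q^{-1}(1)\cap H_1|=2^{n-1}(2^{n-1}+1)$. These are exactly the commutant numbers of Lm.~\ref{lm: constant commutants}.

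This is not cosmetic. The equality $[\cJ,\cJ]=\fg$ requires, for every $w\in\fg$, some $w'\in\cJ\cap H_1$, and this is precisely where the hypothesis $n\geq 2$ enters, since $2^{n-1}(2^{n-1}-1)=0$ for $n=1$. Your count would give $1$ at $n=1$, and hence $\{X,Y,Z\}\subseteq[\cJ,\cJ]$ with $\cJ=\{\one\}$, which is false. Concretely, for $w=X$ one has $H_1=\{Y,Z\}\subset\fg$. So your closing remark that the counts ``degenerate'' at $n=1$ contradicts your own formula.

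With the corrected counts your argument closes:
$[\cJ,\fg]\supseteq\cJ\setminus\{\one\}$ follows from $2^{2n-2}>0$;
$[\fg,\fg]\supseteq\fg$ follows from $2^{n-1}(2^{n-1}+1)>0$;
$[\cJ,\cJ]\supseteq\fg$ follows from $2^{n-1}(2^{n-1}-1)>0$ for $n\geq 2$;
$\{\fg,\fg\}\supseteq\cJ$ follows from $\{P,P\}\sim\one$ together with your quotient argument on $w^\perp/\langle w\rangle$. That argument is legitimate because $q(w)=0$ in that case, so $q$ descends to the quotient.

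Finally, $\one$ is never a commutator, so $[\cJ,\fg]=\cJ$ can only hold on $\cJ\setminus\{\one\}$. Your restriction to $w\neq 0$ handles this tacitly, but you should state it.
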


\begin{proof}[Proof sketch]
    Recall that $\{Q,P\}\sim QP$ if $\{Q,P\}\neq 0$, similarly $[Q,P]\sim QP$ if $[Q,P]\neq 0$. To prove the first identity $\{\cJ,\cJ\}=\cJ$, we show that a non-vanishing anti-commutator of commuting elements in the various subsets of $\cJ$ in Eq.~(\ref{eq: family of purity invariants}) is of the form in Eq.~(\ref{eq: family of purity invariants}).
    
    Consider first $0\neq\{\ta_X,\ta_Z\},\{\ta'_X,\ta'_Z\}\in\cJ$ with $\{\{\ta_X,\ta_Z\},\{\ta'_X,\ta'_Z\}\}\neq0$. Hence, $[\ta_X,\ta_Z]=0=[\ta'_X,\ta'_Z]$, and $[\{\ta_X,\ta_Z\},\{\ta'_X,\ta'_Z\}]=0$. By Lm.~\ref{lm: nested commutativity}, the latter is equivalent to $[\ta_X,\ta'_Z]=0$ if and only if $[\ta'_X,\ta_Z]=0$, which implies $[\ta_X\ta'_X,\ta_Z\ta'_Z]=0$. Thus,
    \begin{widetext}
    \begin{align*}
        \{\{\ta_X,\ta_Z\},\{\ta'_X,\ta'_Z\}\}
        \sim\{\ta_X\ta_Z,\ta'_X\ta'_Z\}
        \sim
        \ta_X\ta_Z\ta'_X\ta'_Z
        \sim
        \ta_X\ta'_X\ta_Z\ta'_Z
        \sim\{\ta_X\ta'_X,\ta_Z\ta'_Z\}\in\cJ\; ,
    \end{align*}
    \end{widetext}
    where the last step uses the definition of $\cJ_{\{\tA_X,\tA_Z\}}$ in Eq.~(\ref{eq: family of purity invariants}). The remaining cases are analysed analogously; for completeness, we list them in App.~\ref{app: algebraic closure}.
\end{proof}

Lm.~\ref{lm: algebraic closure} shows that (anti-)commutation relations are (globally) encoded in terms of membership in $\cJ,\fg$, respectively. However, as we explicate in detail in Ref.~\cite{Frembs2026}, not all subsets $\PI\subset\aP_n$ which satisfy closure relations as in Eq.~(\ref{eq: algebraic closure relations}) give rise to purity invariants with respect to $I$.

In order to prove that Def.~\ref{def: family of purity invariants} yields purity invariants, we need to identify an additional property satisfied by these sets. To this end, note first that $\cJ$ contains Abelian subgroups of dimension (up to) $n-1$, for instance, $\tA=\langle\ta_X,\tA'_Z(\ta_X)\rangle=\{\{\ta_X,\ta_Z\}\mid\ta_Z\in\tA'_Z(\ta_X)\}\subset\cJ_{\{\tA_X,\tA_Z\}}\subset\cJ$, where $\tA'_Z(\ta_X):=C(\ta_X)\cap\tA_Z$ is the largest subgroup in $\tA_Z$ that commutes with some $\ta_X\in\tA_X$. That $\tA'_Z(\ta_X)$ is indeed a subgroup of dimension $n-2$ is easily seen from the canonical form of Def.~\ref{def: family of purity invariants} given by Lm.~\ref{lm: conjugate generators}. Conversely, 

\begin{lemma}\label{lm: no max ab subgroup}
    $\cJ$ contains no maximal Abelian subgroup.
\end{lemma}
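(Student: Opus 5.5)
The plan is to translate everything into the symplectic picture of App.~\ref{app: Abelian Pauli group} and show that $\cJ$ is the zero set of an \emph{elliptic} (Arf invariant one) quadratic form on $\zz_2^{2n}$. Elliptic forms have no totally singular subspaces of dimension $n$, and this will give the claim. By Lm.~\ref{lm: conjugate generators}, Clifford conjugation preserves commutation relations and maps maximal Abelian subgroups to maximal Abelian subgroups. It therefore suffices to treat the canonical data
\[
\tA_X=\langle X_i\rangle_{i=1}^{n-1},\quad \tA_Z=\langle Z_i\rangle_{i=1}^{n-1},\quad g^Z_X=X_n,\quad g^X_Z=Z_n,
\]
where the roles of $X$ and $Z$ may be swapped harmlessly.

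First, I will write a Hermitian Pauli (up to phase) as $v=(x,z;a,b)$, where $x,z\in\zz_2^{n-1}$ record the $X$/$Z$ content on qubits $1,\dots,n-1$ and $a,b\in\zz_2$ the content on qubit $n$. The symplectic form is $\omega(v,v')=x\cdot z'+x'\cdot z+ab'+a'b$. Using $\{Q,P\}\sim QP$ when $[Q,P]=0$, and $[Q,P]\sim QP$ when $\{Q,P\}=0$, I will read off the four pieces of Eq.~(\ref{eq: family of purity invariants}).
\begin{itemize}
\item $\cJ_{\{\tA_X,\tA_Z\}}$ consists of the vectors with $(a,b)=(0,0)$ and $x\cdot z=0$.
\item $\cJ_{\{g^Z_X\tA_X,g^X_Z\tA_Z\}}$ has $(a,b)=(1,1)$ and $x\cdot z=1$, because $X_n$ and $Z_n$ already anticommute.
\item $\cJ_{[\tA_X,g^X_Z\tA_Z]}$ has $(a,b)=(0,1)$ and $x\cdot z=1$.
\item $\cJ_{[g^Z_X\tA_X,\tA_Z]}$ has $(a,b)=(1,0)$ and $x\cdot z=1$.
\end{itemize}
Hence
\[
\cJ=\{v:\ Q(v)=0\},\qquad Q(v):=x\cdot z+ab+a+b,
\]
where the identity corresponds to $v=0$. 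The case $n=1$ is just $Q(a,b)=ab+a+b$, whose only zero is $v=0$; this is consistent with $\cJ=\{\one\}$. As a sanity check, I will count zeros and recover $|\cJ|=2^{n-1}(2^n-1)$, in agreement with Lm.~\ref{lm: cardinalities}.

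Second, I will identify the type of $Q$. Its polarization $Q(v+v')+Q(v)+Q(v')$ equals $\omega(v,v')$, so $Q$ is a quadratic form associated with $\omega$. The first $n-1$ hyperbolic planes contribute Arf invariant $0$. The last plane carries the form $ab+a+b$, which takes the value $1$ on three of the four vectors and so has Arf invariant $1$. Hence $\mathrm{Arf}(Q)=1$, i.e.\ $Q$ is elliptic. Equivalently, the zero count $2^{2n-1}-2^{n-1}$ is exactly the elliptic count.

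Third, I will show that no maximal Abelian subgroup lies in $\cJ$. A maximal Abelian subgroup $A<\aP_n$ corresponds to a Lagrangian subspace $W\subset\zz_2^{2n}$, that is, $\dim W=n$ with $\omega|_W=0$. On $W$ the polarization of $Q$ vanishes, so $Q|_W$ is additive; thus $A\subset\cJ$ would make $W$ totally singular for $Q$. But an elliptic form on $\zz_2^{2n}$ has Witt index $n-1$, so every totally singular subspace has dimension at most $n-1$. I would include the short argument: if $W$ were totally singular of dimension $n$, one could choose a complementary Lagrangian on which $Q$ is additive, giving a symplectic basis in which $Q$ is hyperbolic. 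That would force $\mathrm{Arf}(Q)=0$, a contradiction. This bound is sharp, matching the $(n-1)$-dimensional Abelian subgroups inside $\cJ$ exhibited just before the lemma.

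The main obstacle is the first step: carefully checking the correspondence between (anti-)commutator membership and the condition on $x\cdot z$ in each of the four pieces, in particular the extra sign coming from $X_n$ versus $Z_n$ on the last qubit. Once $Q$ is pinned down, the rest is the standard classification of quadratic forms over $\zz_2$.
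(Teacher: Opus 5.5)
Your proof is correct and is essentially the paper's argument. The paper also treats the indicator of $\cJ$ as a quadratic refinement $\varphi$ of $\omega$ and computes its Gauss sum $\sum_P(-1)^{\varphi(P)}=2^n(-1)^{\mathrm{Arf}}=-2^n$ in the canonical basis; it then gets a contradiction because a Lagrangian inside $\cJ$, together with a transversal Lagrangian, would give a symplectic basis forcing the value $+2^n$. The only differences are that you read off $Q(v)=x\cdot z+ab+a+b$ directly from the canonical form of Def.~\ref{def: family of purity invariants}, whereas the paper derives the quadratic-refinement property from the closure relations of Lm.~\ref{lm: algebraic closure} and checks basis independence separately via Clifford generators; your explicit formula and zero count make the argument self-contained.
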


\begin{proof}
    We give the proof in App.~\ref{app: no max ab subgroup}.
\end{proof}

It follows from Lm.~\ref{lm: algebraic closure} and Lm.~\ref{lm: no max ab subgroup} that $I_\psi(\cJ)\leq2^{n-1}$ for all stabiliser states $|\psi\rangle$ (cf.~Ref.~\cite{FrembsHoehn2026b}), that is, for all pure states whose Bloch representation has support on a maximal Abelian subgroup $A$ only, with corresponding Bloch components $\alpha_{P\in A}=1$. To prove equality for all pure states, we also need the following relation between Bloch components of $n$-qubit pure states. To state it, we define by ${C(P):=\{Q\in\aP_n\mid [P,Q]=0\}}$ the \emph{commutant of $P$ (in $\aP_n$)}, and analogously by ${C_\PI(P):=C(P)\cap\PI}$ the commutant of $P$ within the subset $\PI\subset\aP_n$.

\begin{lemma}\label{lm: information in general Pauli subsets}
    Let $\PI\subset\aP_n$  and $\rho=\frac{1}{2^n}(\one+\sum_{\one\neq P\in\aP_n}\alpha_P P)\in\CP^{2^n-1}$ be an $n$-qubit pure state. Then
    \begin{align}\label{eq: information in general Pauli subsets}
        I_\rho(\PI)
        =\frac{1-2^{n-1}}{2^{n-1}} |\PI|+\frac{1}{2^{n-1}}\sum_{\one\neq P\in\aP_n} |C_\PI(P)|\alpha^2_P\; .
    \end{align}
\end{lemma}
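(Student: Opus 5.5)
The plan is to derive a pointwise identity for each single squared Bloch component $\alpha_Q^2$ of a pure state, expressing it as a signed sum of all squared Bloch components. Summing that identity over $Q\in\PI$ and counting commutants should then give Eq.~(\ref{eq: information in general Pauli subsets}).

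\emph{Step 1 (swap trick).} Write $\alpha_Q^2=\tr[\rho Q]^2=\tr[(\rho\otimes\rho)(Q\otimes Q)]$. I will use the standard expansion of the swap operator on $\cH\otimes\cH$,
\begin{align*}
\mathrm{SWAP}=\frac{1}{2^n}\sum_{P\in\aP_n}P\otimes P .
\end{align*}
Purity of $\rho=\dyad{\psi}$ gives $(\rho\otimes\rho)\,\mathrm{SWAP}=\rho\otimes\rho$. Inserting SWAP therefore yields
\begin{align*}
\alpha_Q^2=\frac{1}{2^n}\sum_{P\in\aP_n}\tr[\rho PQ]^2 .
\end{align*}

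\emph{Step 2 (phases).} Write $PQ=\omega R$ with $R\in\aP_n$ and $\omega\in K$, i.e.\ $PQ\sim R$.
\begin{itemize}
\item If $[P,Q]=0$, then $\omega=\pm1$, so $\tr[\rho PQ]^2=\alpha_R^2$.
\item If $P$ and $Q$ anti-commute, then $\omega=\pm i$, so $\tr[\rho PQ]^2=-\alpha_R^2$.
\end{itemize}
The map $P\mapsto R$ is a bijection of $\aP_n$. Moreover $R$ commutes with $Q$ if and only if $P$ does, since $R\sim PQ$. Setting $s(R,Q)=+1$ if $[R,Q]=0$ and $s(R,Q)=-1$ otherwise, and using $\alpha_\one=1$, we get
\begin{align*}
\alpha_Q^2=\frac{1}{2^n}\sum_{R\in\aP_n}s(R,Q)\,\alpha_R^2 .
\end{align*}
Keeping track of these phases is the only delicate point. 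It is also exactly where purity enters, through the swap identity of Step 1.

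\emph{Step 3 (summing over $\PI$).} Summing over $Q\in\PI$ and exchanging the sums, the coefficient of $\alpha_R^2$ is
\begin{align*}
\sum_{Q\in\PI}s(R,Q)=|C_\PI(R)|-\bigl(|\PI|-|C_\PI(R)|\bigr)=2|C_\PI(R)|-|\PI| .
\end{align*}
This gives
\begin{align*}
I_\rho(\PI)=\frac{1}{2^{n-1}}\sum_{R\in\aP_n}|C_\PI(R)|\,\alpha_R^2-\frac{|\PI|}{2^n}\sum_{R\in\aP_n}\alpha_R^2 .
\end{align*}

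\emph{Step 4 (purity normalisation).} Use $\sum_{R\in\aP_n}\alpha_R^2=2^n$ for pure states, which is the identity term plus $2^n-1$. For $R=\one$ we have $|C_\PI(\one)|=|\PI|$, so the identity term contributes $|\PI|/2^{n-1}$. Splitting it off gives
\begin{align*}
I_\rho(\PI)=\Bigl(\frac{1}{2^{n-1}}-1\Bigr)|\PI|+\frac{1}{2^{n-1}}\sum_{\one\neq R\in\aP_n}|C_\PI(R)|\,\alpha_R^2 .
\end{align*}
Since $\frac{1}{2^{n-1}}-1=\frac{1-2^{n-1}}{2^{n-1}}$, this is exactly Eq.~(\ref{eq: information in general Pauli subsets}).
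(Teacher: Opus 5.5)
Your proof is correct and follows essentially the same route as the paper. Inserting $\mathrm{SWAP}$ is equivalent to the paper's use of $\tr[\rho Q]^2=\tr[(\rho Q)^2]$ for pure states, since $\tr[(\rho\otimes\rho)\,\mathrm{SWAP}\,(Q\otimes Q)]=\tr[\rho Q\rho Q]$; the paper expands $\rho$ in the Pauli basis and uses $QPQ=\pm P$, whereas you expand $\mathrm{SWAP}$ and track the phase of $PQ$, and both give the same per-$Q$ identity $\alpha_Q^2=2^{-n}\sum_R s(R,Q)\alpha_R^2$, after which the summation over $\PI$ and the purity normalisation proceed exactly as in the paper.
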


\begin{proof}
    We give the proof in App.~\ref{app: proof Pauli identity}.
\end{proof}

Lm.~\ref{lm: information in general Pauli subsets} suggests that purity invariants, $I_\psi(\PI)=\mathrm{const}$ for all $\psi\in\CP^{2^n-1}$, arise from sets $\PI\subset\aP_n$ whose commutants have cardinality independent of $P$. This is indeed the case, and holds for the sets defined in Def.~\ref{def: family of purity invariants}.

\begin{lemma}\label{lm: constant commutants}
    The cardinalities of the commutants ${C_\cJ(P\in\cJ)}$ and $C_\fg(P\in\cJ)$ are independent of $\one\neq P\in\cJ$, while the cardinality of the commutants $C_\cJ(P\in\fg)$ and $C_\fg(P\in\fg)$ are independent of $P\in\fg$.
\end{lemma}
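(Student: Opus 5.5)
The plan is to recast $\cJ$ in symplectic coordinates over $\zz_2$. There it should appear as the zero set of a quadratic form whose polarisation is the commutation form. The required independence of the commutant cardinalities then follows from transitivity of the associated orthogonal group, and this works uniformly in $n$.

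\emph{Step 1 (canonical form).} By Lm.~\ref{lm: conjugate generators}, and since Clifford conjugation preserves commutation relations and commutant cardinalities, we may assume $\tA_X=\langle X_i\rangle_{i=1}^{n-1}$, $g^Z_X=X_n$, $\tA_Z=\langle Z_i\rangle_{i=1}^{n-1}$ and $g^X_Z=Z_n$. Identify $P\sim X^xZ^z$ with $v=(x,z)\in\zz_2^{2n}$. Commutation is then governed by the symplectic form $\omega(v,v')=x\cdot z'+x'\cdot z$. A Pauli $X^xZ^z$ commutes with itself-parts as follows: $X^x$ and $Z^z$ commute iff $x\cdot z=0$. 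Write $x=(x',x_n)$ and $z=(z',z_n)$.

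\emph{Step 2 ($\cJ$ as a quadric).} Go through the four pieces of Eq.~(\ref{eq: family of purity invariants}), using that the normalised anticommutator (commutator) is nonzero iff the two factors commute (anticommute).
\begin{itemize}
\item $\cJ_{\{\tA_X,\tA_Z\}}$: $x_n=z_n=0$ and $x'\cdot z'=0$.
\item $\cJ_{\{g\tA_X,g\tA_Z\}}$: $x_n=z_n=1$ and $x'\cdot z'+1=0$.
\item $\cJ_{[\tA_X,g\tA_Z]}$: $x_n=0$, $z_n=1$ and $x'\cdot z'=1$.
\item $\cJ_{[g\tA_X,\tA_Z]}$: $x_n=1$, $z_n=0$ and $x'\cdot z'=1$.
\end{itemize}
All four cases are captured by $\cJ=\{v\mid q(v)=0\}$ with
\begin{align*}
q(x,z)=x'\cdot z'+x_n^2+x_nz_n+z_n^2 .
\end{align*}
Its polarisation $q(v+w)-q(v)-q(w)$ equals $\omega(v,w)$, so $q$ is a nondegenerate quadratic form of minus (elliptic) type: hyperbolic on the first $n-1$ qubits and anisotropic on the last. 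As a consistency check, the number of zeros of such a form is $2^{2n-1}-2^{n-1}=2^{n-1}(2^n-1)$, matching $|\cJ|$. Note that $\one\leftrightarrow v=0$ lies in $\cJ$, and $\fg$ is the set of nonsingular vectors $q(v)=1$.

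\emph{Step 3 (transitivity).} The orthogonal group $O(q)\cong O^-(2n,2)$ preserves both $q$ and $\omega$. By Witt's extension theorem it acts transitively on the nonzero singular vectors and on the nonsingular vectors. For $v\neq 0$ we have $|C_\cJ(v)|=|\{w\mid q(w)=0,\ \omega(v,w)=0\}|$, which is manifestly $O(q)$-invariant. Hence it is constant on $\cJ\setminus\{\one\}$ and on $\fg$. Finally, $|C_\fg(P)|=|C(P)|-|C_\cJ(P)|$ with $|C(P)|=2^{2n-1}$ for $P\neq\one$, and $P\in\fg$ is never $\one$, so the statements for $C_\fg$ follow. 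For $n=1$ the claim is trivial.

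\emph{Main obstacle and explicit counts.} The main obstacle is Step 2: verifying carefully that the sign/commutation conditions of each of the four pieces match the single quadratic form, in particular the case $x_n=z_n=1$, where the last-qubit contribution flips the commutation condition. I would cross-check against the two-qubit table. If one prefers to avoid invoking Witt's theorem, the counts can be obtained directly. Use $q(w+v)=q(w)+q(v)+\omega(v,w)$ and evaluate $\sum_w(-1)^{q(w)}=-2^n$ restricted to the hyperplane $v^\perp$ via character sums. This should give $|C_\cJ(P)|=2^{2n-2}-2^{n-1}$ for $P\in\cJ\setminus\{\one\}$, and $2^{2n-2}$ for $P\in\fg$ (the latter since $q$ on $v^\perp/\langle v\rangle$ is then of the other type).
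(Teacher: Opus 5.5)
Your proof is correct, but its key step differs from the paper's. The paper never writes down $q$. It reads the closure relations of Lm.~\ref{lm: algebraic closure} as saying that $\phi=(-1)^{\varphi}$ (with $\varphi=0$ exactly on $J$) satisfies $\phi(PQ)=(-1)^{\omega(P,Q)}\phi(P)\phi(Q)$. It then reindexes $\sum_Q\phi(Q)=\sum_Q\phi(PQ)$ to conclude that $\sum_{Q\in C(P)}\phi(Q)$ equals the total sum $\sum_Q\phi(Q)$ when $P\in J$, and vanishes when $P\in L$. Independence of $P$ is immediate, and the explicit values follow from $|J|$.

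You instead pass to the canonical form and check directly that $J=q^{-1}(0)$ for an elliptic form $q$ that polarises to $\omega$. Your four case checks are right, including $x_n=z_n=1$. You then get independence from the transitivity of $O(q)$ on nonzero singular vectors and on nonsingular vectors, via Witt's extension theorem for nondefective forms in characteristic two.

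Your route buys more structure:
\begin{itemize}
\item The explicit quadric replaces the case analysis behind Lm.~\ref{lm: algebraic closure} by a one-line polarisation identity, and it gives $|J|$.
\item It yields Lm.~\ref{lm: no max ab subgroup} at once, since totally singular subspaces of an elliptic form have dimension at most $n-1$.
\item Since $O(q)<\Sp(2n,\zz_2)$, Clifford conjugations stabilising $J$ act transitively on $J\setminus\{\one\}$ and on $L$.
\end{itemize}
The cost is an appeal to Witt's theorem, and no numbers come out of it. The paper's reindexing trick is elementary, uses only the quadratic-refinement property, and produces the counts directly. Your proposed Witt-free fallback is essentially the paper's proof.

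One correction to the parenthetical in that fallback. For $q(v)=1$ the form does not descend to $v^\perp/\langle v\rangle$, because $q(w+v)=q(w)+1$ for $w\in v^\perp$. That identity is exactly what gives the count. The map $w\mapsto w+v$ is a fixed-point-free involution of $v^\perp$ exchanging singular and nonsingular vectors, so exactly $2^{2n-2}$ elements of $C(P)$ lie in $J$. Only for $q(v)=0$, $v\neq 0$, does $q$ descend, namely to an elliptic form on $v^\perp/\langle v\rangle$. That gives $2\,(2^{2n-3}-2^{n-2})=2^{2n-2}-2^{n-1}$.
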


\begin{proof}
    We give the proof in App.~\ref{app: constant commutants}.
\end{proof}

\textbf{Main result.} Combining the above lemmata shows that Def.~\ref{def: family of purity invariants} indeed defines purity invariants.

\begin{theorem}\label{thm: invariance}
    For every $n\geq 1$, subset $\cJ\subset\aP_n$ in Def.~\ref{def: family of purity invariants} (extended via Clifford conjugation in Lm.~\ref{lm: conjugate generators}) and pure state $\psi\in\CP^{2^n-1}$, it holds $I_\psi(\cJ)=2^{n-1}$.
\end{theorem}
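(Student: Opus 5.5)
Combine Lm.~\ref{lm: information in general Pauli subsets} with Lm.~\ref{lm: constant commutants}, and fix the resulting constant by evaluating on a single convenient pure state.

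First, by Lm.~\ref{lm: conjugate generators}, Clifford conjugation maps any admissible $\cJ$ to the canonical form of Def.~\ref{def: family of purity invariants}. Since $I_\psi(U\cJ U^\dagger)=I_{U^\dagger\psi}(\cJ)$ and Clifford unitaries permute pure states, it suffices to treat the canonical $\cJ$. The case $n=1$ is immediate: there $\cJ=\{\one\}$ and $I_\psi(\cJ)=1=2^0$. So assume $n\geq 2$.

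Apply Eq.~(\ref{eq: information in general Pauli subsets}) with $\PI=\cJ$ and split the sum over $\one\neq P\in\aP_n$ according to the partition $\aP_n=\cJ\cupdot\fg$. By Lm.~\ref{lm: constant commutants}, $|C_\cJ(P)|=c_\cJ$ is constant for $\one\neq P\in\cJ$, and $|C_\cJ(P)|=c_\fg$ is constant for $P\in\fg$. Hence
\begin{align*}
    I_\psi(\cJ)=\frac{1-2^{n-1}}{2^{n-1}}|\cJ|+\frac{1}{2^{n-1}}\Big(c_\cJ\big(I_\psi(\cJ)-1\big)+c_\fg\, I_\psi(\fg)\Big).
\end{align*}
Here I use $\alpha_\one=1$ with the convention of Eq.~(\ref{eq: BZ information measure}), so that $\sum_{\one\neq P\in\cJ}\alpha_P^2=I_\psi(\cJ)-1$, since $\one\in\cJ$. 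Purity gives $I_\psi(\fg)=2^n-I_\psi(\cJ)$. Substituting yields a linear equation in $I_\psi(\cJ)$ of the form $\big(2^{n-1}-c_\cJ+c_\fg\big)I_\psi(\cJ)=\mathrm{const}$, where the constant depends only on $n$, $|\cJ|$, $c_\cJ$ and $c_\fg$, not on $\psi$.

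Provided the coefficient $2^{n-1}-c_\cJ+c_\fg$ is nonzero, $I_\psi(\cJ)$ is therefore independent of $\psi$. Its value can then be read off from any single pure state. The cleanest choice is $|\psi\rangle=|0\cdots0\rangle$, whose Bloch support is exactly $A_Z$ with all $\alpha_P=1$ there. Then $I_\psi(\cJ)=|\cJ\cap A_Z|$. From the four pieces of Eq.~(\ref{eq: family of purity invariants}), the elements of $\cJ$ lying in $A_Z$ are those with trivial $X$-part, namely $\{\one,\ta_Z\}\sim\ta_Z$ for $\ta_Z\in\tA_Z$. This gives $|\cJ\cap A_Z|=|\tA_Z|=2^{n-1}$, consistent with the stabiliser bound following Lm.~\ref{lm: no max ab subgroup}.

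The main obstacle is the nondegeneracy of the coefficient, which requires the actual values of $c_\cJ$ and $c_\fg$. I would extract them from the proof of Lm.~\ref{lm: constant commutants} in App.~\ref{app: constant commutants}. Alternatively, I would compute them directly in the canonical form by counting commuting elements with a fixed convenient representative, say $Z_1\in\cJ$ and $X_n\in\fg$, using the symplectic (Weyl) description. As a consistency check, I would verify these counts against $|\cJ|=2^{n-1}(2^n-1)$ from Lm.~\ref{lm: cardinalities} and against the $n=2$ pentagon case. If the coefficient happened to vanish, I would instead apply Lm.~\ref{lm: information in general Pauli subsets} with $\PI=\fg$ and use the commutant counts $C_\fg$, which yields an independent linear relation.
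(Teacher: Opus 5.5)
Your proposal is correct and follows the paper's route. Lm.~\ref{lm: information in general Pauli subsets}, together with the constant commutant counts of Lm.~\ref{lm: constant commutants}, turns $I_\psi(\cJ)$ into the solution of a $\psi$-independent linear equation, and the value is then fixed on a stabiliser state; your direct count $\cJ\cap A_Z=\tA_Z$ for $|0\cdots0\rangle$ is a cleaner way to get equality than the paper's appeal to Lm.~\ref{lm: no max ab subgroup}, which alone gives only $\leq 2^{n-1}$.

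Your bookkeeping of the identity term is more careful than the paper's display. The nondegeneracy you flag is settled by the explicit counts in App.~\ref{app: constant commutants}, $c_\cJ=2^{n-1}(2^{n-1}-1)$ and $c_\fg=4^{n-1}$, which give $2^{n-1}-c_\cJ+c_\fg=2^n\neq0$; the equation then already yields $I_\psi(\cJ)=2^{n-1}$ on its own.
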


\begin{proof}
    It follows with Lm.~\ref{lm: information in general Pauli subsets} and Lm.~\ref{lm: constant commutants} that
    \begin{widetext}
    \begin{align*}
        I_\psi(\cJ)
        &=\frac{1-2^{n-1}}{2^{n-1}}|\cJ|+\frac{1}{2^{n-1}}\sum_{\one\neq P\in\aP_n} |C_\cJ(P)|\alpha^2_P\\
        &=\frac{1-2^{n-1}}{2^{n-1}}|\cJ|+\frac{1}{2^{n-1}}|C_\cJ(\one\neq P\in\cJ)|I_\psi(\cJ)+\frac{1}{2^{n-1}}|C_\cJ(P\in\fg)|(2^n-1-I_\psi(\cJ))\; ,
    \end{align*}
    \end{widetext}
    where we used that $\sum_{\one\neq P\in\aP_n}\alpha^2_P=2^n-1$ for the Bloch components of pure states. Consequently, $I_\psi(\cJ)$ is constant for all $\psi\in\CP^{2^n-1}$, hence, equals the value for stabiliser states, which reads $I_\psi(\cJ)=2^{n-1}$, by Lm.~\ref{lm: no max ab subgroup}.
\end{proof}

We show in Ref.~\cite{FrembsHoehn2026b} that \emph{all} maximal complementarity sets, of which the $J$-sets are a subclass, constitute purity invariants. Furthermore, in Ref.~\cite{Frembs2026} we prove that the family of purity invariants in Def.~\ref{def: family of purity invariants} exhausts all invariants in the Brukner-Zeilinger information measure of Eq.~(\ref{eq: BZ information measure}) that are closed under products of commuting Paulis - and that, without this condition, many more purity invariants arise that have no analogue to the two-qubit case (and that may not correspond to complementarity sets). Despite this proliferation, we show that the purity invariants in Def.~\ref{def: family of purity invariants} fully encode the symmetry group $\mathrm{PSU}(2^n)\rtimes\zz_2<\mathrm{SO}(\aP_n)$ of the pure-state manifold $\CP^{2^n-1}$, thereby generalizing this observation from the $n=2$ case in \cite{HoehnWever2017} to arbitrary $n$.\\

\textbf{Counting invariants.} Given a purity invariant $I_\psi(\PI)=\mathrm{const}$ for all $\psi\in\CP^{2^n-1}$ with $\PI\subset\aP_n$, any Clifford conjugated set $C\PI C^{-1}$ defines another such set. We now show that all sets in Def.~\ref{def: family of purity invariants} arise in this way.

\begin{lemma}\label{lm: Clifford covariance}
    The family of purity invariants in Def.~\ref{def: family of purity invariants} defines an orbit under conjugation by Clifford unitaries.
\end{lemma}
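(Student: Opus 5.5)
Two things have to be shown. First, the family (Def.~\ref{def: family of purity invariants} together with its Clifford conjugates) is closed under Clifford conjugation. Second, Clifford conjugation acts transitively on it: every member is conjugate to one fixed reference set, say $\cJ_0:=\cJ(\langle X_i\rangle_{i=1}^{n-1},\langle Z_i\rangle_{i=1}^{n-1},X_n,Z_n)$.

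\textbf{Closure.} This holds by construction. For $U\in\Cl_n$, conjugation $P\mapsto UPU^\dagger$ is an automorphism of $\cP_n$ and preserves (anti-)commutation relations. Hence it maps normalised (anti-)commutators to normalised (anti-)commutators of the images, up to the phases in $K$ that we identify via $\sim$. Therefore
\[
U\cJ(\tA,\tB,a,b)U^\dagger=\cJ(U\tA U^\dagger,U\tB U^\dagger,UaU^\dagger,UbU^\dagger).
\]
The conjugated data again satisfy the hypotheses of Lm.~\ref{lm: conjugate generators}. Maximality, $A\cap B=\{\one\}$, the dimensions, $\langle\tA,a\rangle=A$, and $[\tA,b]=0=[\tB,a]$ are all invariant under automorphisms.

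\textbf{Transitivity.} Let $\cJ$ be any member, given by data $(A,B,\tA,\tB,a,b)$ as in Lm.~\ref{lm: conjugate generators}. For members of the extended class this holds by definition. For the original sets of Def.~\ref{def: family of purity invariants} it holds with $A=A_X$, $B=A_Z$, which are maximal Abelian with trivial intersection. Lm.~\ref{lm: conjugate generators} gives $U\in\Cl_n$ sending the data to the standard data $(\langle Z_i\rangle_{i=1}^{n-1},Z_n,\langle X_i\rangle_{i=1}^{n-1},X_n)$. By the covariance identity above, $U\cJ U^\dagger$ is then the $\cJ$-set of the standard data. This is $\cJ_0$, possibly with the roles of the two groups swapped, which makes no difference since Def.~\ref{def: family of purity invariants} is symmetric under $X\leftrightarrow Z$ up to $\sim$. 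So every member lies in the orbit $\Cl_n\cdot\cJ_0$. Conversely, each element of that orbit is a member by the closure step, so the family is exactly one orbit.

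\textbf{Main obstacle.} The delicate step is the covariance identity. We need normalised commutators and anti-commutators to transform correctly under Clifford conjugation. These are only defined up to a phase in $K$, and they vanish exactly when the corresponding relation fails. So we must check that $U[P,Q]U^\dagger\sim[UPU^\dagger,UQU^\dagger]$, with zero going to zero, and likewise for anti-commutators. I would verify this directly from the conventions in App.~\ref{app: Abelian Pauli group}, using that conjugation is linear and multiplicative. Phases are then harmless because $I_\psi$ is invariant under re-phasing and we work with representatives in $\aP_n$.
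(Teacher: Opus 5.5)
Your proof is correct and follows essentially the same route as the paper: transitivity comes from Lm.~\ref{lm: conjugate generators}, and you make explicit the covariance $U\cJ(\tA,\tB,a,b)U^\dagger=\cJ(U\tA U^\dagger,U\tB U^\dagger,UaU^\dagger,UbU^\dagger)$, which the paper uses tacitly. The paper additionally uses Lm.~\ref{lm: algebraic closure}, which implies that $\cJ$ generates $\cP_n$ up to phases, to show that any unitary mapping $\cJ$ into $\cP_n$ must already be Clifford; the statement as phrased does not require this strengthening.
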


\begin{proof}
    We give the proof in App.~\ref{app: Clifford invariance}.
\end{proof}

We close with an explicit count of the number of different invariants in Def.~\ref{def: family of purity invariants}.

\begin{theorem}\label{thm: counting}
    There exist $2^{n-1}(2^n-1)$ distinct Clifford-conjugate subsets $\cJ$ in $\aP_n$.
\end{theorem}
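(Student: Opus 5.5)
The plan is to recast each set $\cJ$ as the zero set of a quadratic form on the symplectic $\zz_2$-vector space $\aP_n\simeq\zz_2^{2n}$ (cf.~App.~\ref{app: Abelian Pauli group}). Once that is done, counting the sets $\cJ$ becomes the classical count of quadratic forms of Arf invariant $1$ whose polarisation is the commutation form $\omega$.

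\textbf{Step 1: canonical form.} By Lm.~\ref{lm: conjugate generators} we may first pass to the canonical data $\tA_X=\langle X_i\rangle_{i<n}$, $g^Z_X=X_n$, $\tA_Z=\langle Z_i\rangle_{i<n}$, $g^X_Z=Z_n$. Write a Pauli as $(x,z)\in\zz_2^{2n}$. Reading off the four pieces of Eq.~(\ref{eq: family of purity invariants}) gives the following.
\begin{itemize}
\item The piece $\cJ_{\{\tA_X,\tA_Z\}}$ consists of the points with $x_n=z_n=0$ and $\sum_{i<n}x_iz_i=0$.
\item The piece $\cJ_{\{g^Z_X\tA_X,g^X_Z\tA_Z\}}$ consists of the points with $x_n=z_n=1$ and $\sum_{i<n}x_iz_i=0$.
\item The two commutator pieces consist of the points with $(x_n,z_n)\in\{(0,1),(1,0)\}$ and $\sum_{i<n}x_iz_i=1$.
\end{itemize}
Hence
\begin{align*}
\cJ=\{(x,z)\mid q(x,z)=0\},\qquad q(x,z)=\sum_{i<n}x_iz_i+x_n+z_n+x_nz_n .
\end{align*}
This $q$ is the orthogonal sum of $n-1$ hyperbolic planes and one anisotropic plane, so it has Arf invariant $1$. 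Its polarisation $q(u+v)+q(u)+q(v)$ is exactly the symplectic form $\omega$ encoding (anti-)commutation. As a consistency check, the zero set of an Arf-$1$ form has size $2^{n-1}(2^n-1)$, which matches $|\cJ|$ from Lm.~\ref{lm: cardinalities}.

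\textbf{Step 2: from sets to forms.} Clifford conjugation acts on $\zz_2^{2n}$ through $\Sp(2n,\zz_2)$ and, up to re-phasing, transforms $q$ into another $\omega$-compatible quadratic form with the same Arf invariant. By Lm.~\ref{lm: Clifford covariance}, the family of sets $\cJ$ is therefore exactly the family of zero sets of $\omega$-polarising forms of Arf invariant $1$. The correspondence must be a bijection, so I need to check that a form is determined by its zero set. The forms polarising to $\omega$ are exactly $q+\ell$ with $\ell$ linear. If $q$ and $q+\ell$ with $\ell\neq 0$ had the same zero set, then $\ell$ would vanish on all of $\{q=0\}$. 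But that set has $2^{n-1}(2^n-1)>2^{2n-1}$ elements for $n\geq 2$, so it is not contained in the hyperplane $\ker\ell$. The case $n=1$ is checked directly: there $\cJ=\{\one\}$ and there is exactly one such set.

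\textbf{Step 3: counting.} There are $2^{2n}$ forms $q+\ell$. Their Arf invariants split as $2^{n-1}(2^n+1)$ with value $0$ and $2^{n-1}(2^n-1)$ with value $1$. Two routes give this split:
\begin{itemize}
\item compute $\mathrm{Arf}(q+\ell)=\mathrm{Arf}(q)+q(v_\ell)$, where $\ell=\omega(v_\ell,\cdot)$, and count the $v_\ell$ with $q(v_\ell)=0$, which is again $|\{q=0\}|$;
\item use orbit–stabiliser, $|\Sp(2n,2)|/|O^-(2n,2)|=2^{n-1}(2^n-1)$.
\end{itemize}
This yields the claimed $2^{n-1}(2^n-1)$ distinct sets.

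I expect the main obstacle to be the precise identification in Steps 1–2. One part is verifying that the normalised (anti-)commutator products and the Hermitian-representative convention $\zeta P\sim P$ really reduce to the $\zz_2$ picture. The other is checking that Clifford conjugation of $\cJ$ corresponds to transporting $q$ by the symplectic image, so that it does not produce sets outside the quadratic-form family. For this I would rely on Lm.~\ref{lm: Clifford covariance} together with the fact that membership in $\cJ$ is defined solely through commutation data.
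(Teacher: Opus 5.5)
Your route is genuinely different from the paper's. The paper fixes a reference pair $A_X,A_Z$. It counts the $2^n-1$ subgroups $\tA_X<A_X$ of dimension $n-1$ and, for each, the $2^{n-1}$ such $\tA_Z<A_Z$ that avoid the unique $g^X_Z\in A_Z$ commuting with $\tA_X$. It then argues that an arbitrary $\cJ'$ is recovered from $\cJ'\cap A_X$ and $\cJ'\cap A_Z$. That last step relies on Thm.~\ref{thm: invariance} (stabiliser states force $|\cJ'\cap A|=2^{n-1}$ for every maximal Abelian $A$), on Lm.~\ref{lm: algebraic closure} and on Lm.~\ref{lm: no max ab subgroup}. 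You instead identify $\cJ$ with the zero set of a minus-type quadratic refinement of $\omega$ (this is exactly the indicator $\varphi$ of App.~\ref{app: no max ab subgroup}) and count such refinements. This bypasses purity invariance entirely and is more structural. Via $q+\omega(v,\cdot)=q(\cdot+v)+q(v)$, your first route shows that the family consists exactly of the translates $v+\cJ$ with $v\in\cJ$ (i.e.\ $\cJ$ multiplied elementwise by one of its own elements). That explains why the number of sets equals $|\cJ|$. The price is reliance on classical facts about quadratic forms over $\zz_2$, whereas the paper's count is elementary.

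A few points need repair, none of them fatal.

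First, in Step~1 the second piece has $\sum_{i<n}x_iz_i=1$, not $0$. The reason is that $g^Z_X\ta_X$ and $g^X_Z\ta_Z$ commute iff $\sum_{i<n}x_iz_i+1=0$ (cf.\ Lm.~\ref{lm: cardinalities}). Your $q$ is nevertheless correct, since $q=s+1$ at $x_n=z_n=1$.

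Second, the injectivity argument in Step~2 rests on a false inequality. In fact $2^{n-1}(2^n-1)=2^{2n-1}-2^{n-1}<2^{2n-1}$, so an Arf-$1$ zero set is smaller than a hyperplane, and no counting argument can exclude $\{q=0\}\subseteq\ker\ell$. The claim is true for a trivial reason, though: a $\zz_2$-valued function is determined by its zero set. For any $u$ with $\ell(u)=1$, exactly one of $q(u)$ and $(q+\ell)(u)$ vanishes. No separate $n=1$ case is needed.

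Third, ``exactly the family of zero sets of Arf-$1$ forms'' does not follow from Lm.~\ref{lm: Clifford covariance} alone. That lemma only identifies the family with the $\Sp(2n,\zz_2)$-orbit of $q$. You also need Arf's theorem: $\Sp(2n,\zz_2)$ acts transitively on refinements of $\omega$ with a fixed Arf invariant, because an isometry between two refinements of $\omega$ is automatically symplectic. Your first counting route depends on this. The orbit--stabiliser route does not, since it counts the orbit of $q$ directly once you know $\mathrm{Stab}_{\Sp(2n,\zz_2)}(q)=O(q)\cong O^-_{2n}(2)$.
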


\begin{proof}
    We give the proof in App.~\ref{app: counting}
\end{proof}

For $n=2$, we recover the six pentagon sets of \cite{HoehnWever2017}. Similarly, in App.~\ref{app: n=3} we exhibit the 28 $J$-sets for $n=3$.
\vspace{.1cm}

\section{Conclusion and Outlook}\label{sec: outlook}

We presented a family of quadratic informational purity invariants for $n$-qubit pure states. This generalises known invariants in the two-qubit case \cite{Hoehn2017,HoehnWever2017}, which we find to be exceptional compared to the rich algebraic structure that emerges for $n>2$ qubits. We provided an explicit representation of these invariants, and highlighted some of their properties. A more detailed study, as well as a full characterisation of all purity invariants with respect to the measure in Eq.~(\ref{eq: BZ information measure}) will appear in \cite{Frembs2026}.

We finish by mentioning a few potential applications. First, the structure of purity invariants obtains an interpretation in terms of maximal complementarity sets, which give rise to strong uncertainty relations that in turn are key in cryptographic protocols. Furthermore, the invariants $I_\psi(\mathcal{S})=const$ in the Brukner-Zeilinger measure admit the interpretation of \emph{informational complementarity equalities for $n$-qubit pure states}. We discuss this aspect in detail in Ref.~\cite{FrembsHoehn2026b}. For these reasons, we expect our family of purity invariants to find use in benchmarking tomography, quantum error correction \cite{Fujii2025}, as well as the study of uncertainty relations. We collect some first supporting evidence along those lines in Ref.~\cite{Frembs2026}.

\textbf{Acknowledgments.} MF thanks Lukas Hantzko, Paul Herringer and Robert Raussendorf for discussions, and Lukas Hantzko for proof-reading of an earlier draft. This work was made possible through the support of the ID\# 62312 grant from the John Templeton Foundation, as part of the project \href{https://www.qiss.fr/}{``The Quantum Information Structure of Spacetime'' (QISS)} and through the \href{https://withoutspacetime.org}{``WithOut SpaceTime project'' (WOST)}, led by the Center for Spacetime and the Quantum (CSTQ), and funded by Grant ID\# 63683 from the John Templeton Foundation. The opinions expressed in this work are those of the authors and do not necessarily reflect the views of the John Templeton Foundation.

\bibliography{bibliography}

\newpage
\appendix
\onecolumngrid

\section{Symplectic structure of Pauli group and proof of Lm.~\ref{lm: conjugate generators}}\label{app: Abelian Pauli group}

Let $\cP_n=\langle\one,X,Y,Z\rangle^{\otimes n}$ denote the $n$-qubit Pauli group. Let $K=\{\pm i,\pm 1\}=Z(\cP_n)$ denote its center, and let $\cP_n/K$ be the \emph{projective Pauli group}. Note that $\cP_n/K\cong\zz^{2n}_2$. In order to keep track of (anti-)commutation relations in $\cP_n/K$, we say that $\tP,\tQ\in\cP_n/K$ \emph{(anti-)commute} whenever any chosen representatives $P,Q\in\cP_n$ (anti)commute. We may thus define the (anti-)commutator in $\cP_n/K$ by writing $[\tP,\tQ]=0$ and $\{\tP,\tQ\}=\widetilde{PQ}$ whenever $[P,Q]=0$, and $[\tP,\tQ]=\widetilde{PQ}$ and $\{\tP,\tQ\}=0$ whenever $[P,Q]\neq 0$; we will write $(\cP_n/K,[\cdot,\cdot])$ for the projective Pauli group together with the commutator map on it. A subgroup $A<\cP_n/K$ is called \emph{Abelian} if its elements pairwise commute. By explicitly identifying (the elements of) $\cP_n/K$ with Hermitian representatives - e.g.\ with $\aP_n:=\{\one,X,Y,Z\}^{\otimes n}$ - in the main text, we will (for ease of readability) simply write $\zeta P\sim P$ for the (set) identification of $\cP_n/K$ with $\aP_n$, specifically when taking (anti-)commutators.  We will also write $A<\aP_n$ for Abelian subgroups $A\leq\cP_n/K$, as well as $C(\tQ)=C_{\aP_n}(\tQ):=\{\tP\in\aP_n\mid[\tQ,\tP]=0\}$ for the \emph{commutant of $\tQ$ in $\aP_n$}, extended to subsets in the obvious manner.

We compare this with the phase convention for Hermitian Pauli operators in the Weyl representation \cite{deBeaudrap2013}, 
\begin{align}\label{eq: Weyl operators}
    V=\zz^n_2\times\zz^n_2\ni v=(x,z)\mapsto W(v):=i^{x\cdot z}\otimes_{k=1}^n X^{x_k}Z^{z_k}\in\cP_n\; ,
\end{align}
where $x\cdot z=\sum_{k=1}^nx_kz_k$. The commutator in $\cP_n$ is determined by the symplectic form $\omega(v,w)=v^T\sigma w\mod 2$ for $\sigma=\begin{pmatrix} 0 & \one \\-\one & 0 \end{pmatrix}$, namely $W(v)W(w)=(-1)^{\omega(v,w)}W(w)W(v)$ (and vice versa). In particular, $[W(v),W(w)]=0$ if and only if $\omega(v,w)=0$ for all $v,w\in V=\zz^{2n}_2$. It follows that $(\cP_n/K,[\cdot,\cdot])\cong(V,\omega)$ for $V=\zz^{2n}_2$, and that Eq.~(\ref{eq: Weyl operators}) defines a Hermitian section $V\ra\pm\aP_n$ (of the central extension $1\ra\zz_4\ra\cP_n\ra V\ra 1$). Our results below apply to any choice of Hermitian section, that is, independently of any phase convention $P\ra (-1)^{v(P)}P$ for $v:\aP_n\ra\zz_2$. To emphasise this point, we will write Hermitian Pauli representatives (up to phase) using the phase-free Hermitian Pauli strings $\aP_n$.

Under this convention, the commutativity structure on elements $\aP_n$ is given as follows.

\begin{lemma}\label{lm: nested commutativity}
    Let $P_i,Q_j\in\aP_n$ with $P_i\sim W(v_i),Q_j\sim W(w_j)$ for $v_i,w_j\in\zz^{2n}_2$ and $i\in[m]$, $j\in[n]$, then
    \begin{align}
        [P_1\cdots P_m,Q_1\cdots Q_n]=0\ \Longleftrightarrow\ 
        \sum_{i,j=1}^{m,n}\omega(v_i,w_j)=0\; .
    \end{align}
\end{lemma}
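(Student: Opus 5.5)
The plan is to reduce the statement to the bilinearity of the symplectic form $\omega$ on $V=\zz^{2n}_2$, using the Weyl parametrisation of Eq.~(\ref{eq: Weyl operators}). (Below we rename the number of factors in the second product to $m'$, to avoid a clash with the number of qubits $n$.)

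\emph{Step 1: pass to $V$.} Phases in $K$ do not affect whether two operators commute. By the Weyl representation, $P_i\sim W(v_i)$ means $P_i=\zeta_i W(v_i)$ for some $\zeta_i\in K$. The relation $W(v)W(w)=i^{\beta(v,w)}W(v+w)$ for some integer $\beta$ then gives
\begin{align*}
P_1\cdots P_m=\zeta\, W(v), \qquad v:=\sum_{i=1}^m v_i,
\end{align*}
and likewise $Q_1\cdots Q_{m'}=\zeta' W(w)$ with $w:=\sum_{j=1}^{m'} w_j$ and $\zeta,\zeta'\in K$. This holds by induction on the number of factors, since $\cP_n\to V$ is a group homomorphism with kernel $K$.

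\emph{Step 2: apply the commutation criterion.} From $W(v)W(w)=(-1)^{\omega(v,w)}W(w)W(v)$, and since the central phases $\zeta,\zeta'$ cancel in the commutator, we get
\begin{align*}
[P_1\cdots P_m,Q_1\cdots Q_{m'}]=0 \iff \omega(v,w)=0 \pmod 2.
\end{align*}

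\emph{Step 3: expand by bilinearity.} Because $\omega(v,w)=v^T\sigma w \bmod 2$ is bilinear over $\zz_2$,
\begin{align*}
\omega\Big(\sum_i v_i,\sum_j w_j\Big)=\sum_{i,j}\omega(v_i,w_j) \pmod 2,
\end{align*}
which is exactly the claimed equivalence.

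An equivalent, more elementary route is to commute each $Q_j$ past each $P_i$ one at a time. Each swap contributes a sign $(-1)^{\omega(v_i,w_j)}$, so the total sign is the product of these, namely $(-1)^{\sum_{i,j}\omega(v_i,w_j)}$.

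The main obstacle is only bookkeeping. One must check that the projective identification $\sim$ is compatible with products, i.e.\ that the product of the Hermitian representatives differs from $W(\sum_i v_i)$ only by an element of $K$. This follows from $\cP_n/K\cong V$ as groups. One must also state clearly that the sum in the lemma is understood mod $2$.
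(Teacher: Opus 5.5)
Your proof is correct and follows essentially the paper's route. The paper likewise uses $W(v)W(w)\sim W(v+w)$ and the linearity of $\omega$, though it phrases the argument as an induction on the number of factors rather than first collapsing both products and then expanding by bilinearity; your renaming of the second index bound (to avoid the clash with the qubit number $n$) and your remark that the sum is taken mod $2$ are appropriate clarifications.
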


\begin{proof}
    Since two $n$-qubit Pauli operators either commute or anti-commute, this follows by induction from
    \begin{equation*}
        [W(v_1)W(v_2),W(w)]=0\ \Longleftrightarrow\
        0
        =\omega(v_1+v_2,w)
        =\omega(v_1,w)+\omega(v_2,w)\; ,
    \end{equation*}
    where we use that $W(v)W(w)\sim W(v+w)$.
\end{proof}

In particular, note that every element $\one\neq P\in\aP_n$ (anti-)commutes with half of $\aP_n$, that is, $|C(P)|=\frac{1}{2}4^n$.\\

\textbf{Symplectic structure and Clifford group.} The \emph{$n$-qubit Clifford group $\Cl_n$} is the normaliser of the $n$-qubit Pauli group $\cP_n$, that is, $\Cl_n:=\langle U\in\cU(d)\mid UPU^\dagger\in\cP_n\ \forall P\in\cP_n\rangle$. Clearly, $\Cl_n$ contains the Pauli group $\cP_n$, and is otherwise generated, first, by the single-qubit Clifford operators (for every one of the $n$ qubits),
\begin{align}\label{eq: conjugating elements}
    S&=\begin{pmatrix}
        1 & 0 \\ 0 & i
    \end{pmatrix} & 
    H&=\frac{1}{\sqrt{2}} \begin{pmatrix}
        1 & 1 \\ 1 & -1
    \end{pmatrix}\; ,
\end{align}
which, in quantum computing parlance, are known as the phase and Hadamard gate, respectively; and, second by CNOT gates $\Lambda\!\mathrm{X}_{ij}$ between any pair of qubits $1\leq i<j\leq n$. Their action by conjugation on Pauli operators is given by
\begin{equation}\label{eq: CNOT}
\begin{aligned}
    Z_i\otimes\one_j&\stackrel{\Lambda\!\mathrm{X}_{ij}}{\longmapsto} Z_i\otimes\one_j &\hspace{3.5cm}
    X_i\otimes\one_j&\stackrel{\Lambda\!\mathrm{X}_{ij}}{\longmapsto} X_i\otimes X_j \\
    \one_i\otimes Z_j&\stackrel{\Lambda\!\mathrm{X}_{ij}}{\longmapsto} Z_i\otimes Z_j &\hspace{3.5cm}
    \one\otimes X_j&\stackrel{\Lambda\!\mathrm{X}_{ij}}{\longmapsto} \one\otimes X_j
\end{aligned}\; .
\end{equation}
The $n$-qubit Clifford group $\Cl_n$ contains the symmetric group $S_n$, generated by SWAP gates $V_{ij}$ for $1\leq i<j\leq n$, which decompose into CNOT gates as $V_{ij}=\Lambda\!\mathrm{X}_{ij}\Lambda\!\mathrm{X}_{ji}\Lambda\!\mathrm{X}_{ij}$, and whose action on Pauli operators is, of course,
\begin{align}\label{eq: SWAP}
    Z_i\otimes\one_j&\stackrel{V_{ij}}{\longmapsto}\one_i\otimes Z_j &
    X_i\otimes\one_j&\stackrel{V_{ij}}{\longmapsto}\one_i\otimes X_j\; .
\end{align}

We recall the following standard result about the symplectic structure of the $n$-qubit Pauli group \cite{NebeRainsSloane2001,DehaeneDeMoor2003,deBeaudrap2013}.

\begin{theorem}\label{thm: Pauli automorphism group}
    $\Sp(2n,\zz_2)=\Cl_n/(U(1)\cP_n)$.
\end{theorem}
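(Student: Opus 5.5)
The plan is to prove the identification in two halves. First I construct a surjective homomorphism $\Cl_n\to\Sp(2n,\zz_2)$. Then I show its kernel is exactly $U(1)\cP_n$.

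\emph{The map.} For $U\in\Cl_n$, conjugation $P\mapsto UPU^\dagger$ is an automorphism of $\cP_n$. It fixes the center $K$ pointwise, since $U$ commutes with scalars, so it descends to an automorphism $\phi_U$ of $\cP_n/K\cong V=\zz_2^{2n}$. Being a group automorphism of an elementary abelian $2$-group, $\phi_U$ is $\zz_2$-linear. It preserves $\omega$ because conjugation preserves commutation and anti-commutation: $W(v)W(w)=(-1)^{\omega(v,w)}W(w)W(v)$ is mapped to the same relation for the images. Hence $\phi_U\in\Sp(2n,\zz_2)$, and $U\mapsto\phi_U$ is a homomorphism.

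\emph{Kernel.} Clearly $U(1)\cP_n$ lies in the kernel, since Paulis conjugate Paulis to $\pm$ themselves. Conversely, suppose $\phi_U=\mathrm{id}$. Then $UW(v)U^\dagger=\chi(v)W(v)$ with $\chi(v)\in K$. Hermiticity of $W(v)$ forces $\chi(v)=\pm1$, and multiplicativity makes $\chi$ a character $V\to\zz_2$. Every such character has the form $\chi(v)=(-1)^{\omega(u,v)}$ for some $u$, by nondegeneracy of $\omega$. Then $W(u)^\dagger U$ commutes with every $W(v)$. Since $\aP_n$ spans $M_{2^n}(\C)$, this means $W(u)^\dagger U$ is scalar, i.e.\ $U\in U(1)\cP_n$.

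\emph{Surjectivity.} This is the main step. I would use the given generators. $H$ and $S$ on qubit $i$ act on $(x_i,z_i)$ as the transvections $(x,z)\mapsto(z,x)$ and $(x,z)\mapsto(x,x+z)$, and CNOT acts by Eq.~(\ref{eq: CNOT}). I would then invoke the standard fact that $\Sp(2n,\zz_2)$ is generated by symplectic transvections $T_u:v\mapsto v+\omega(u,v)u$. Each $T_u$ is realised by the Clifford $\exp(i\frac{\pi}{4}W(u))=\frac{1}{\sqrt2}(\one+iW(u))$. This unitary conjugates $W(v)$ to $W(v)$ if $\omega(u,v)=0$, and to $\pm iW(u)W(v)\sim W(u+v)$ otherwise, so $\phi=T_u$.

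Alternatively, to avoid quoting generation by transvections, I would argue by counting. Show $|\Cl_n/U(1)\cP_n|\geq|\Sp(2n,\zz_2)|=2^{n^2}\prod_{k=1}^n(4^k-1)$. To do this, realise any symplectic basis image inductively: first map $X_1,Z_1$ to an arbitrary anti-commuting pair, then recurse on the symplectic complement. This is a Gram--Schmidt-type argument using $H,S$ and CNOTs, and the induction step is where the care lies.

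Combining the homomorphism, the kernel computation and surjectivity, the first isomorphism theorem gives $\Cl_n/(U(1)\cP_n)\cong\Sp(2n,\zz_2)$.
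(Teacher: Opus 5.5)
The paper gives no proof of this statement. It is recalled as a standard fact with pointers to the literature, and the only direction used afterwards is surjectivity, in the proof of Lm.~\ref{lm: conjugate pairs}, phrased there as ``$\Sp(2n,\zz_2)<\Cl_n$''. Your proposal is a correct and essentially self-contained proof of the full identification, reading ``$=$'' as the isomorphism induced by the conjugation action.

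\begin{itemize}
\item The homomorphism $U\mapsto\phi_U$ and its $\omega$-invariance are fine.
\item The kernel step is sound. Hermiticity gives $\chi(v)=\pm1$. Multiplicativity follows by conjugating $W(v)W(w)=c\,W(v+w)$ with $c\in K$. Nondegeneracy of $\omega$ gives $\chi=(-1)^{\omega(u,\cdot)}$. Since $\aP_n$ spans $M_{2^n}(\C)$, the unitary $W(u)^\dagger U$ is a scalar.
\item The surjectivity step via the explicit lifts $\tfrac{1}{\sqrt2}(\one+iW(u))$ of the transvections $T_u$ is correct. Its only external input is the classical theorem that symplectic transvections generate $\Sp(2n,\mathbb{F}_2)$.
\end{itemize}

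Compared with the paper's bare citation, your route buys an explicit Clifford lift of every symplectic map. That is precisely the content Lm.~\ref{lm: conjugate pairs} relies on. The trade-off is that it replaces a citation to the quantum-information literature with a citation to a classical group-theoretic fact.

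Two minor remarks:
\begin{itemize}
\item Your opening mention of ``the given generators'' $H$, $S$, CNOT is a detour that your argument never uses. Showing that these gates, together with $\cP_n$ and phases, generate all of $\Cl_n$ would be a separate and stronger claim. It follows from your isomorphism only once one knows that their images generate $\Sp(2n,\zz_2)$.
\item The ``counting'' alternative is redundant as framed. Realising an arbitrary image of the standard symplectic basis already is surjectivity, so no cardinality comparison is needed. As you note yourself, that inductive step is only sketched.
\end{itemize}
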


Using this fact, the next lemmata provide a canonical representation for the subgroups in Def.~\ref{def: family of purity invariants}.

\begin{lemma}\label{lm: conjugate pairs}
    Let $A,B<\aP_n$ be two maximal Abelian subgroups such that $A\cap B=\{\one\}$. Then there exists a Clifford unitary $U\in\Cl_n$ such that
    \begin{align*}
        UAU^\dagger&=A_Z=\langle Z_i\rangle_{i=1}^n &
        UBU^\dagger&=A_X=\langle X_i\rangle_{i=1}^n\; .
    \end{align*}
\end{lemma}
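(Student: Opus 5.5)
The strategy is to translate the statement into symplectic linear algebra over $\zz_2$ and then lift back to Clifford unitaries using Thm.~\ref{thm: Pauli automorphism group}. Under the identification $(\cP_n/K,[\cdot,\cdot])\cong(V,\omega)$ with $V=\zz_2^{2n}$, Lm.~\ref{lm: nested commutativity} turns a maximal Abelian subgroup $A<\aP_n$ into a Lagrangian subspace $L_A\subset V$: it is isotropic, and maximality forces $\dim L_A=n$. The hypothesis $A\cap B=\{\one\}$ says $L_A\cap L_B=0$, hence $V=L_A\oplus L_B$ by dimension count. The subgroups $A_Z$ and $A_X$ correspond to the standard transversal Lagrangians $L_Z=\{(0,z)\}$ and $L_X=\{(x,0)\}$. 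It therefore suffices to find $M\in\Sp(2n,\zz_2)$ with $ML_A=L_Z$ and $ML_B=L_X$. Thm.~\ref{thm: Pauli automorphism group} then supplies a Clifford $U$ whose conjugation action on $\cP_n/K$ is $M$. Phases are irrelevant, since $A$ and $B$ are considered up to $K$, i.e.\ as subsets of $\aP_n$.

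To build $M$, choose a basis $a_1,\dots,a_n$ of $L_A$. The form $\omega$ restricted to $L_A\times L_B$ is a nondegenerate pairing. Indeed, if $b\in L_B$ satisfied $\omega(a,b)=0$ for all $a\in L_A$, then $b$ would also pair trivially with $L_B$, which is isotropic, and hence with all of $V$. Nondegeneracy of $\omega$ then gives $b=0$. So $L_B$ is identified with the dual of $L_A$, and I take $b_1,\dots,b_n\in L_B$ to be the dual basis, $\omega(a_i,b_j)=\delta_{ij}$. Together with $\omega(a_i,a_j)=0=\omega(b_i,b_j)$, the family $\{a_i,b_i\}$ is a symplectic basis of $V$.

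The standard basis $\{(0,e_i),(e_i,0)\}$ also satisfies $\omega((0,e_i),(e_j,0))=\delta_{ij}$; over $\zz_2$ the sign from $\sigma$ is immaterial. I then define $M$ by $a_i\mapsto(0,e_i)$ and $b_i\mapsto(e_i,0)$. Since $M$ sends a symplectic basis to a symplectic basis, it preserves $\omega$, so $M\in\Sp(2n,\zz_2)$. By construction $ML_A=L_Z$ and $ML_B=L_X$, and the corresponding Clifford $U$ gives $UAU^\dagger=A_Z$ and $UBU^\dagger=A_X$ as sets of Hermitian representatives.

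The step needing most care is the lift. One has to check that Thm.~\ref{thm: Pauli automorphism group} really means every symplectic map is induced by some Clifford conjugation, which is surjectivity of $\Cl_n\to\Sp(2n,\zz_2)$. One also has to check that conjugation maps Hermitian Paulis to $\pm$ Hermitian Paulis, so the identification $\zeta P\sim P$ is respected.

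If one prefers to avoid citing surjectivity, there is a constructive alternative. Realise $M$ explicitly using CNOTs: by Eq.~(\ref{eq: CNOT}), these act as $\mathrm{GL}(n,\zz_2)$ on $L_X$ with the inverse-transpose action on $L_Z$. Add to these Hadamards and phase gates, which generate the remaining block transvections. This route proceeds by Gaussian elimination, first mapping $L_B$ to $L_X$, then using the stabiliser of $L_X$ to move the complement $L_A$ onto $L_Z$. The main obstacle here is the bookkeeping of that second step. There, symmetric matrices implemented by $S$ and $\mathrm{CZ}$-type gates shear any Lagrangian complement of $L_X$ onto $L_Z$.
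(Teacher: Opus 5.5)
Your proof is correct and follows the same route as the paper: identify $A,B$ with transversal Lagrangians $L_A,L_B$, build a symplectic basis adapted to them with $\omega(a_i,b_j)=\delta_{ij}$, send it to the standard basis, and lift to a Clifford unitary via the surjectivity in Thm.~\ref{thm: Pauli automorphism group}; your nondegeneracy argument for the pairing $L_A\times L_B$ justifies the dual-basis step that the paper only asserts. There is one small slip in your optional constructive alternative: $S$ and $\mathrm{CZ}$ are diagonal, so they fix $L_Z$ pointwise and shear Lagrangian complements of $L_Z$ (not of $L_X$) onto $L_X$, so you should either map $L_A\to L_Z$ first and then shear $L_B$ onto $L_X$, or conjugate those gates by Hadamards, though this does not affect your main argument.
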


\begin{proof}
    $A\cap B=\{\one\}$, implies that we can find generating sets $\cG=\{W(v_i)\}_{i=1}^n$ for $A=\langle\cG\rangle$ and $\cH=\{W(w_i)\}_{i=1}^n$ for $B=\langle\cH\rangle$ such that $\omega(v_i,w_j)=\delta_{ij}$, that is, such that $(v_1,\cdots,v_n,w_1,\cdots,w_n)$ is a canonical basis of the symplectic vector space $(V=\zz^{2n}_2,\omega)$. Indeed, (after projectivisation) $A,B$ correspond to Lagrangian subspaces $L_A,L_B$ in the symplectic vector space $(V=\zz^{2n}_2,\omega)$ with $L_A\cap L_B=0$ and thus $L_A\oplus L_B=V$. (Recall that an isotropic subspace is one on which the symplectic form, here induced by the commutator, vanishes; Lagrangian subspaces are maximal isotropic subspaces, and maximal here follows since $\aP_n$ contains no Abelian subgroups of dimension greater than $n$.) Hence, there exists a symplectic transformation $S$, which results in a change of basis $Sv_i=z_i$ and $Sw_i=x_i$, where $z_i=(\underbrace{0,\cdots,0}_{i-1},1,\underbrace{0,\cdots,0}_{2n-i-1})$ and $x_i=(\underbrace{0,\cdots,0}_{n+i-1},1,\underbrace{0,\cdots,0}_{n-i-1})$. The result thus follows since $\Sp(2n,\zz_2)<\Cl_n$ by Thm.~\ref{thm: Pauli automorphism group}.
\end{proof}

What is more, we have the following refinement of Lm.~\ref{lm: conjugate pairs} to the family of purity invariants in Def.~\ref{def: family of purity invariants}.

\setcounter{lemma}{1}

\begin{lemma}
    Let $A,B<\aP_n$ be two maximal Abelian subgroups such that $A\cap B=\{\one\}$. Moreover, let $\tA<A$, $\tB<B$ be Abelian subgroups of dimension $n-1$ and let $a\in A$, $b\in B$ satisfy the conditions in Def.~\ref{def: family of purity invariants}, that is, $\langle\tA,a\rangle=A$, $\langle\tB,b\rangle=B$ and $[\tA,b]=0=[\tB,a]$. Then there exists a Clifford unitary $U\in\Cl_n$ such that
    \begin{align*}
        U\tA U^\dagger&=\tA_Z=\langle Z_i\rangle_{i=1}^{n-1} &
        UaU^\dagger&=Z_n \\ 
        U\tB U^\dagger&=\tA_X=\langle X_i\rangle_{i=1}^{n-1} &
        UbU^\dagger&=X_n\; .
    \end{align*}
\end{lemma}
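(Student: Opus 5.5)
We will construct a symplectic basis of $(V=\zz_2^{2n},\omega)$ adapted to the data $(\tA,a,\tB,b)$, and then invoke Thm.~\ref{thm: Pauli automorphism group} exactly as in the proof of Lm.~\ref{lm: conjugate pairs}. Under projectivisation, $A,B$ become complementary Lagrangian subspaces $L_A\oplus L_B=V$. The subgroups $\tA,\tB$ become hyperplanes $\tilde L_A\subset L_A$, $\tilde L_B\subset L_B$, and $a,b$ become vectors $v_n\in L_A\setminus\tilde L_A$, $w_n\in L_B\setminus\tilde L_B$. The hypotheses read $\omega(\tilde L_A,w_n)=0=\omega(\tilde L_B,v_n)$.

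\textbf{Step 1.} We first show $\omega(v_n,w_n)=1$. Since $L_A\cap L_B=0$ and both are Lagrangian, $\omega$ induces a nondegenerate pairing $L_A\times L_B\to\zz_2$. If $\omega(v_n,w_n)=0$, then $w_n$ would pair trivially with all of $L_A=\tilde L_A+\langle v_n\rangle$. Since $L_A$ is Lagrangian, this would force $w_n\in L_A^\perp=L_A$, so $w_n\in L_A\cap L_B=0$, contradicting $w_n\notin\tilde L_B$. This is exactly the parenthetical remark ``and thus $[g^Z_X,g^X_Z]\neq0$'' in Def.~\ref{def: family of purity invariants}.

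\textbf{Step 2.} We next build dual bases. The restricted pairing $\tilde L_A\times\tilde L_B\to\zz_2$ is nondegenerate. To see this, suppose $u\in\tilde L_A$ pairs trivially with $\tilde L_B$. It also pairs trivially with $w_n$ by hypothesis, so it pairs trivially with all of $L_B$, hence $u=0$ by Step 1's argument. The same holds symmetrically. Choose any basis $v_1,\dots,v_{n-1}$ of $\tilde L_A$ and take the dual basis $w_1,\dots,w_{n-1}$ of $\tilde L_B$ with $\omega(v_i,w_j)=\delta_{ij}$. Together with $\omega(v_i,w_n)=0=\omega(v_n,w_j)$ for $i,j<n$ and $\omega(v_n,w_n)=1$, this gives a canonical symplectic basis $(v_1,\dots,v_n,w_1,\dots,w_n)$. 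Isotropy of $L_A$ and $L_B$ supplies the remaining relations $\omega(v_i,v_j)=0=\omega(w_i,w_j)$.

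\textbf{Step 3.} Define the linear map $S$ by $Sv_i=z_i$ and $Sw_i=x_i$. It sends a symplectic basis to a symplectic basis, so $S\in\Sp(2n,\zz_2)$. By Thm.~\ref{thm: Pauli automorphism group}, $S$ lifts to some $U\in\Cl_n$. Projectively, $U$ maps $\tA$ to $\langle Z_i\rangle_{i=1}^{n-1}$, $a$ to $Z_n$, $\tB$ to $\langle X_i\rangle_{i=1}^{n-1}$ and $b$ to $X_n$. Phases are irrelevant because we work up to $\sim$ in $\aP_n$.

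The only substantive point is Step 2, namely nondegeneracy of the restricted pairing so that a dual basis of $\tilde L_B$ exists. The compatibility conditions $[\tA,b]=0=[\tB,a]$ are exactly what make this work.
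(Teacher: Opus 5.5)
Your proof is correct, but it is organised differently from the paper's.

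\textbf{The paper's route.} The paper argues in two stages. It first invokes Lm.~\ref{lm: conjugate pairs} to reduce to $A=A_Z$ and $B=A_X$. It then uses only CNOT/SWAP circuits, which act on $A_Z\cong\zz_2^n$ through $\mathrm{GL}(n,\zz_2)$ and contragrediently on $A_X$, so both are preserved setwise. With these it sends a basis of $\tA$ together with $a$ to $Z_1,\dots,Z_n$. The remaining data are then forced by the commutation relations alone. Since $\tB\subset A_X$ commutes with $Z_n$, it lies in $\langle X_i\rangle_{i=1}^{n-1}$, and equality follows by dimension. Since $b\in A_X$ commutes with $Z_1,\dots,Z_{n-1}$ but not with $Z_n$, it must be $X_n$.

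\textbf{Your route.} You instead build one adapted symplectic basis and lift it once via Thm.~\ref{thm: Pauli automorphism group}. The cost is that you must pick $w_1,\dots,w_{n-1}$ dual to $v_1,\dots,v_{n-1}$, which is why you need nondegeneracy of the restricted pairing $\tilde L_A\times\tilde L_B\to\zz_2$. The paper never needs this, because its target for $\tB$ is a subgroup rather than a list of generators, so the dual basis comes for free from the contragredient CNOT action.

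\textbf{What each buys.} Your argument is uniform and self-contained, and it shows exactly where $[\tA,b]=0=[\tB,a]$ enter. Your Step 1 also proves $\omega(a,b)=1$. The paper only asserts this parenthetically in Def.~\ref{def: family of purity invariants} and then uses it in its last step, as $[UbU^\dagger,Z_n]\neq0$, without proof. The paper's route is more constructive in its second stage: once the Lagrangian pair is fixed, CNOT-type gates suffice. It does, however, leave the existence of the required $\mathrm{GL}(n,\zz_2)$ circuit as ``not hard to see.''
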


\begin{proof}
    By Lm.~\ref{lm: conjugate pairs}, we may assume that $A=A_Z=\langle Z_i\rangle_{i=1}^n$ and $B=A_X=\langle X_i\rangle_{i=1}^n$. By applying a combination of CNOT (incl SWAP) gates, whose action on Paulis is given by Eq.~(\ref{eq: CNOT}) and Eq.~(\ref{eq: SWAP}), it is not hard to see that we can map $a\mapsto Z_n$, $\tA\mapsto\tA_Z=\langle Z_1,\cdots,Z_{n-1}\rangle$, while keeping $A_X$ fixed. Moreover, since unitary conjugation preserves commutation relations, evaluating the commutation relations satisfied by $\tA,\tB$ and $a,b$ (those in Def.~\ref{def: family of purity invariants}) implies, first, $[Z_n,U\tB U^\dagger]=0$, hence, $\tA_X=\langle X_1,\cdots,X_{n-1}\rangle$, and second, $[UbU,Z_n]\neq 0$ and $[\tA_Z,UbU]=0$, and together $UbU=X_n$.
\end{proof}

\setcounter{lemma}{8}

\begin{lemma}\label{lm: cardinalities}
    The cardinalities of the disjoint subsets appearing in Eq.~(\ref{eq: family of purity invariants}) of Def.~\ref{def: family of purity invariants} are
    \begin{equation*}
    \begin{aligned}
        |\cJ_{\{\tA_X,\tA_Z\}}|
        &=(2^{n-1}+1)2^{n-2}\; , \\
        |\cJ_{\{g^Z_X\tA_X,g^X_Z\tA_Z\}}|
        &=|\cJ_{[\tA_X,g^X_Z\tA_Z]}|=|\cJ_{[g^Z_X\tA_X,\tA_Z]}|
        =(2^{n-1}-1)2^{n-2}\;.
    \end{aligned}
    \end{equation*}
    Consequently, $|\cJ|=(2^n+1)(2^{n-1}-1)+1=2^{n-1}(2^n-1)$.
\end{lemma}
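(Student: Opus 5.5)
The plan is to reduce everything to the canonical form of Lm.~\ref{lm: conjugate generators} and then count pairs of binary vectors by their symplectic pairing. Clifford conjugation is a bijection on $\aP_n$ that preserves products and (anti-)commutation. It therefore maps each of the four subsets in Eq.~(\ref{eq: family of purity invariants}) bijectively onto the corresponding subset for the conjugated data. So I may assume $\tA_X=\langle X_i\rangle_{i=1}^{n-1}$, $g^Z_X=X_n$, $\tA_Z=\langle Z_i\rangle_{i=1}^{n-1}$ and $g^X_Z=Z_n$.

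In the Weyl picture, set $m=n-1$ and write $\ta_X\sim W(x,0)$, $\ta_Z\sim W(0,z)$ with $x,z\in\zz_2^m$, padded by a zero in the $n$-th slot. Then $\omega$ between $\ta_X$ and $\ta_Z$ is $x\cdot z$. Between $X_n\ta_X$ and $Z_n\ta_Z$ it is $x\cdot z+1$, and the same value $x\cdot z+1$ holds for the pairs $(\ta_X,Z_n\ta_Z)$ and $(X_n\ta_X,\ta_Z)$. Recall that $\{P,Q\}$ is nonzero exactly when $\omega=0$, and $[P,Q]$ is nonzero exactly when $\omega=1$. Hence:
\begin{itemize}
\item $\cJ_{\{\tA_X,\tA_Z\}}$ is the image of $\{(x,z)\mid x\cdot z=0\}$;
\item the other three subsets are each the image of $\{(x,z)\mid x\cdot z=1\}$.
\end{itemize}
Each image is taken under $(x,z)\mapsto W(x,z)$ with the appropriate $n$-th bits appended: $(0,0)$, $(1,1)$, $(0,1)$ and $(1,0)$ in the four cases respectively. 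These maps are injective, since $W$ is a bijection $V\to\aP_n$ up to phase. This also re-proves disjointness, because the four families are distinguished by the $n$-th $X$- and $Z$-bits.

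It then remains to count solutions of $x\cdot z=0$ in $\zz_2^m\times\zz_2^m$. If $x=0$, all $2^m$ choices of $z$ work. For each of the $2^m-1$ nonzero $x$, the condition is a nontrivial linear equation in $z$, with $2^{m-1}$ solutions. This gives $2^m+(2^m-1)2^{m-1}=2^{m-1}(2^m+1)=2^{n-2}(2^{n-1}+1)$. The complementary count of solutions of $x\cdot z=1$ is $4^m-2^{m-1}(2^m+1)=2^{m-1}(2^m-1)=2^{n-2}(2^{n-1}-1)$.

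Summing the four counts yields
\[
|\cJ| = 2^{n-2}\big[(2^{n-1}+1)+3(2^{n-1}-1)\big] = 2^{n-1}(2^n-1),
\]
which equals $(2^n+1)(2^{n-1}-1)+1$ by direct expansion. The only delicate point is $n=1$: there $m=0$, the conditions $x\cdot z=1$ have no solutions, $\cJ=\{\one\}$, and the formula $2^{n-2}(2^{n-1}\pm1)$ has to be read as $1$ and $0$ respectively, which is consistent with $|\cJ|=2^0(2^1-1)=1$. I expect no real obstacle; the main care needed is checking that each (anti-)commutator corresponds to the correct pairing value.
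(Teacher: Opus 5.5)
Your overall route is the paper's. You pass to the canonical data of Lm.~\ref{lm: conjugate generators}, identify each of the four families with the injective image of the pairs $(x,z)\in\zz_2^{n-1}\times\zz_2^{n-1}$ of a fixed parity of $x\cdot z$, and count. The paper counts the resulting strings by the number $k$ of $Y$-slots, summing $\binom{n-1}{k}3^{n-1-k}$ over even or odd $k$. Your count, fixing $x$ and solving one linear equation for $z$, gives the same $2^{n-2}(2^{n-1}\pm1)$ and is fine.

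However, the step you yourself single out as delicate is wrong as written. The pairings of the two mixed pairs are not $x\cdot z+1$. Since $\ta_X$ has trivial $n$-th factor, it commutes with $Z_n$; likewise $X_n$ commutes with $\ta_Z$. This is exactly the defining condition $[g^X_Z,\tA_X]=0=[g^Z_X,\tA_Z]$. Hence $\omega(\ta_X,Z_n\ta_Z)=\omega(X_n\ta_X,\ta_Z)=x\cdot z$. Only the pair $(X_n\ta_X,Z_n\ta_Z)$ acquires the extra $+1$, from $\omega(X_n,Z_n)=1$.

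With the values you state, your own rule ($[P,Q]\neq 0$ iff $\omega=1$) would make $\cJ_{[\tA_X,g^X_Z\tA_Z]}$ and $\cJ_{[g^Z_X\tA_X,\tA_Z]}$ the images of $\{x\cdot z=0\}$. Each would then have size $2^{n-2}(2^{n-1}+1)$, and the total would be $2^{n-1}(2^n+1)$ rather than $2^{n-1}(2^n-1)$; for $n=2$ that is $10$ instead of $6$. Your listed conclusion, that all three non-trivial families correspond to $x\cdot z=1$, is the correct one, but it does not follow from what you wrote.

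The fix is to replace the two mixed pairing values by $x\cdot z$. Then $\{X_n\ta_X,Z_n\ta_Z\}$ is nonzero iff $x\cdot z+1=0$, and the two commutators are nonzero iff $x\cdot z=1$. The rest of your proof (injectivity, disjointness via the $n$-th slot, the count, and the final sum) then goes through unchanged.
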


\begin{proof}
    By Lm.~\ref{lm: conjugate generators}, we may consider w.l.o.g. the canonical case where $\tA_X=\langle X_i\rangle_{i=1}^{n-1}, g^Z_X=X_n, \tA_Z=\langle Z_i\rangle_{i=1}^{n-1}, g^X_Z=Z_n$ since the cardinalities are not changed by Clifford conjugating $\tA_X, g^Z_X, \tA_Z, g^X_Z$. Consider now any $\widetilde{a}_X=\otimes_{k=1}^{n-1} X^{x_k}\otimes\one\in\tA_X,\widetilde{a}_Z=\otimes_{k=1}^{n-1} Z^{z_k}\otimes\one\in\tA_Z$. The elements $\widetilde{a}_X$ and $\widetilde{a}_Z$ will commute if and only if $x\cdot z\mod 2=0$. If they commute, their product $\{\widetilde{a}_X, \widetilde{a}_Z\}=(\otimes_{k=1}^{n-1} X^{x_k}Z^{z_k})\otimes\one$ is made up of an even number of $Y$-Paulis, distributed over the first $n-1$ qubits, namely at the indices where both $x$ and $z$ are simultaneously non-zero, while the remaining first $n-1$ qubits will have either $\one, X$ or $Z$. Therefore, summing over the even number $k$ times that $Y$-Paulis can appear, one gets
    \begin{equation*}
        |\cJ_{\{\tA_X,\tA_Z\}}| = \sum_{\text{even}\, k=0}^{n-1} \binom{n-1}{k}\, 3^{n-1-k}=(2^{n-1}+1)2^{n-2}.
    \end{equation*}
    The other three cardinalities follow by a similar reasoning. Namely, consider $g^Z_X\widetilde{a}_X=(\otimes_{k=1}^{n-1} X^{x_k})\otimes X$ and $g^X_Z\widetilde{a}_Z=(\otimes_{k=1}^{n-1} Z^{z_k})\otimes Z$. The products $\{g^Z_X\widetilde{a}_X, g^X_Z\widetilde{a}_Z\}=(\otimes_{k=1}^{n-1} X^{x_k}Z^{z_k})\otimes Y$, $[\widetilde{a}_X, g^X_Z\widetilde{a}_Z]=(\otimes_{k=1}^{n-1} X^{x_k}Z^{z_k})\otimes Z$ and $[g^Z_X\widetilde{a}_X, \widetilde{a}_Z]=(\otimes_{k=1}^{n-1} X^{x_k}Z^{z_k})\otimes X$ are non-vanishing if and only if $x\cdot z\mod 2=1$, in which case they are made up of an odd number of $Y$-Paulis distributed over the first $n-1$ qubits, while the remaining first $n-1$ qubits will have either $\one, X$ or $Z$. Summing now over the odd number $k$ times that $Y$-Paulis can appear, one gets
    \begin{equation*}
        |\cJ_{\{g^Z_X\tA_X,g^X_Z\tA_Z\}}|=|\cJ_{[\tA_X,g^X_Z\tA_Z]}|=|\cJ_{[g^Z_X\tA_X,\tA_Z]}| = \sum_{\text{uneven}\, k=0}^{n-1} \binom{n-1}{k}\, 3^{n-1-k}=(2^{n-1}-1)2^{n-2}\; .\qedhere
    \end{equation*}
\end{proof}

\section{Proof of Lm.~\ref{lm: information in general Pauli subsets}}\label{app: proof Pauli identity}

\setcounter{lemma}{0}

\begin{lemma}
    Let $\PI\subset\aP_n$ and let $\rho=\dyad{\psi}=\frac{1}{2^n}(\one+\sum_{\one\neq P\in\aP_n}\alpha_P P)$ be a pure state $\psi\in\CP^{2^n-1}$. Then
    \begin{align*}
        I_\psi(\PI)
        =\frac{1-2^{n-1}}{2^{n-1}} |\PI|+\frac{1}{2^{n-1}}\sum_{\one\neq P\in\aP_n} |C_\PI(P)|\alpha^2_P\; .
    \end{align*}
\end{lemma}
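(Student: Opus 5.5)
We prove the identity for a single Pauli string $Q\in\PI$ and then sum over $Q\in\PI$. The key input is a quadratic relation that holds only for pure states: $\rho^2=\rho$. Equivalently, $\rho Q\rho=\tr[\rho Q]\,\rho$ for every operator $Q$. Taking the trace against $Q$ gives
\begin{align*}
\alpha_Q^2=\tr[\rho Q]^2=\tr[\rho Q\rho Q].
\end{align*}
This is the form $I_\psi(\PI)=\sum_{Q\in\PI}\tr[(\rho Q)^2]$ already noted after Eq.~(\ref{eq: BZ information measure}).

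Next we expand both copies of $\rho$ in the Bloch representation, writing $\alpha_\one=1$:
\begin{align*}
\tr[\rho Q\rho Q]=\frac{1}{4^n}\sum_{P,P'\in\aP_n}\alpha_P\alpha_{P'}\tr[PQP'Q].
\end{align*}
Since $Q$ either commutes or anti-commutes with $P'$, we have $QP'Q=\pm P'$, with sign $+1$ exactly when $P'\in C(Q)$. By Hilbert--Schmidt orthogonality, $\tr[PP']=2^n\delta_{PP'}$, so only the diagonal terms survive:
\begin{align*}
\alpha_Q^2=\frac{1}{2^n}\sum_{P\in\aP_n}\epsilon_Q(P)\,\alpha_P^2,
\end{align*}
where $\epsilon_Q(P)=+1$ if $[P,Q]=0$ and $-1$ otherwise. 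We rewrite the sign as $\epsilon_Q(P)=2\,\mathds{1}[P\in C(Q)]-1$. The $-1$ part contributes $-\frac{1}{2^n}\sum_P\alpha_P^2=-\frac{1}{2^n}\cdot 2^n=-1$, using purity in the form $\sum_{P\in\aP_n}\alpha_P^2=2^n$. This leaves
\begin{align*}
\alpha_Q^2=-1+\frac{1}{2^{n-1}}\sum_{P\in C(Q)}\alpha_P^2 .
\end{align*}

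We now sum over $Q\in\PI$ and exchange the order of summation. Commutation is symmetric, so $P\in C(Q)$ if and only if $Q\in C(P)$. Hence $\sum_{Q\in\PI}\mathds{1}[P\in C(Q)]=|C_\PI(P)|$, and
\begin{align*}
I_\psi(\PI)=-|\PI|+\frac{1}{2^{n-1}}\sum_{P\in\aP_n}|C_\PI(P)|\,\alpha_P^2 .
\end{align*}
Finally, we split off the $P=\one$ term. It contributes $\frac{1}{2^{n-1}}|\PI|$, because $C_\PI(\one)=\PI$ and $\alpha_\one=1$. Combining it with $-|\PI|$ gives $\frac{1-2^{n-1}}{2^{n-1}}|\PI|$, which is exactly Eq.~(\ref{eq: information in general Pauli subsets}).

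The main obstacle is bookkeeping rather than substance. The identity is stated as a sum over $P\neq\one$, so the identity term of the Bloch expansion must be tracked carefully, and the normalisation $\alpha_\one=1$ must be used consistently. (The paper's footnote writes $\alpha_\one=\frac12$; the Bloch representation $\rho=\frac{1}{2^n}(\one+\dots)$ forces $\alpha_\one=1$.) We also must not confuse $\sum_{P\neq\one}\alpha_P^2=2^n-1$ with the full sum $2^n$. The phase convention plays no role, since only the squares $\alpha_P^2$ and commutation signs enter, consistent with the paper's remark on re-phasing.
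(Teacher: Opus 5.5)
Your proposal is correct and follows essentially the same route as the paper: expand $\tr[(\rho Q)^2]$ in the Pauli basis, use $QP'Q=\pm P'$ together with Hilbert--Schmidt orthogonality, remove the anti-commuting part via purity, and exchange the order of summation to obtain $|C_\PI(P)|$. The only difference is bookkeeping. You keep the identity term inside the Bloch sum with $\alpha_\one=1$ and use $\sum_{P}\alpha_P^2=2^n$, splitting off $P=\one$ at the end, whereas the paper separates it from the start and uses $\sum_{P\neq\one}\alpha_P^2=2^n-1$; your version also makes the vanishing of the cross terms automatic, which the paper leaves implicit.
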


\begin{proof}
    Recall that the Hermitian Paulis $\aP_n$ form an orthonormal basis in the space of Hermitian operators $H_{2^n}(\C)$, in particular, $\tr[PQ]=2^n\delta_{PQ}$ since $P^2=\one$ and $\tr[P]=0$ for all $\one\neq P,Q\in\aP_n$. Using this fact, together with $\tr[Q\dyad{\psi}]^2=|\langle\psi|Q|\psi\rangle|^2=\tr[(Q\dyad{\psi})^2]$ for pure states $\psi\in\CP^{2^n-1}$, we compute the information in $\PI\subset\aP_n$ to be
    \begin{align*}
        I_\psi(\PI)
        =\sum_{Q\in\PI}\tr[\dyad{\psi}Q]^2
        &=\sum_{Q\in\PI} \tr[\left(\frac{1}{2^n}(\one + \sum_{\one\neq P\in\aP_n} \alpha_P P)Q\right)^2] \\
        &= \sum_{Q\in\PI} \frac{1}{2^{2n}} \tr[\one] + \frac{1}{2^{2n}} \sum_{Q\in\PI} \sum_{\one\neq P,P'\in\aP_n} \alpha_P\alpha_{P'} \tr[PQP'Q] \\
        &= \frac{|\PI|}{2^n} + \frac{1}{2^{2n}} \sum_{Q\in\PI} \left(\sum_{\substack{\one\neq P\in\aP_n\\ P\in C(Q)}} \sum_{\one \neq P'\in\aP_n} \alpha_P\alpha_{P'} \tr[PQP'Q] + \sum_{\substack{\one\neq P\in\aP_n\\ P\notin C(Q)}}\sum_{\one \neq P'\in\aP_n} \alpha_P\alpha_{P'} \tr[PQP'Q] \right) \\
        &= \frac{|\PI|}{2^n} + \frac{1}{2^n} \sum_{Q\in\PI} \left(\sum_{\substack{\one\neq P\in\aP_n\\ P\in C(Q)}} \alpha^2_P - \sum_{\substack{\one\neq P\in\aP_n\\ P\notin C(Q)}} \alpha^2_P \right)\; ,
    \end{align*}
    where $C(Q)=\{P\in\aP_n\mid [P,Q]=0\}$ is the commutant of $Q$ in $\aP_n$. We thus obtain
    \begin{align*}
        I_\psi(\PI)
        &=\frac{|\PI|}{2^n} + \frac{1}{2^n} \sum_{Q\in\PI} \left(2\sum_{\substack{\one\neq P\in\aP_n\\ P\in C(Q)}} \alpha^2_P - \sum_{\one\neq P\in\aP_n} \alpha^2_P\right) \\
        &=\frac{1-2^{n-1}}{2^{n-1}} |\PI| + \frac{1}{2^{n-1}} \sum_{Q\in\PI} \sum_{\substack{\one\neq P\in\aP_n\\ P\in C(Q)}} \alpha^2_P\\
        &=\frac{1-2^{n-1}}{2^{n-1}} |\PI|+\frac{1}{2^{n-1}}\sum_{\one\neq P\in\aP_n} |C_\PI(P)|\alpha^2_P\; ,
    \end{align*}
    where $C_\PI(P)=C(P)\cap\PI$ and we used $\sum_{\one\neq P\in\aP_n}\alpha^2_P=2^n-1$ by the purity constraint $\tr[\rho^2]=1$ for $\rho=\dyad{\psi}$.
\end{proof}

\section{Proof of Lm.~\ref{lm: algebraic closure}}\label{app: algebraic closure}

\setcounter{lemma}{2}

\begin{lemma}
    $\cJ$ is closed under anti-commutators and $\fg$ is closed under commutators. Moreover, for $n\geq 2$,
    \begin{equation*}
    \begin{aligned}
        \{\cJ,\cJ\}&=\cJ &\quad\quad\quad
        \{\cJ,\fg\}&=\fg &\quad\quad\quad
        \{\fg,\fg\}&=\cJ \\
        [\cJ,\cJ]&=\fg &\quad\quad\quad
        [\cJ,\fg]&=\cJ &\quad\quad\quad
        [\fg,\fg]&=\fg
    \end{aligned}
    \end{equation*}
    whereas $[\cJ,\cJ]=[\cJ,\fg]=0$ for $n=1$.
\end{lemma}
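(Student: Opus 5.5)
The plan is to reduce everything to the canonical form and then encode membership in $\cJ$ by a single $\zz_2$-valued quadratic function on the symplectic space $(V=\zz_2^{2n},\omega)$. Clifford conjugation preserves (anti-)commutators, so by Lm.~\ref{lm: conjugate generators} I may assume $\tA_X=\langle X_i\rangle_{i=1}^{n-1}$, $g^Z_X=X_n$, $\tA_Z=\langle Z_i\rangle_{i=1}^{n-1}$ and $g^X_Z=Z_n$. For $v=(x,z)$ write $x',z'$ for the first $n-1$ coordinates. The explicit description in the proof of Lm.~\ref{lm: cardinalities} then gives the following. The set $\cJ_{\{\tA_X,\tA_Z\}}$ consists of the $W(v)$ with $x_n=z_n=0$ and $x'\cdot z'=0$. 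The other three pieces consist of the $W(v)$ with $(x_n,z_n)=(1,1),(0,1),(1,0)$ respectively, and in each case $x'\cdot z'=1$. Since $x_n\vee z_n=x_nz_n+x_n+z_n$, all four cases combine into
\begin{equation*}
  \cJ=\{W(v)\mid \mathsf{q}(v)=0\},\qquad \mathsf{q}(x,z):=x\cdot z+x_n+z_n \mod 2,
\end{equation*}
and correspondingly $\fg=\{W(v)\mid \mathsf{q}(v)=1\}$. I will first verify this identification carefully against Def.~\ref{def: family of purity invariants}, including that $\one\in\cJ$ because $\mathsf{q}(0)=0$.

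The key observation is that $\mathsf{q}$ is a quadratic refinement of $\omega$. The term $x\cdot z$ polarises to $x\cdot z'+x'\cdot z=\omega(v,w)$, and the remaining terms are linear, so
\begin{equation*}
  \mathsf{q}(v+w)=\mathsf{q}(v)+\mathsf{q}(w)+\omega(v,w).
\end{equation*}
The products satisfy $\{P,Q\}\sim PQ\sim W(v+w)$ when $\omega(v,w)=0$ and $[P,Q]\sim W(v+w)$ when $\omega(v,w)=1$ (cf.\ Lm.~\ref{lm: nested commutativity}). Reading off $\mathsf{q}(v+w)$ therefore gives all six inclusions at once. For anti-commutators ($\omega=0$): $0+0=0$ gives $\{\cJ,\cJ\}\subseteq\cJ$, $0+1=1$ gives $\{\cJ,\fg\}\subseteq\fg$, and $1+1=0$ gives $\{\fg,\fg\}\subseteq\cJ$. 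For commutators ($\omega=1$): $0+0+1=1$ gives $[\cJ,\cJ]\subseteq\fg$, $0+1+1=0$ gives $[\cJ,\fg]\subseteq\cJ$, and $1+1+1=1$ gives $[\fg,\fg]\subseteq\fg$. The first sentence of the lemma, that $\cJ$ is closed under anti-commutators and $\fg$ under commutators, is just the first and last of these. For $n=1$ we have $\cJ=\{\one\}$, and $\one$ commutes with everything, so $[\cJ,\cJ]=[\cJ,\fg]=0$.

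The main obstacle is upgrading the inclusions to equalities for $n\geq2$, i.e.\ surjectivity; here I read the equalities as holding modulo the zero (anti-)commutator. Three of them are easy because $\one\in\cJ$ and $\one$ commutes with everything. Taking $P=\one$ gives $\{\cJ,\cJ\}\supseteq\cJ$ and $\{\cJ,\fg\}\supseteq\fg$. Writing $u=\one\cdot u=v+(u+v)$ with $v=0$ gives $\{\fg,\fg\}\supseteq\cJ$ whenever $\fg\neq\emptyset$. For the commutator identities, fix a target $u$ and look for $v$ with $\omega(u,v)=1$ and prescribed value $\mathsf{q}(v)$; the value of $\mathsf{q}(u+v)$ is then forced by the polarisation identity. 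Concretely: for $[\cJ,\cJ]=\fg$, take $u\in\fg$ and find $v\in\cJ$ anticommuting with $u$, since then $u+v\in\cJ$ automatically. Similarly, $[\fg,\fg]=\fg$ needs $v\in\fg$ with $\omega(u,v)=1$ for $u\in\fg$, and $[\cJ,\fg]=\cJ$ needs, for $u\in\cJ$ with $u\neq0$, some $v\in\fg$ with $\omega(u,v)=1$. Each statement thus reduces to showing that neither $\cJ$ nor $\fg$ is contained in the commutant $C(u)$ of any $u\neq0$. I would try to prove this by counting: $\mathsf{q}$ is a nondegenerate quadratic form, so the number of its zeros on the affine hyperplane $\{\omega(u,\cdot)=1\}$ can be computed by the standard Arf-invariant count, and I expect it to be positive for $n\geq2$. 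As a fallback, I would argue directly: since $\mathsf{q}$ is preserved by the Clifford operations fixing the canonical data, $u$ can be moved into a normal form supported on at most two qubits. The explicit two-qubit (pentagon) case, with the remaining qubits set to the identity, then supplies the needed $v$; this is also where $n\ge2$ is genuinely used.
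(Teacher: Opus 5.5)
Your route is correct in outline and genuinely different from the paper's. The paper proves $\{\cJ,\cJ\}\subseteq\cJ$ by a case analysis over pairs of the four pieces in Def.~\ref{def: family of purity invariants}, tracking parities with Lm.~\ref{lm: nested commutativity}. It gets the other inclusions by observing that $\fg$ has the same four-piece decomposition with commutators and anti-commutators swapped. It then argues the equalities by decomposing target elements back into the pieces, essentially using $\one$.

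You instead identify, in canonical form, $\cJ=\{\mathsf{q}=0\}$ for the explicit form $\mathsf{q}(x,z)=x\cdot z+x_n+z_n$, whose polar form is $\omega$. This identification is correct piece by piece, and it reproduces $|\cJ|=2^{n-1}(2^n-1)$. All six inclusions then follow in one line each. Your approach makes explicit from the start the quadratic-refinement structure that the paper only invokes later, in the proofs of Lms.~\ref{lm: no max ab subgroup} and \ref{lm: constant commutants}. It also gives a much firmer handle on surjectivity than the paper's sketch.

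Two steps need repair. First, your argument for $\{\fg,\fg\}\supseteq\cJ$ fails: $v=0$ is $\one\in\cJ$, not an element of $\fg$. What you need is, for $\one\neq u\in\cJ$, some $v\in\fg$ with $\omega(u,v)=0$. Then $\mathsf{q}(u+v)=1$ automatically, so $u=v+(u+v)$ is a sum of two commuting elements of $\fg$; separately, $\one\sim\{P,P\}$ for any $P\in\fg$. So this case belongs with the commutator cases and requires $C_\fg(u)\neq\emptyset$.

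Second, the count you only anticipate closes at once with your $\mathsf{q}$. You have $\Delta:=\sum_w(-1)^{\mathsf{q}(w)}=2^{n-1}\cdot(-2)=-2^n$, and polarisation gives $\sum_w(-1)^{\mathsf{q}(w)+\omega(u,w)}=(-1)^{\mathsf{q}(u)}\Delta$. Hence, for $u\neq0$:
\begin{itemize}
\item if $u\in\cJ$, exactly $2^{2n-2}$ elements of $\fg$ anticommute with $u$, and $|C_\fg(u)|=2^{2n-2}+2^{n-1}$;
\item if $u\in\fg$, exactly $2^{2n-2}-2^{n-1}$ elements of $\cJ$ and $2^{2n-2}+2^{n-1}$ elements of $\fg$ anticommute with $u$.
\end{itemize}
All of these are positive precisely when $n\geq 2$; only $2^{2n-2}-2^{n-1}$ vanishes, at $n=1$. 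This settles every equality and shows exactly where $n\geq2$ enters. It is the computation behind Lm.~\ref{lm: constant commutants}, but it uses only the inclusions you have already proved, so there is no circularity and the normal-form fallback is unnecessary.

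Finally, your restriction to $u\neq0$ for $[\cJ,\fg]=\cJ$ is right and worth making explicit. $\one$ is never a non-zero commutator, so that identity, as stated here and in the paper, holds only for $\cJ\setminus\{\one\}$.
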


\begin{proof}
    The case $\{\ta_X,\ta_Z\},\{\ta'_X,\ta'_Z\}\in\cJ$ with $[\{\ta_X,\ta_Z\},\{\ta'_X,\ta'_Z\}]=0$ was already treated in the main body. The remaining cases are analysed similarly, we list them here for completeness:
    \begin{itemize}
        \item Let $0\neq\{\ta_X,\ta_Z\},\{g^Z_X\ta'_X,g^X_Z\ta'_Z\}\in\cJ$, hence, $[\ta_X,\ta_Z]=0\neq[\ta'_X,\ta'_Z]$, and $[\ta_X\ta_Z,g^Z_X\ta'_Xg^X_Z\ta'_Z]=0$, which by Lm.~\ref{lm: nested commutativity} implies $[\ta_X,\ta'_Z]=0$ if and only if $[\ta'_X,\ta_Z]=0$. Consequently,
        \begin{align*}
            \{\{\ta_X,\ta_Z\},\{g^Z_X\ta'_X,g^X_Z\ta'_Z\}\}
            \sim
            \ta_X\ta_Zg^Z_X\ta'_Xg^X_Z\ta'_Z
            \sim
            g^Z_X\ta_X\ta'_Xg^X_Z\ta_Z\ta'_Z
            \sim\{g^Z_X\ta_X\ta'_X,g^X_Z\ta_Z\ta'_Z\}\in\cJ\; ,
        \end{align*}
        where the fourth step again follows with Lm.~\ref{lm: nested commutativity} given the above commutation relations.
        
        \item Let $0\neq\{\ta_X,\ta_Z\},[\ta'_X,g^X_Z\ta'_Z]\in\cJ$, hence, $[\ta_X,\ta_Z]=0\neq[\ta'_X,\ta'_Z]$, and $[\ta_X\ta_Z,\ta'_Xg^X_Z\ta'_Z]=0$, which by Lm.~\ref{lm: nested commutativity} implies $[\ta_X,\ta'_Z]=0$ if and only if $[\ta'_X,\ta_Z]=0$. Consequently,
        \begin{align*}
            \{\{\ta_X,\ta_Z\},[\ta'_X,g^X_Z\ta'_Z]\}
            \sim
            \ta_X\ta_Z\ta'_Xg^X_Z\ta'_Z
            \sim
            \ta_X\ta'_Xg^X_Z\ta_Z\ta'_Z
            \sim[\ta_X\ta'_X,g^X_Z\ta_Z\ta'_Z]\in\cJ\; ,
        \end{align*}
        where the fourth step again follows with Lm.~\ref{lm: nested commutativity} given the above commutation relations. The case $0\neq\{\ta_X,\ta_Z\},[g^Z_X\ta'_X,\ta'_Z]\in\cJ$ and $[\ta_X\ta_Z,g^Z_X\ta'_X\ta'_Z]=0$ is analogous, by swapping $X\longleftrightarrow Z$.
    
        \item Let $0\neq\{g^Z_X\ta_X,g^X_Z\ta_Z\},\{g^Z_X\ta'_X,g^X_Z\ta'_Z\}\in\cJ$, hence, $[\ta_X,\ta_Z]\neq0\neq[\ta'_X,\ta'_Z]$, and $[g^Z_X\ta_Xg^X_Z\ta_Z,g^Z_X\ta'_Xg^X_Z\ta'_Z]=0$ which by Lm.~\ref{lm: nested commutativity} implies $[\ta_X,\ta'_Z]=0$ if and only if $[\ta'_X,\ta_Z]=0$. Consequently,
        \begin{align*}
            \{\{g^Z_X\ta_X,g^X_Z\ta_Z\},\{g^Z_X\ta'_X,g^X_Z\ta'_Z\}\}
            \sim g^Z_X\ta_Xg^X_Z\ta_Zg^Z_X\ta'_Xg^X_Z\ta'_Z
            \sim \ta_X\ta'_X\ta_Z\ta'_Z
            \sim\{\ta_X\ta'_X,\ta_Z\ta'_Z\}\in\cJ\; ,
        \end{align*}
        where the fourth step again follows with Lm.~\ref{lm: nested commutativity} given the above commutation relations.
        
        \item Let $0\neq\{g^Z_X\ta_X,g^X_Z\ta_Z\},[\ta'_X,g^X_Z\ta'_Z]\in\cJ$, hence, $[\ta_X,\ta_Z]\neq0\neq[\ta'_X,\ta'_Z]$, and $[g^Z_X\ta_Xg^X_Z\ta_Z,\ta'_Xg^X_Z\ta'_Z]=0$, which by Lm.~\ref{lm: nested commutativity} implies $[\ta_X,\ta'_Z]=0$ if and only if $[\ta'_X,\ta_Z]\neq0$. Consequently,
        \begin{align*}
            \{\{g^Z_X\ta_X,g^X_Z\ta_Z\},[\ta'_X,g^X_Z\ta'_Z]\}
            \sim g^Z_X\ta_Xg^X_Z\ta_Z\ta'_Xg^X_Z\ta'_Z
            \sim g^Z_X\ta_X\ta'_X\ta_Z\ta'_Z
            \sim[g^Z_X\ta_X\ta'_X,\ta_Z\ta'_Z]\in\cJ\; ,
        \end{align*}
        where the fourth step again follows with Lm.~\ref{lm: nested commutativity} given the above commutation relations. The case $0\neq\{g^Z_X\ta_X,g^X_Z\ta_Z\},[g^Z_X\ta'_X,\ta'_Z]\in\cJ$ and $[g^Z_X\ta_Xg^X_Z\ta_Z,g^Z_X\ta'_X\ta'_Z]=0$ is analogous, by swapping $X\longleftrightarrow Z$.

        \item Let $0\neq[\ta_X,g^X_Z\ta_Z],[\ta'_X,g^X_Z\ta'_Z]\in\cJ$, hence, $[\ta_X,\ta_Z]\neq0\neq[\ta'_X,\ta'_Z]$, and $[\ta_Xg^X_Z\ta_Z,\ta'_Xg^X_Z\ta'_Z]=0$, which by Lm.~\ref{lm: nested commutativity} implies $[\ta_X,\ta'_Z]=0$ if and only if $[\ta'_X,\ta_Z]=0$. Consequently,
        \begin{align*}
            \{[\ta_X,g^X_Z\ta_Z],[\ta'_X,g^X_Z\ta'_Z]\}
            \sim \ta_Xg^X_Z\ta_Z\ta'_Xg^X_Z\ta'_Z
            \sim \ta_X\ta'_X\ta_Z\ta'_Z
            \sim\{\ta_X\ta'_X,\ta_Z\ta'_Z\}\in\cJ\; ,
        \end{align*}
        where the fourth step again follows with Lm.~\ref{lm: nested commutativity} given the above commutation relations. The case $0\neq[g^Z_X\ta_X,\ta_Z],[g^Z_X\ta'_X,\ta'_Z]\in\cJ$ and $[g^Z_X\ta_X\ta_Z,g^Z_X\ta'_X\ta'_Z]=0$ is analogous, by swapping $X\longleftrightarrow Z$.

        \item Lastly, let $0\neq[\ta_X,g^X_Z\ta_Z],[g^Z_X\ta'_X,\ta'_Z]\in\cJ$, hence, $[\ta_\cw{X},\ta_Z]\neq0\neq[\ta'_X,\ta'_Z]$, and $[\ta_Xg^X_Z\ta_Z,g^Z_X\ta'_X\ta'_Z]=0$, which by Lm.~\ref{lm: nested commutativity} implies $[\ta_X,\ta'_Z]=0$ if and only if $[\ta'_X,\ta_Z]\neq0$. Consequently,
        \begin{align*}
            \{[\ta_X,g^X_Z\ta_Z],[g^Z_X\ta'_X,\ta'_Z]\}
            \sim \ta_Xg^X_Z\ta_Zg^Z_X\ta'_X\ta'_Z
            \sim g^Z_X\ta_X\ta'_Xg^X_Z\ta_Z\ta'_Z
            \sim\{g^Z_X\ta_X\ta'_X,g^X_Z\ta_Z\ta'_Z\}\in\cJ\; ,
        \end{align*}
        where the fourth step again follows with Lm.~\ref{lm: nested commutativity} given the above commutation relations.
    \end{itemize}
    Taken together, this proves $\ta,\ta'\in\cJ$, $[\ta,\ta']=0$ $\Rightarrow$ $\{\ta,\ta'\}\sim\ta\ta'\in\cJ$, that is, $\{\cJ,\cJ\}\subset\cJ$. Since the above arguments only use the parity of commutation relations and since $L=\aP_n\backslash J$ has a decomposition as in Eq.~(\ref{eq: family of purity invariants}) with commutators and anti-commutators exchanged everywhere, the same reasoning also shows $[\cJ,\cJ]\subset\fg$ as well as $\{\fg,\fg\}\subset\cJ$, $[\fg,\fg]\subset\fg$ and $\{\cJ,\fg\}\subset\fg$, $[\cJ,\fg]\subset\cJ$. Consider, for instance, the case that $0\neq[\ta_X,\ta_Z],[\ta'_X,\ta'_Z]\in L$ with $[[\ta_X,\ta_Z],[\ta'_X,\ta'_Z]]\neq0$, hence, $\{\ta_X,\ta_Z\}=0=\{\ta'_X,\ta'_Z\}$, and $\{[\ta_X,\ta_Z],[\ta'_X,\ta'_Z]\}=0$. By Lm.~\ref{lm: nested commutativity}, the latter is equivalent to $[\ta_X,\ta'_Z]=0$ if and only if $[\ta'_X,\ta_Z]\neq0$, consequently
    \begin{align*}
        [[\ta_X,\ta_Z],[\ta'_X,\ta'_Z]]
        \sim[\ta_X\ta_Z,\ta'_X\ta'_Z]
        \sim
        \ta_X\ta_Z\ta'_X\ta'_Z
        \sim
        \ta_X\ta'_X\ta_Z\ta'_Z
        \sim[\ta_X\ta'_X,\ta_Z\ta'_Z]\in L\; ,
    \end{align*}
    where we used $[\ta_X\ta'_X,\ta_Z\ta'_Z]\neq0$ in the fourth step, which follows from Lm.~\ref{lm: nested commutativity} given the above commutation relations.

    Finally, it follows from the explicit form of $\cJ$ in Eq.~(\ref{eq: family of purity invariants}), and the product relations above, that these inclusions are in fact equalities for $n\geq 2$. This is immediate for the subset $J_{\{\tA_X,\tA_Z\}}$ in Eq.~(\ref{eq: family of purity invariants}). For the others it follows by reading the case analysis of anti-commutators above in reverse. For instance, take the element $0\neq [\ta_X,g^X_Z\ta_Z]\sim\ta_Xg^X_Z\ta_Z\in J$ and let $\ta_X=\ta'_X\ta''_X$ and $\ta_Z=\ta'_Z\ta''_Z$ (with one factor possibly equal to the identity). Since $[\ta_X,\ta_Z]\neq0$, we may wlog assume that $[\ta'_X,\ta'_Z]\neq 0$, in which case either $\{\{\ta''_X,\ta''_Z\},[\ta'_X,g^X_Z\ta'_Z]\}\sim\ta_Xg^X_Z\ta_Z$ if  $[\ta''_X,\ta''_Z]=0$ or $\{\{g^Z_X\ta_X,g^X_Z\ta_Z\},[g^Z_X\ta'_X,\ta'_Z]\}\sim\ta_Xg^X_Z\ta_Z$ if $[\ta''_X,\ta''_Z]\neq0$. (In fact, it is sufficient to pick $\ta''_X=\one=\ta''_Z$.)

    What is more, for $n\geq 3$, one also finds $\{\cJ^\circ,\cJ^\circ\}=\cJ^\circ$. For $n=2$, $\tA_X=\langle\ta_X\rangle,\tA_Z=\langle\ta_Z\rangle$ have order $1$, and
    \begin{equation}\label{eq: n=2}
    \begin{aligned}
        \cJ^\circ
        &=\{\ta_X,\ta_Z,\{g^Z_X\ta_X,g^X_Z\ta_Z\},[g^Z_X\ta_X,\ta_Z],[\ta_X,g^X_Z\ta_Z]\}\; , \\
        \fg
        &=\{g^Z_X,g^Z_X\ta_X,g^X_Z,g^X_Z\ta_Z,
        [\ta_X,\ta_Z],[g^Z_X,g^X_Z],[g^Z_X\ta_X,g^X_Z],[g^Z_X,g^X_Z\ta_Z],\{g^Z_X,\ta_Z\},\{\ta_X,g^X_Z\}\}\; .
    \end{aligned}
    \end{equation}
    Since $[\ta_X,\ta_Z]\neq 0$ in this case, $\cJ^\circ$ defines a subset of mutually anti-commuting Pauli operators, and thus $\{\cJ^\circ,\cJ^\circ\}=\{\one\}$ in this case. The relations in the single-qubit case ($n=1$) where $\cJ=\{\one\}$ and $\fg=\{X,Y,Z\}$ are obvious.
\end{proof}

Notably, the closure relations (under anti-commutators) in Lm.~\ref{lm: algebraic closure}, only hold trivially for $n=2$. More precisely, $\{\cJ^\circ,\cJ^\circ\}=\{\one\}$ (with $\cJ^\circ=\cJ\backslash\{\one\}$), whereas $\{\cJ^\circ,\cJ^\circ\}=\cJ^\circ$ holds for $n>2$. Together with the fact that $\cJ^\circ$ is not a set of mutually anti-commutating Hermitian Pauli operators for $n>2$ (see Sec.~\ref{sec: naive generalisation}), conceals the algebraic structure of the sets Def.~\ref{def: family of purity invariants} in this case $n=2$; in turn, this is the reason why the generalisation from $n=2$ to $n>2$ is nontrivial.

\section{Proofs of Lm.~\ref{lm: no max ab subgroup}}\label{app: no max ab subgroup}

\setcounter{lemma}{3}

\begin{lemma}
    $\cJ$ contains no Abelian subgroup of dimension larger than $n-1$.
\end{lemma}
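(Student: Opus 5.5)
The plan is to encode membership in $\cJ$ as the zero set of a $\zz_2$-valued quadratic form on the symplectic space $(V=\zz^{2n}_2,\omega)$ of App.~\ref{app: Abelian Pauli group}. I would then show that this form is of elliptic (Arf invariant $1$) type, so that it admits no totally singular Lagrangian subspace. By Lm.~\ref{lm: conjugate generators}, and because Clifford conjugation preserves both Abelian subgroups and their dimension, it suffices to treat the canonical case $\tA_X=\langle X_i\rangle_{i=1}^{n-1}$, $g^Z_X=X_n$, $\tA_Z=\langle Z_i\rangle_{i=1}^{n-1}$, $g^X_Z=Z_n$.

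\emph{Step 1: describe $\cJ$ by a quadratic form.} Write $v=(x,z)$, and let $x',z'$ be the restrictions to the first $n-1$ qubits. The explicit form of the four pieces, as computed in the proof of Lm.~\ref{lm: cardinalities}, gives $W(v)\in\cJ$ if and only if one of the following holds:
\begin{itemize}
\item the last factor is $\one$ and $x'\cdot z'$ is even;
\item the last factor is $Y$ and $x'\cdot z'$ is odd;
\item the last factor is $X$ or $Z$ and $x'\cdot z'$ is odd.
\end{itemize}
I would check that all three cases are captured by the single condition
\begin{align*}
Q(v):=\sum_{i=1}^{n-1}x_iz_i+x_n+z_n+x_nz_n=0 \pmod 2 .
\end{align*}
One verifies this case by case. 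If the last factor is $\one$, then $Q=x'\cdot z'$. If it is $Y$, then $Q=x'\cdot z'+1$. If it is $X$ or $Z$, then $Q=x'\cdot z'+1$ as well.

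\emph{Step 2: identify the type of $Q$.} The linear terms drop out under polarisation, so
\begin{align*}
Q(v+w)-Q(v)-Q(w)=\omega(v,w).
\end{align*}
Thus $Q$ is a nondegenerate quadratic form whose polar form is the commutation form $\omega$. It decomposes as an orthogonal sum of $n-1$ hyperbolic planes $x_iz_i$ and the single-qubit form $q_1(x_n,z_n)=x_n+z_n+x_nz_n$. The form $q_1$ takes the value $1$ on three of the four vectors of $\zz^2_2$, hence has Arf invariant $1$, and therefore $\mathrm{Arf}(Q)=0+1=1$. As a consistency check, an elliptic form on $\zz^{2n}_2$ has exactly $2^{2n-1}-2^{n-1}=2^{n-1}(2^n-1)$ zeros, which agrees with $|\cJ|$ from Lm.~\ref{lm: cardinalities}.

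\emph{Step 3: conclude.} Let $A\subset\cJ$ be an Abelian subgroup. Then its image $L_A\subset V$ is $\omega$-isotropic, and since $A\subset\cJ$, $Q$ vanishes identically on $L_A$. Because the polar form of $Q$ is $\omega$, $Q$ restricted to $L_A$ is additive and hence linear, so "totally singular" is the correct notion here. By the standard classification of quadratic forms over $\zz_2$, the maximal totally singular subspaces of an elliptic form on $\zz^{2n}_2$ have dimension $n-1$. Hence $\dim A\le n-1$.

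I expect the main obstacle to be Step 1: getting the bookkeeping of the $Y$ factors on the last qubit exactly right, so that the affine terms assemble into a genuine quadratic form with polar form $\omega$ rather than a degenerate one. If citing Witt/Arf theory is undesirable, I would replace the last step with a direct argument. Namely, suppose $L_A$ had dimension $n$, and project it onto the last-qubit coordinates $(x_n,z_n)$. If the image is $0$, then $A$ would be an Abelian subgroup of dimension $n$ on $n-1$ qubits, which is impossible. Otherwise the kernel of the projection has dimension $n-1$ or $n-2$, and an explicit check on coset representatives shows that $Q$ must equal $1$ somewhere on $L_A$.
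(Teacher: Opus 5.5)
Your proof is correct and is essentially the paper's argument. Both realise $\cJ$ as the zero set of a quadratic refinement of $\omega$ with Arf invariant $1$: you write the form down explicitly in canonical coordinates, whereas the paper extracts the property $\varphi(PQ)=\varphi(P)+\varphi(Q)+\omega(P,Q)$ from Lm.~\ref{lm: algebraic closure}. The Witt-index fact you cite is what the paper proves by hand with the Gauss sum $\sum_{P\in\aP_n}(-1)^{\varphi(P)}$, which is $-2^n$ for $\cJ$ but would be $+2^n$ if $\cJ$ contained a maximal Abelian subgroup.

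Your optional ``direct'' fallback does not close as stated, however. Suppose $L_A$ projects onto the whole last-qubit plane, with kernel $K$ of dimension $n-2$. Then isotropy and $Q|_{L_A}=0$ reduce to two representatives $w_1,w_2$ on the first $n-1$ qubits with $\sum_{i<n}x_iz_i=1$ at both, $\omega(w_1,w_2)=1$ and $\omega(w_j,K)=0$. These conditions make $Q$ vanish on all of $L_A$, so no check on coset representatives produces a value $Q=1$. Ruling out such $w_1,w_2$ needs the same global Arf/Gauss-sum input, namely that the form induced by $\sum_{i<n}x_iz_i$ on $K^{\perp}/K$ is hyperbolic rather than anisotropic.
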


\begin{proof}
    We define the indicator function $\varphi:\aP_n\ra\zz_2$ for membership in the respective parts of the decomposition $\aP_n=\cJ\oplus\fg$ by  $\varphi(P)=0$ if and only if $P\in\cJ$. Eq.~(\ref{eq: algebraic closure relations}) is then equivalent to the product formula $\varphi(PQ)=\varphi(P)+\varphi(Q)+\omega(P,Q)\mod 2$, where $\omega(Q,P)=0$ if and only if $[P,Q]=0$.\footnote{In other words, $\varphi$ in $\phi=(-1)^{\varphi}$ is a quadratic refinement of $\omega$, that is, $\varphi(QP)=\varphi(P)+\varphi(Q)+\omega(P,Q)$ (cf. Ref.~\cite{Frembs2026}).\label{fn: quadratic refinement}}
    
    Let $A,A'<\aP_n$ be two maximal Abelian subgroups such that $A\cap A'=\{\one\}$. Then there exist generating sets $\cG=\{Q_i\}_{i=1}^n$, $\cG'=\{Q'_k\}_{k=1}^n$ with $A=\langle\cG\rangle$, $A'=\langle\cG'\rangle$ and such that $\omega(Q_i,Q'_j)=\delta_{ij}$ (and $\omega(Q_i,Q_j)=0=\omega(Q'_i,Q'_j)$ for all $i,j\in[n]$) (see also Lm.~\ref{lm: conjugate pairs}).\footnote{In other words, $\cG$, $\cG'$ define a symplectic basis in the symplectic vector space $\zz^{2n}_2\cong\aP_n$.} With respect to this parametrisation, we define the following expression\footnote{$\Delta(\varphi)$ is called the \emph{Arf invariant} of a quadratic form $\varphi$ in characteristic 2 and, together with its dimension, determines it uniquely (up to equivalence) \cite{Arf1941,Dye1978,Knebusch2010}. For more details on the relation between quadratic forms, their Arf invariant and the relation with the characterisation of purity invariants, see Ref.~\cite{Frembs2026}.}
    \begin{equation}\label{eq: Gauss sum}
    \begin{aligned}
        \Delta(\phi)
        =\sum_{P\in\aP_n}\phi(P)
        =\sum_{P\in\aP_n}(-1)^{\varphi(P)}
        &=\sum_{P\in\aP_n}\prod_{i=1}^n(-1)^{\varphi(P_i)}\\
        &=\prod_{i=1}^n\sum_{P\in\langle Q_i,Q'_i\rangle}(-1)^{\varphi(P)}\\
        &=\prod_{i=1}^n\left(1+(-1)^{\varphi(Q_i)}+(-1)^{\varphi(Q'_i)}+(-1)^{\varphi(Q_iQ'_i)}\right)\\
        &=\prod_{i=1}^n\left(1+(-1)^{\varphi(Q_i)}+(-1)^{\varphi(Q'_i)}+(-1)^{\varphi(Q_i)+\varphi(Q'_i)+1}\right)\\
        &=\prod_{i=1}^n 2(-1)^{\varphi(Q_i)\varphi(Q'_i)}
        =2^n(-1)^{\sum_{i=1}^n\varphi(Q_i)\varphi(Q'_i)}\; ,
    \end{aligned}
    \end{equation}
    where we used that, according to the above product formula, $\varphi$ factorises with respect to the decomposition of Pauli operators into local ones $P=\prod_{i=1}^nP_i$ (since the symplectic inner product vanishes between operators supported on different factors).\footnote{More generally, $\varphi$ factorises with respect to any decomposition of $\aP_n$ into orthogonal symplectic subspaces.} In particular, up to Clifford conjugation (see Lm.~\ref{lm: conjugate generators}), we may express Pauli operators with respect to the canonical choice of Abelian subgroups in Def.~\ref{def: family of purity invariants}, for which $A_X=\langle X_1,\cdots,X_n\rangle$, $\tA_X=\langle X_1,\cdots,X_{n-1}\rangle$ and $A_Z=\langle Z_1,\cdots,Z_n\rangle$, $\tA_Z=\langle Z_1,\cdots,Z_{n-1}\rangle$. For this choice, Eq.~(\ref{eq: Gauss sum}) evaluates to $\Delta(\phi)=\sum_{P\in\aP_n}(-1)^{\varphi(P)}=2^n(-1)^{\sum_{i=1}^n\varphi(X_i)\varphi(Z_i)}=-2^n$ as $\varphi(X_i)=\varphi(Z_i)=0$ for $i<n$ and $=1$ for $i=n$. Moreover, $\Delta(\phi)$ is in fact (as our notation suggests) independent of the choice of disjoint maximal Abelian subgroups $A,A'$ \cite{Dye1978}.\footnote{Alternatively, the invariance of $\Delta(\phi)$ also follows from the Lm.~\ref{lm: Clifford covariance}, and the fact that cardinalities remain invariant under Clifford conjugation.} To see this, note that by Lm.~\ref{lm: conjugate pairs}, it is sufficient to prove that $\sum_{i=1}^n\varphi(CX_iC^\dagger)\varphi(CZ_iC^\dagger)=\sum_{i=1}^n\varphi(X_i)\varphi(Z_i)$ for any Clifford unitary $C\in\Cl_n$, specifically for any of the generators in App.~\ref{app: Abelian Pauli group}. It is easy to see that SWAP and Hadamard gates leave the sum unchanged, as they only permute site indices or exchange $X_i\leftrightarrow Z_i$. Moreover, for CNOT gates $\Lambda\!\mathrm{X}_{ij}$ with action given by Eq.~(\ref{eq: CNOT}), invariance follows since
    \begin{align*}
        &\varphi(\Lambda\!\mathrm{X}_{ij}X_i\Lambda\!\mathrm{X}_{ij})\varphi(\Lambda\!\mathrm{X}_{ij}Z_i\Lambda\!\mathrm{X}_{ij})+\varphi(\Lambda\!\mathrm{X}_{ij}X_j\Lambda\!\mathrm{X}_{ij})\varphi(\Lambda\!\mathrm{X}_{ij}Z_j\Lambda\!\mathrm{X}_{ij})\\
        =\ &\varphi(X_iX_j)\varphi(Z_i)+\varphi(X_j)\varphi(Z_iZ_j)\\
        =\ &(\varphi(X_i)+\varphi(X_j))\varphi(Z_i)+\varphi(X_j)(\varphi(Z_i)+\varphi(Z_j))\\
        =\ &\varphi(X_i)\varphi(Z_i)+\varphi(X_j)\varphi(Z_j)\mod 2\; ,
    \end{align*}
    where we used the product formula for $\varphi$ (from Eq.~(\ref{eq: algebraic closure relations})) in the second step. Similarly, for phase gates $S_i$, one finds
    \begin{align*}
        \varphi(S_iX_iS^\dagger_i)\varphi(S_iZ_iS^\dagger_i)
        =\varphi(X_iZ_i)\varphi(Z_i)
        =(\varphi(X_i)+\varphi(Z_i)+1)\varphi(Z_i)
        =\varphi(X_i)\varphi(Z_i)\mod 2\; .
    \end{align*}
    Now, assume that $B<\cJ$ is a maximal Abelian subgroup with generating set $\cH=\{R_i\}_{i=1}^n$, and let $B'<\aP_n$ be another maximal Abelian subgroup with $B\cap B'=\{\one\}$ and corresponding generating set $\cH'=\{R'_i\}_{i=1}^n$ such that $\omega(R_i,R'_j)=\delta_{ij}$. Since $B<\cJ$, $\varphi|_B=0$ and thus $\Delta(\phi)=\sum_{P\in\aP_n}(-1)^{\varphi(P)}=2^n(-1)^{\sum_{i=1}^n\varphi(R_i)\varphi(R'_i)}=2^n$. Evidently, this does not match the value found above, and we thus conclude that $\cJ$ contains no maximal Abelian subgroup.
\end{proof}

\section{Proofs of Lm.~\ref{lm: constant commutants}}\label{app: constant commutants}

\setcounter{lemma}{4}

\begin{lemma}
    The cardinalities of the commutants $C_\cJ(P\in\cJ)$ and $C_\fg(P\in\cJ)$ are independent of $\one\neq P\in\cJ$, while the cardinality of the commutants $C_\cJ(P\in\fg)$ and $C_\fg(P\in\fg)$ are independent of $P\in\fg$.
\end{lemma}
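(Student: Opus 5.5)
The plan is to reuse the indicator function $\varphi:\aP_n\ra\zz_2$ from the proof of Lm.~\ref{lm: no max ab subgroup}, with $\varphi(P)=0$ if and only if $P\in\cJ$. By Lm.~\ref{lm: algebraic closure} it satisfies the product formula $\varphi(PQ)=\varphi(P)+\varphi(Q)+\omega(P,Q)\bmod 2$. I will also use the Gauss sum value $\Delta(\phi)=\sum_{Q\in\aP_n}(-1)^{\varphi(Q)}=-2^n$ computed there. Once these are in hand, the commutant cardinalities become simple character sums, and the key point is that each depends on $P$ only through $\varphi(P)$.

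\textbf{Step 1: write the counts as character sums.} For $\one\neq P\in\aP_n$,
\begin{align*}
|C_\cJ(P)|&=\frac14\sum_{Q\in\aP_n}\bigl(1+(-1)^{\varphi(Q)}\bigr)\bigl(1+(-1)^{\omega(P,Q)}\bigr),\\
|C_\fg(P)|&=\frac14\sum_{Q\in\aP_n}\bigl(1-(-1)^{\varphi(Q)}\bigr)\bigl(1+(-1)^{\omega(P,Q)}\bigr).
\end{align*}
Expanding each gives four sums, which I evaluate in turn:
\begin{itemize}
\item $\sum_Q 1=4^n$.
\item $\sum_Q(-1)^{\varphi(Q)}=\Delta(\phi)=-2^n$.
\item $\sum_Q(-1)^{\omega(P,Q)}=0$, since $\omega(P,\cdot)$ is a nonzero linear functional for $P\neq\one$ (equivalently $|C(P)|=\tfrac12 4^n$).
\item $\sum_Q(-1)^{\varphi(Q)+\omega(P,Q)}$, handled below.
\end{itemize}

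\textbf{Step 2: the mixed sum.} By the product formula, $\varphi(Q)+\omega(P,Q)=\varphi(PQ)+\varphi(P)$. The map $Q\mapsto PQ$ is a bijection of $\aP_n$ (up to the identification $\zeta P\sim P$). Hence
\[
\sum_Q(-1)^{\varphi(Q)+\omega(P,Q)}=(-1)^{\varphi(P)}\Delta(\phi)=-(-1)^{\varphi(P)}2^n.
\]

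\textbf{Step 3: assemble.} This yields
\[
|C_\cJ(P)|=\tfrac14\bigl(4^n-2^n-(-1)^{\varphi(P)}2^n\bigr),\qquad
|C_\fg(P)|=\tfrac14\bigl(4^n+2^n+(-1)^{\varphi(P)}2^n\bigr).
\]
Both are constant on $\cJ\backslash\{\one\}$ (where $\varphi=0$) and on $\fg$ (where $\varphi=1$), which is exactly the claim. The identity must be excluded because the third sum in Step 1 then equals $4^n$ rather than $0$. As a sanity check, for $P\in\cJ\backslash\{\one\}$ one gets $|C_\cJ(P)|=4^{n-1}-2^{n-1}$, and summing $|C_\cJ(P)|+|C_\fg(P)|$ recovers $|C(P)|=\tfrac12 4^n$.

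\textbf{Main obstacle.} The delicate point is justifying the two inputs at the level of projective Paulis. First, the product formula must hold for all pairs, including commuting pairs mapped into $\cJ$ or $\fg$ and the identity, with $\varphi(\one)=0$; this is the full content of Lm.~\ref{lm: algebraic closure}, including the case $n=1$. Second, the value $\Delta(\phi)=-2^n$ must hold for the Clifford-conjugated sets as well. For the latter I would appeal to the Clifford-invariance argument already given in the proof of Lm.~\ref{lm: no max ab subgroup}. Everything else is the routine bookkeeping above.
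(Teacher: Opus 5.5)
Your proposal is correct and follows essentially the same route as the paper. In both, the key step is reindexing $Q\mapsto PQ$ together with the product formula $\phi(PQ)=(-1)^{\omega(P,Q)}\phi(P)\phi(Q)$. The paper packages this as solving $\Delta(\phi)=\phi(P)\bigl(2\Delta(P)-\Delta(\phi)\bigr)$ for $\Delta(P)=|C_\cJ(P)|-|C_\fg(P)|$, whereas you expand indicator products; this is equivalent bookkeeping. One small simplification: the paper reads off $\Delta(\phi)=|\cJ|-|\fg|=-2^n$ directly from the cardinality lemma, which trivially survives Clifford conjugation, so you do not need the Arf-type computation. In any case, the independence claim only requires $\Delta(\phi)$ to be a constant, not its value.
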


\begin{proof}
    For $n=1$, $\cJ=\{\one\}$ and the statement is immediately verified. For the general case, we identify $\cJ$ and $\fg$ with the level sets of the indicator function $\varphi:\aP_n\ra\zz_2$ (as in the proof of Lm.~\ref{lm: no max ab subgroup}) and define
    \begin{align*}
        \phi(P)
        =(-1)^{\varphi(P)}
        =\begin{cases}
            +1&\mathrm{if}\ P\in\cJ\\
            -1&\mathrm{if}\ P\in\fg
        \end{cases}\; .
    \end{align*}
    The closure relations in Eq.~(\ref{eq: algebraic closure relations}) are then equivalent to the condition $\phi(PQ)=(-1)^{\omega(P,Q)}\phi(P)\phi(Q)$, where $\omega(P,Q)=0$ if and only if $[P,Q]=0$.
    
    Now, for any $P\in\aP_n$, define $\Delta(P):=\sum_{Q\in C(P)}\phi(Q)=|C_\cJ(P)|-|C_\fg(P)|$. Moreover, we compute for any fixed $P\in\aP_n$
    \begin{align*}
        \Delta(\phi)
        :=\sum_{Q\in\aP_n}\phi(Q)
        &=\sum_{Q\in\aP_n}\phi(PQ)\\
        &=\phi(P)\sum_{Q\in\aP_n}(-1)^{\omega(P,Q)}\phi(Q)\\
        &=\phi(P)\left(\sum_{Q\in C(P)}\phi(Q)-\sum_{Q\notin C(P)}\phi(Q)\right)
        =\phi(P)(\Delta(P)-(\Delta(\phi)-\Delta(P)))\; ,
    \end{align*}
    where we used the above identity for $\phi$ in the third step. Clearly, this yields $\Delta(P)=\Delta(\phi)$ for $P\in\cJ$ and $\Delta(P)=0$ for $P\in\fg$. Since $\Delta(\phi)$ is a constant and every non-identity Pauli operator (anti-)commutes with half of the elements in $\aP_n$, $|C(P)|=4^n/2$ the desired cardinalities are thus easily seen to be independent of $P\in\fg$, and of $\one\neq P\in\cJ$, since
    \begin{align*}
    \begin{split}
        |C_\cJ(P)|
        &=\frac{1}{2}(|C(P)|+\Delta(P))
        =4^{n-1}+\frac{1}{2}\Delta(P) \\
        |C_\fg(P)|
        &=\frac{1}{2}(|C(P)|-\Delta(P))
        =4^{n-1}-\frac{1}{2}\Delta(P)
    \end{split}\; .
    \end{align*}
    Explicitly, since $|\cJ|=\frac{1}{2}4^n-2^{n-1}$ by Lm.~(\ref{lm: cardinalities}) we find $\Delta(\phi)=\sum_{Q\in\aP_n}\phi(Q)=|\cJ|-|\fg|=-2^n$, and thus $|C_\cJ(P)|=2^{n-1}(2^{n-1}-1)$ and $|C_\fg(P)|=2^{n-1}(2^{n-1}+1)$ for all $\one\neq P\in\cJ$, as well as $|C_\cJ(P)|=|C_\fg(P)|=4^{n-1}$ for all $P\in\fg$.
\end{proof}

\section{Proof of Lm.~\ref{lm: Clifford covariance}}\label{app: Clifford invariance}

\setcounter{lemma}{5}

\begin{lemma}
    The various sets in Def.~\ref{def: family of purity invariants} define an orbit under conjugation by the Clifford group.
\end{lemma}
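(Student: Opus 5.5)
We have to show two things. First, the family of sets in Def.~\ref{def: family of purity invariants}, extended by Clifford conjugation as in Lm.~\ref{lm: conjugate generators}, is closed under conjugation by $\Cl_n$. Second, $\Cl_n$ acts transitively on this family. The plan is to encode each such set by its defining data $(\tA_X,\tA_Z,g^Z_X,g^X_Z)$, or more generally $(\tB,\tA,b,a)$ with $A=\langle\tA,a\rangle$ and $B=\langle\tB,b\rangle$ maximal Abelian and $A\cap B=\{\one\}$. We then check that $J$ depends \emph{equivariantly} on this data, i.e.\ $UJ(\tB,\tA,b,a)U^\dagger=J(U\tB U^\dagger,U\tA U^\dagger,UbU^\dagger,UaU^\dagger)$ for all $U\in\Cl_n$.

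For closure, observe that conjugation by a unitary preserves products, commutators and anticommutators: $U\{P,Q\}U^\dagger=\{UPU^\dagger,UQU^\dagger\}$, and similarly for $[\cdot,\cdot]$. Since $U\in\Cl_n$, the images of Hermitian Paulis are again Hermitian Paulis up to a sign, and we identify them with representatives in $\aP_n$ via $\zeta P\sim P$, exactly as in App.~\ref{app: Abelian Pauli group}. Clifford conjugation induces a symplectic map on $(V,\omega)$ by Thm.~\ref{thm: Pauli automorphism group}, so it preserves the following properties of the data:
\begin{itemize}
\item the subgroup structure, including dimensions;
\item maximal Abelianness and trivial intersection;
\item the conditions $\langle\tA,a\rangle=A$ and $[\tA,b]=0=[\tB,a]$.
\end{itemize}
Hence the conjugated data again satisfy the hypotheses of Def.~\ref{def: family of purity invariants}/Lm.~\ref{lm: conjugate generators}. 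Applying $U$ termwise to the four disjoint pieces in Eq.~(\ref{eq: family of purity invariants}) then yields equivariance, so $UJU^\dagger$ belongs to the family.

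For transitivity, take two members $J=J(\tB,\tA,b,a)$ and $J'=J(\tB',\tA',b',a')$. By Lm.~\ref{lm: conjugate generators} there are Clifford unitaries $U,U'$ mapping each data set to the canonical data $(\langle X_i\rangle_{i=1}^{n-1},\langle Z_i\rangle_{i=1}^{n-1},X_n,Z_n)$. Equivariance gives $UJU^\dagger=J_{\rm can}=U'J'U'^\dagger$, so $J'=U'^\dagger UJU^\dagger U'$. Thus the whole family is the single orbit $\{CJ_{\rm can}C^\dagger\mid C\in\Cl_n\}$.

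The main obstacle is the phase bookkeeping in the equivariance step. Pauli products and normalised (anti)commutators carry factors in $K$, and Clifford conjugation can flip signs. We therefore have to argue that $J$, being defined up to $\sim$ as a subset of $\aP_n\simeq\cP_n/K$, depends only on the projective classes. The cleanest way is to recast $J$ in symplectic terms. In the canonical picture, $\{P,Q\}\neq0$ exactly when $\omega(v_P,v_Q)=0$, and the result is then the class $v_P+v_Q$; the commutator cases correspond to $\omega=1$. So $J$ is the image in $V$ of a set defined purely in terms of $+$ and $\omega$ applied to the data, and any symplectic map $S\in\Sp(2n,\zz_2)$ carries it to the same construction on the transformed data. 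Once this reformulation is stated, equivariance and hence the lemma follow immediately.
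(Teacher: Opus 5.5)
Your proof is correct, and its core matches the paper's. In both, transitivity comes from Lm.~\ref{lm: conjugate generators}, which moves any admissible data $(\tB,\tA,b,a)$ to the canonical data. The proofs differ in what they add around this step. You make explicit the equivariance $UJ(\tB,\tA,b,a)U^\dagger=J(U\tB U^\dagger,U\tA U^\dagger,UbU^\dagger,UaU^\dagger)$. You do this by rewriting the four pieces of Eq.~(\ref{eq: family of purity invariants}) purely in terms of $+$ and $\omega$ on $V$: a normalised anticommutator is nonvanishing iff $\omega=0$, with result $v_P+v_Q$, and a commutator iff $\omega=1$. This cleanly settles both the phase bookkeeping and the closure of the family under Clifford conjugation; the paper takes both for granted. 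Conversely, the paper spends its argument on a point you do not address. It shows that any unitary $U$, not a priori Clifford, with $UJU^\dagger\subset\cP_n$ must be Clifford. The reason is that, by the closure relation $[J,J]=L$ (for $n\geq 2$), $J$ generates the whole Pauli group. So the Clifford orbit also exhausts all unitary images of $J$ lying inside the Pauli group. That is stronger than the lemma as stated requires, so your equivariance-plus-transitivity argument is sufficient.
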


\begin{proof}
    Clearly, if $\PI\subsetneq\aP_n$ defines a purity invariant, $I_\psi(\PI)=\mathrm{const}$, then so does $U\PI U^{-1}$ for any unitary $U\in\cU(\C^{2^n})$; and, by definition of the Clifford group, $U\PI U^{-1}\subset\aP_n$ for $U\in\Cl_n$. However, a priori there may exist non-Clifford unitaries $U\in\cU(\C^{2^n})$ for which $U\PI U^{-1}\subset\cP_n$; moreover, purity invariants in Def.~\ref{def: family of purity invariants} may decompose into several unitary (in particular, Clifford) orbits. Yet, neither of this is the case, since, by Lm.~\ref{lm: algebraic closure}, $U\PI U^{-1}\subset\cP_n$ implies $\langle U\PI U^{-1}\rangle \subset\langle\cP_n\rangle \Leftrightarrow  U\langle\PI\rangle  U^{-1}\subset\cP_n \Leftrightarrow U\cP_n U^{-1}\subset\cP_n$, hence, $U$ must be Clifford, and by Lm.~\ref{lm: conjugate generators}, the defining data of the sets $J$ in Def.~\ref{def: family of purity invariants} (that is, a pair of maximal Abelian subgroups $A_Z,A_X$ with $A_X\cap A_Z=\{\one\}$, subgroups $\tA_Z<A_Z$, $\tA_X<A_X$ of dimension $n-1$ and elements $A_Z\ni g^X_Z\notin\tA_Z$, $A_X\ni g^Z_X\notin\tA_X$ such that $[g^X_Z,\tA_X]=0=[g^Z_X,\tA_Z]$) can be conjugated onto any other by a Clifford unitary.
\end{proof}

\section{Proof of Thm.~\ref{thm: counting}}\label{app: counting}

\setcounter{theorem}{1}

\begin{theorem}
    There exist $2^{n-1}(2^n-1)$ distinct (Clifford-conjugate) subsets $\cJ$ in $\aP_n$.
\end{theorem}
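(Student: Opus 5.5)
The plan is to encode each set $\cJ$ by its indicator function and count these functions as quadratic forms. As in the proofs of Lm.~\ref{lm: no max ab subgroup} and Lm.~\ref{lm: constant commutants}, we associate to $\cJ$ the function $\varphi_\cJ:\aP_n\cong\zz_2^{2n}\ra\zz_2$ with $\varphi_\cJ(P)=0$ if and only if $P\in\cJ$. By Lm.~\ref{lm: algebraic closure}, $\varphi_\cJ$ is a quadratic refinement of $\omega$:
\begin{equation*}
\varphi_\cJ(PQ)=\varphi_\cJ(P)+\varphi_\cJ(Q)+\omega(P,Q)\; .
\end{equation*}
The computation there also shows that its Gauss sum is $\Delta(\phi_\cJ)=-2^n$, i.e.\ Arf invariant $1$. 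The map $\cJ\mapsto\varphi_\cJ$ is obviously injective, since $\cJ=\varphi_\cJ^{-1}(0)$. So it suffices to show that its image is exactly the set of quadratic refinements of $\omega$ with $\Delta=-2^n$, and then to count that set.

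\textbf{Surjectivity.} Let $\varphi$ be any quadratic refinement with $\Delta(\phi)=-2^n$. I would invoke the classification of quadratic forms in characteristic $2$ by the Arf invariant \cite{Arf1941,Dye1978}, as the paper already does. It gives a symplectic basis $\{Q_i,Q'_i\}_{i=1}^n$ with $\varphi(Q_i)=\varphi(Q'_i)=0$ for $i<n$ and $\varphi(Q_n)=\varphi(Q'_n)=1$. From this basis I take the data
\begin{equation*}
\tA=\langle Q_i\rangle_{i<n},\quad a=Q_n,\quad \tB=\langle Q'_i\rangle_{i<n},\quad b=Q'_n\; .
\end{equation*}
These satisfy the hypotheses of Lm.~\ref{lm: conjugate generators}, so they define a set $\cJ$. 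In the canonical form, $\varphi_\cJ$ agrees with $\varphi$ on the basis elements, namely $\varphi_\cJ(X_i)=\varphi_\cJ(Z_i)=0$ for $i<n$ and $=1$ for $i=n$. Both functions obey the same product formula, so they agree everywhere, and $\varphi=\varphi_\cJ$. This is the step I expect to be the main obstacle, since it rests on the Arf classification. If a self-contained argument is wanted, I would instead reduce $\varphi$ by hand via the Clifford generators of App.~\ref{app: Abelian Pauli group}, in the same way the invariance of $\sum_i\varphi(X_i)\varphi(Z_i)$ was checked there.

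\textbf{Counting.} Fix $\varphi_0=\varphi_{\cJ_0}$ for one $\cJ_0$. Any two quadratic refinements of $\omega$ differ by a linear functional, and every linear functional is of the form $\omega(v,\cdot)$. Hence all refinements are $\varphi_v=\varphi_0+\omega(v,\cdot)$ for $v\in\zz_2^{2n}$, and these are pairwise distinct, giving $4^n$ of them. Substituting $Q\mapsto Q+v$ and using the product formula gives
\begin{equation*}
\Delta(\phi_v)=\sum_Q(-1)^{\varphi_0(Q)+\omega(v,Q)}=(-1)^{\varphi_0(v)}\Delta(\phi_0)\; .
\end{equation*}
Therefore $\varphi_v$ has $\Delta=-2^n$ exactly when $\varphi_0(v)=0$, i.e.\ when $v\in\cJ_0$. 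The number of such $v$ is $|\cJ_0|=2^{n-1}(2^n-1)$ by Lm.~\ref{lm: cardinalities}. Combined with injectivity and surjectivity, this gives exactly $2^{n-1}(2^n-1)$ distinct sets $\cJ$. By Lm.~\ref{lm: Clifford covariance}, they are all Clifford conjugate.

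As a check, for $n=2$ the formula gives $6$, matching the pentagon sets of \cite{HoehnWever2017}, and for $n=3$ it gives $28$, matching App.~\ref{app: n=3}.
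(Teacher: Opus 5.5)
Your proof is correct, and it takes a genuinely different route from the paper. The paper fixes one transversal pair $A_X,A_Z$ and counts the admissible data directly. There are $2^n-1$ hyperplanes $\tA_X<A_X$, and for each of them $2^{n-1}$ hyperplanes $\tA_Z<A_Z$ avoiding the unique $g^X_Z\in A_Z$ that commutes with $\tA_X$. It then shows that an arbitrary $\cJ'$, built from any transversal pair, is already of this form. Using the closure relations of Lm.~\ref{lm: algebraic closure} together with Lm.~\ref{lm: no max ab subgroup}, $\cJ'$ meets $A_X$ and $A_Z$ in hyperplanes. Closure then forces $\cJ'$ to contain the corresponding set $\cJ$, hence to coincide with it.

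You instead identify the whole family with the quadratic refinements of $\omega$ of Arf invariant $1$. Surjectivity comes from the normal form of such forms: a symplectic basis with $\varphi=0$ on the first $n-1$ pairs and $\varphi=1$ on both vectors of the last pair. The count comes from the torsor $\varphi_0+\omega(v,\cdot)$ together with the shift identity $\Delta(\phi_v)=(-1)^{\varphi_0(v)}\Delta(\phi_0)$.

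The paper's argument is elementary $\zz_2$-linear algebra relative to a reference frame and needs no classification theorem. It does, however, have to check separately that every $\cJ$ is recovered from its traces on $A_X$ and $A_Z$. Yours outsources that step to the Arf classification, which the paper already cites. In exchange it gives more. First, a characterisation: a subset of $\aP_n$ is a $\cJ$-set if and only if its indicator is an Arf-invariant-$1$ refinement of $\omega$. Second, an intrinsic parametrisation: for $v\in\cJ_0$ one has $\varphi_v(Q)=\varphi_0(vQ)$, so the family is exactly $\{v\cJ_0\mid v\in\cJ_0\}$. This explains why the number of sets equals $|\cJ_0|$.
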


\begin{proof}
    Let $A_X,A_Z<\aP_n$ be two maximal Abelian subgroups with $A_X\cap A_Z=\{\one\}$.
    
    We first show that there are $2^n-1$ ways to choose subgroups $\tA_X\leq A_X$ of dimension $n-1$. To this end, we choose a minimal generating set (of size $n-1$) for such $\tA_X$. For the first element, there are $2^n-1$ possible choices, which reduces the possibilities for the next element by $2^1-1$. Choosing the first two elements reduces the possibilities for the third by $2^2-1$ and so on until we have $2^n-1-(2^{n-2}-1)$ choices for the $(n-1)$-th element. Clearly, this procedure yields a minimal generating set for subgroups $\tA_X<A_X$. However, it is not unique, and we thus need to divide by the number of different generating sets for a fixed $\tA_X<A_X$. Following the same reasoning, for the first element, we may choose any of the $2^{n-1}-1$ elements, for the second, there are $2^{n-1}-1-(2^1-1)$ possibilities, and so on until for the $(n-1)$-th there are $2^{n-1}-1-(2^{n-2}-1)$ choices. The number of different subgroups $\tA_X<A_X$ of dimension $n-1$ thus reads
    \begin{align*}
        \frac{\prod_{i=0}^{n-2}(2^n-1-(2^i-1))}{\prod_{i=0}^{n-2}(2^{n-1}-1-(2^i-1))}
        =\frac{(2^n-1)\prod_{i=1}^{n-2}2(2^{n-1}-2^{i-1})}{(2^{n-1}-2^{n-2})\prod_{i=0}^{n-3}(2^{n-1}-2^i)} 
        =\frac{(2^n-1)2^{n-2}\prod_{i=0}^{n-3}(2^{n-1}-2^{i})}{2^{n-2}\prod_{i=0}^{n-3}(2^{n-1}-2^i)}
        =2^n-1.
    \end{align*}

    Next, we show that there are $2^{n-1}$ ways to choose a subgroup $\tA_Z<A_Z$ of dimension $n-1$, that is compatible with the conditions in Def.~\ref{def: family of purity invariants}. These include that there exists a unique element $g^X_Z\in A_Z$ such that $[g^X_Z,\tA_X]=0$ (uniqueness follows since $\langle\tA_X,g^X_Z\rangle$ is a maximal Abelian subgroup), which has to be excluded from $\tA_Z$. Using the same reasoning as above, the number of subgroups of dimension $n-1$ that include $g^X_Z$ is easily computed to be
    \begin{equation*}
        \frac{\prod_{i=1}^{n-2}(2^n-1-(2^i-1))}{\prod_{i=1}^{n-2}(2^{n-1}-1-(2^i-1))}
        =\frac{2^{n-1}-1}{2^n-1}\left(\frac{\prod_{i=0}^{n-2}(2^n-1-(2^i-1))}{\prod_{i=0}^{n-2}(2^{n-1}-1-(2^i-1))}\right)
        =2^{n-1}-1\; .
    \end{equation*}
    This then leaves $2^n-1-(2^{n-1}-1)=2^{n-1}$ ways to choose subgroups $\tA_Z<A_Z$ of dimension $n-1$ that do not contain $g^X_Z$. (Clearly, different choices of $\tA_X,\tA_Z$ generate different sets $J$ since $\tA_X=J\cap A_X$ and $\tA_Z=J\cap A_Z$ and we are thus left with $2^{n-1}(2^n-1)$ distinct subsets $\cJ$.)

    Finally, by Lm.~\ref{lm: conjugate pairs}, the choice of subgroups $A_X,A_Z$ is without loss of generality, and the result follows. To see this, we show that every set of the form in Def.~\ref{def: family of purity invariants} can be parametrised with respect to the reference subgroups $A_X,A_Z$. Let $J'=J'(\tA'_X,g'^Z_X;A'_X,A'_Z)$ be any set as in Def~\ref{def: family of purity invariants}, defined with respect to any quadruple $(\tA'_X,g'^Z_X;A'_X,A'_Z)$. By Lm.~\ref{lm: algebraic closure}, and since $J'$ is a purity invariant with $I_\psi(J')=2^{n-1}$, $J'$ intersects every maximal Abelian subgroup in an Abelian subgroup of rank $n-1$. With respect to the above reference pair of maximal Abelian subgroups $A_X, A_Z$ we thus obtain subgroups $A_X\cap J'=:\tA_X<A_X$ and $A_Z\cap J'=:\tA_Z<A_Z$ which further uniquely define the elements $g^Z_X\in A_X$ and $g^X_Z\in A_Z$, that due to Lm.~\ref{lm: no max ab subgroup} satisfy $g^Z_X, g^X_Z\in \fg'=\aP_n\backslash\cJ'$. Let $J=J(\tA_X,g^Z_X;A_X,A_Z)$ be the corresponding set in Def.~\ref{def: family of purity invariants}. By (repeated use of) Lm.~\ref{lm: algebraic closure}, $J'$ must contain the respective (anti-)commutators in the definition of $J$, and since this uniquely defines this set, we conclude $J'=J$.
\end{proof}

\section{purity invariants for three qubits}\label{app: n=3}

Invariants in Def.~\ref{def: family of purity invariants} are trivial for $n=1$, and for $n=2$ correspond with the six `pentagons' (maximal sets of mutually anti-commuting Pauli operators) in Ref.~\cite{HoehnWever2017}. Below, we list the 28 subsets of the form in Def.~\ref{def: family of purity invariants} for $n=3$, see the counting result in Thm.~\ref{thm: counting}. (As before, we omit tensor products and write the factors in Pauli operators by simple concatenation. Below, we also do not list the identity element $\one\in\aP_n$, which must be added to identify the sets below with the family in Def.~\ref{def: family of purity invariants}.) With Lm.~\ref{lm: Clifford covariance}, these can be obtained by Clifford conjugating any given set $J$, e.g. the canonical set $J$ corresponding in Lm.~\ref{lm: conjugate generators}. For each of these subsets, we highlight $\tA_X<A_X=\langle X_i\rangle_{i=1}^3$ (in green within the element list), $\tA_Z<A_Z=\langle Z_i\rangle_{i=1}^3$ (in red within the element list), $g_X^Z$ (in green in the rightmost column), $g_Z^X$ (in red in the rightmost column) which can be used in Def.~\ref{def: family of purity invariants} to generate the relevant subset.

In Sec.~\ref{sec: family of pure state invariants}, we argued that the complementarity structure of the pentagon identities in the two-qubit case can equivalently be understood in terms of partitions of Pauli operators into $2^n+1$ maximal Abelian subgroups, equivalently in terms of the mutually unbiased bases of their common stabiliser eigenstates \cite{LawrenceBruknerZeilinger2002}. In this view, mutual anti-commutation between the Pauli operators in $\pent$ is encoded in terms of these elements belonging to different maximal Abelian subgroups of such a partition, and Def.~\ref{def: family of purity invariants} was (in part) motivated by generalising this latter view on complementarity (see also Ref.~\cite{FrembsHoehn2026b}). Indeed, the corresponding complementarity structure can be seen explicitly in the $J$-sets for three qubits below: each set contains $27$ non-identity Pauli elements, which with respect to any partition into $9$ maximal Abelian subgroups decompose into $9$ sets of $3$ (mutually commuting) elements in every subgroup.

\begin{figure}
    \centering
    \normalsize{
    \setlength{\tabcolsep}{4pt}
    \resizebox{\textwidth}{!}{%
    \begin{tabular}{|cccccccccccccc|c|} 
    \hline
    \multicolumn{14}{|c|}{Pauli strings of the 28 Clifford-conjugate purity invariants in Def.~\ref{def: family of purity invariants} for $n=3$} & \cx{$g_X^Z$},\cz{$g_Z^X$} \\
    \hline
    \multicolumn{1}{|c|}{1} & \cx{$\cx{\one\one X}$} & $\one\one Y$ & \cx{$\cx{\one X\one}$} & \cx{$\cx{\one XX}$} & $\one XY$ & $\one Y\one$ & $\one YX$ & $\one YY$ & \cz{$\cz{\one ZZ}$} & $X\one Z$ & $XXZ$ & $XYZ$ & $XZ\one$ & \cx{$XXX$} \\
    \cline{1-1}
    $XZX$ & $XZY$ & $Y\one Z$ & $YXZ$ & $YYZ$ & $YZ\one$ & $YZX$ & $YZY$ & \cz{$\cz{Z\one Z}$} & $ZXZ$ & $ZYZ$ & \cz{$\cz{ZZ\one}$} & $ZZX$ & $ZZY$ & \cz{$Z\one\one$}\\
   \hline

\multicolumn{1}{|c|}{2} & $\cx{\one\one X}$ & $\one\one Y$ & $\cx{\one X\one}$ & $\cx{\one XX}$ & $\one XY$ & $\one YZ$ & $\cz{\one Z\one}$ & $\one ZX$ & $\one ZY$ & $X\one Z$ & $XXZ$ & $XY\one$ & $XYX$ & $\cx{X\one X}$ \\
    \cline{1-1}
    $XYY$ & $XZZ$ & $Y\one Z$ & $YXZ$ & $YY\one$ & $YYX$ & $YYY$ & $YZZ$ & $\cz{Z\one Z}$ & $ZXZ$ & $ZY\one$ & $ZYX$ & $ZYY$ & $\cz{ZZZ}$ & $\cz{Z\one\one}$ \\
    \hline

\multicolumn{1}{|c|}{3} & $\cx{\one\one X}$ & $\one\one Y$ & $\one XZ$ & $\one Y\one$ & $\one YX$ & $\one YY$ & $\cz{\one Z\one}$ & $\one ZX$ & $\one ZY$ & $X\one Z$ & $\cx{XX\one}$ & $\cx{XXX}$ & $XXY$ & $\cx{X\one X}$ \\
    \cline{1-1}
    $XYZ$ & $XZZ$ & $Y\one Z$ & $YX\one$ & $YXX$ & $YXY$ & $YYZ$ & $YZZ$ & $\cz{Z\one Z}$ & $ZX\one$ & $ZXX$ & $ZXY$ & $ZYZ$ & $\cz{ZZZ}$ & $\cz{ZZ\one}$ \\
    \hline

\multicolumn{1}{|c|}{4} & $\cx{\one\one X}$ & $\one\one Y$ & $\one XZ$ & $\one YZ$ & $\cz{\one ZZ}$ & $\cx{X\one\one}$ & $\cx{X\one X}$ & $X\one Y$ & $XXZ$ & $XYZ$ & $XZZ$ & $Y\one\one$ & $Y\one X$ & $\cx{XXX}$ \\
    \cline{1-1}
    $Y\one Y$ & $YXZ$ & $YYZ$ & $YZZ$ & $\cz{Z\one Z}$ & $ZX\one$ & $ZXX$ & $ZXY$ & $ZY\one$ & $ZYX$ & $ZYY$ & $\cz{ZZ\one}$ & $ZZX$ & $ZZY$ & $\cz{\one Z\one}$ \\
    \hline

\multicolumn{1}{|c|}{5} & $\cx{\one\one X}$ & $\one\one Y$ & $\one XZ$ & $\one YZ$ & $\cz{\one ZZ}$ & $\cx{X\one\one}$ & $\cx{X\one X}$ & $X\one Y$ & $XXZ$ & $XYZ$ & $XZZ$ & $Y\one Z$ & $YX\one$ & $\cx{\one XX}$ \\
    \cline{1-1}
    $YXX$ & $YXY$ & $YY\one$ & $YYX$ & $YYY$ & $YZ\one$ & $YZX$ & $YZY$ & $\cz{Z\one\one}$ & $Z\one X$ & $Z\one Y$ & $ZXZ$ & $ZYZ$ & $\cz{ZZZ}$ & $\cz{\one Z\one}$ \\
    \hline

\multicolumn{1}{|c|}{6} & $\cx{\one\one X}$ & $\one\one Y$ & $\one XZ$ & $\one YZ$ & $\cz{\one ZZ}$ & $X\one Z$ & $\cx{XX\one}$ & $\cx{XXX}$ & $XXY$ & $XY\one$ & $XYX$ & $XYY$ & $XZ\one$ & $\cx{\one XX}$ \\
    \cline{1-1}
    $XZX$ & $XZY$ & $Y\one\one$ & $Y\one X$ & $Y\one Y$ & $YXZ$ & $YYZ$ & $YZZ$ & $\cz{Z\one\one}$ & $Z\one X$ & $Z\one Y$ & $ZXZ$ & $ZYZ$ & $\cz{ZZZ}$ & $\cz{ZZ\one}$ \\
    \hline

\multicolumn{1}{|c|}{7} & $\cx{\one\one X}$ & $\cz{\one\one Z}$ & $\cx{\one X\one}$ & $\cx{\one XX}$ & $\one XZ$ & $\one Y\one$ & $\one YX$ & $\one YZ$ & $\one ZY$ & $X\one Y$ & $XXY$ & $XYY$ & $XZ\one$ & $\cx{XX\one}$ \\
    \cline{1-1}
    $XZX$ & $XZZ$ & $Y\one Y$ & $YXY$ & $YYY$ & $YZ\one$ & $YZX$ & $YZZ$ & $Z\one Y$ & $ZXY$ & $ZYY$ & $\cz{ZZ\one}$ & $ZZX$ & $\cz{ZZZ}$ & $\cz{Z\one\one}$ \\
    \hline

\multicolumn{1}{|c|}{8} & $\cx{\one\one X}$ & $\cz{\one\one Z}$ & $\cx{\one X\one}$ & $\cx{\one XX}$ & $\one XZ$ & $\one YY$ & $\cz{\one Z\one}$ & $\one ZX$ & $\cz{\one ZZ}$ & $X\one Y$ & $XXY$ & $XY\one$ & $XYX$ & $\cx{X\one\one}$ \\
    \cline{1-1}
    $XYZ$ & $XZY$ & $Y\one Y$ & $YXY$ & $YY\one$ & $YYX$ & $YYZ$ & $YZY$ & $Z\one Y$ & $ZXY$ & $ZY\one$ & $ZYX$ & $ZYZ$ & $ZZY$ & $\cz{Z\one\one}$ \\
    \hline

\multicolumn{1}{|c|}{9} & $\cx{\one\one X}$ & $\cz{\one\one Z}$ & $\one XY$ & $\one Y\one$ & $\one YX$ & $\one YZ$ & $\cz{\one Z\one}$ & $\one ZX$ & $\cz{\one ZZ}$ & $X\one Y$ & $\cx{XX\one}$ & $\cx{XXX}$ & $XXZ$ & $\cx{X\one\one}$ \\
    \cline{1-1}
    $XYY$ & $XZY$ & $Y\one Y$ & $YX\one$ & $YXX$ & $YXZ$ & $YYY$ & $YZY$ & $Z\one Y$ & $ZX\one$ & $ZXX$ & $ZXZ$ & $ZYY$ & $ZZY$ & $\cz{ZZ\one}$ \\
    \hline

\multicolumn{1}{|c|}{10} & $\cx{\one\one X}$ & $\cz{\one\one Z}$ & $\one XY$ & $\one YY$ & $\one ZY$ & $\cx{X\one\one}$ & $\cx{X\one X}$ & $X\one Z$ & $XXY$ & $XYY$ & $XZY$ & $Y\one\one$ & $Y\one X$ & $\cx{XX\one}$ \\
    \cline{1-1}
    $Y\one Z$ & $YXY$ & $YYY$ & $YZY$ & $Z\one Y$ & $ZX\one$ & $ZXX$ & $ZXZ$ & $ZY\one$ & $ZYX$ & $ZYZ$ & $\cz{ZZ\one}$ & $ZZX$ & $\cz{ZZZ}$ & $\cz{\one Z\one}$ \\
    \hline

\multicolumn{1}{|c|}{11} & $\cx{\one\one X}$ & $\cz{\one\one Z}$ & $\one XY$ & $\one YY$ & $\one ZY$ & $\cx{X\one\one}$ & $\cx{X\one X}$ & $X\one Z$ & $XXY$ & $XYY$ & $XZY$ & $Y\one Y$ & $YX\one$ & $\cx{\one X\one}$ \\
    \cline{1-1}
    $YXX$ & $YXZ$ & $YY\one$ & $YYX$ & $YYZ$ & $YZ\one$ & $YZX$ & $YZZ$ & $\cz{Z\one\one}$ & $Z\one X$ & $\cz{Z\one Z}$ & $ZXY$ & $ZYY$ & $ZZY$ & $\cz{\one Z\one}$ \\
    \hline

\multicolumn{1}{|c|}{12} & $\cx{\one\one X}$ & $\cz{\one\one Z}$ & $\one XY$ & $\one YY$ & $\one ZY$ & $X\one Y$ & $\cx{XX\one}$ & $\cx{XXX}$ & $XXZ$ & $XY\one$ & $XYX$ & $XYZ$ & $XZ\one$ & $\cx{\one X\one}$ \\
    \cline{1-1}
    $XZX$ & $XZZ$ & $Y\one\one$ & $Y\one X$ & $Y\one Z$ & $YXY$ & $YYY$ & $YZY$ & $\cz{Z\one\one}$ & $Z\one X$ & $\cz{Z\one Z}$ & $ZXY$ & $ZYY$ & $ZZY$ & $\cz{ZZ\one}$ \\
    \hline

\multicolumn{1}{|c|}{13} & $\one\one Y$ & $\cz{\one\one Z}$ & $\cx{\one X\one}$ & $\one XY$ & $\one XZ$ & $\one Y\one$ & $\one YY$ & $\one YZ$ & $\one ZX$ & $\cx{X\one X}$ & $\cx{XXX}$ & $XYX$ & $XZ\one$ & $\cx{XX\one}$ \\
    \cline{1-1}
    $XZY$ & $XZZ$ & $Y\one X$ & $YXX$ & $YYX$ & $YZ\one$ & $YZY$ & $YZZ$ & $Z\one X$ & $ZXX$ & $ZYX$ & $\cz{ZZ\one}$ & $ZZY$ & $\cz{ZZZ}$ & $\cz{Z\one Z}$ \\
    \hline

\multicolumn{1}{|c|}{14} & $\one\one Y$ & $\cz{\one\one Z}$ & $\cx{\one X\one}$ & $\one XY$ & $\one XZ$ & $\one YX$ & $\cz{\one Z\one}$ & $\one ZY$ & $\cz{\one ZZ}$ & $\cx{X\one X}$ & $\cx{XXX}$ & $XY\one$ & $XYY$ & $\cx{X\one\one}$ \\
    \cline{1-1}
    $XYZ$ & $XZX$ & $Y\one X$ & $YXX$ & $YY\one$ & $YYY$ & $YYZ$ & $YZX$ & $Z\one X$ & $ZXX$ & $ZY\one$ & $ZYY$ & $ZYZ$ & $ZZX$ & $\cz{Z\one Z}$ \\
    \hline

\multicolumn{1}{|c|}{15} & $\one\one Y$ & $\cz{\one\one Z}$ & $\cx{\one XX}$ & $\one Y\one$ & $\one YY$ & $\one YZ$ & $\cz{\one Z\one}$ & $\one ZY$ & $\cz{\one ZZ}$ & $\cx{X\one X}$ & $\cx{XX\one}$ & $XXY$ & $XXZ$ & $\cx{X\one\one}$ \\
    \cline{1-1}
    $XYX$ & $XZX$ & $Y\one X$ & $YX\one$ & $YXY$ & $YXZ$ & $YYX$ & $YZX$ & $Z\one X$ & $ZX\one$ & $ZXY$ & $ZXZ$ & $ZYX$ & $ZZX$ & $\cz{ZZZ}$ \\
    \hline

\multicolumn{1}{|c|}{16} & $\one\one Y$ & $\cz{\one\one Z}$ & $\cx{\one XX}$ & $\one YX$ & $\one ZX$ & $\cx{X\one\one}$ & $X\one Y$ & $X\one Z$ & $\cx{XXX}$ & $XYX$ & $XZX$ & $Y\one\one$ & $Y\one Y$ & $\cx{XX\one}$ \\
    \cline{1-1}
    $Y\one Z$ & $YXX$ & $YYX$ & $YZX$ & $Z\one X$ & $ZX\one$ & $ZXY$ & $ZXZ$ & $ZY\one$ & $ZYY$ & $ZYZ$ & $\cz{ZZ\one}$ & $ZZY$ & $\cz{ZZZ}$ & $\cz{\one ZZ}$ \\
    \hline

\multicolumn{1}{|c|}{17} & $\one\one Y$ & $\cz{\one\one Z}$ & $\cx{\one XX}$ & $\one YX$ & $\one ZX$ & $\cx{X\one\one}$ & $X\one Y$ & $X\one Z$ & $\cx{XXX}$ & $XYX$ & $XZX$ & $Y\one X$ & $YX\one$ & $\cx{\one X\one}$ \\
    \cline{1-1}
    $YXY$ & $YXZ$ & $YY\one$ & $YYY$ & $YYZ$ & $YZ\one$ & $YZY$ & $YZZ$ & $\cz{Z\one\one}$ & $Z\one Y$ & $\cz{Z\one Z}$ & $ZXX$ & $ZYX$ & $ZZX$ & $\cz{\one ZZ}$ \\
    \hline

\multicolumn{1}{|c|}{18} & $\one\one Y$ & $\cz{\one\one Z}$ & $\cx{\one XX}$ & $\one YX$ & $\one ZX$ & $\cx{X\one X}$ & $\cx{XX\one}$ & $XXY$ & $XXZ$ & $XY\one$ & $XYY$ & $XYZ$ & $XZ\one$ & $\cx{\one X\one}$ \\
    \cline{1-1}
    $XZY$ & $XZZ$ & $Y\one\one$ & $Y\one Y$ & $Y\one Z$ & $YXX$ & $YYX$ & $YZX$ & $\cz{Z\one\one}$ & $Z\one Y$ & $\cz{Z\one Z}$ & $ZXX$ & $ZYX$ & $ZZX$ & $\cz{ZZZ}$ \\
    \hline

\multicolumn{1}{|c|}{19} & $\cx{\one X\one}$ & $\one Y\one$ & $\one ZX$ & $\one ZY$ & $\cz{\one ZZ}$ & $\cx{X\one\one}$ & $\cx{XX\one}$ & $XY\one$ & $XZX$ & $XZY$ & $XZZ$ & $Y\one\one$ & $YX\one$ & $\cx{XXX}$ \\
    \cline{1-1}
    $YY\one$ & $YZX$ & $YZY$ & $YZZ$ & $Z\one X$ & $Z\one Y$ & $\cz{Z\one Z}$ & $ZXX$ & $ZXY$ & $ZXZ$ & $ZYX$ & $ZYY$ & $ZYZ$ & $\cz{ZZ\one}$ & $\cz{\one\one Z}$ \\
    \hline

\multicolumn{1}{|c|}{20} & $\cx{\one X\one}$ & $\one Y\one$ & $\one ZX$ & $\one ZY$ & $\cz{\one ZZ}$ & $\cx{X\one\one}$ & $\cx{XX\one}$ & $XY\one$ & $XZX$ & $XZY$ & $XZZ$ & $Y\one X$ & $Y\one Y$ & $\cx{\one XX}$ \\
    \cline{1-1}
    $Y\one Z$ & $YXX$ & $YXY$ & $YXZ$ & $YYX$ & $YYY$ & $YYZ$ & $YZ\one$ & $\cz{Z\one\one}$ & $ZX\one$ & $ZY\one$ & $ZZX$ & $ZZY$ & $\cz{ZZZ}$ & $\cz{\one\one Z}$ \\
    \hline

\multicolumn{1}{|c|}{21} & $\cx{\one X\one}$ & $\one Y\one$ & $\one ZX$ & $\one ZY$ & $\cz{\one ZZ}$ & $\cx{X\one X}$ & $X\one Y$ & $X\one Z$ & $\cx{XXX}$ & $XXY$ & $XXZ$ & $XYX$ & $XYY$ & $\cx{\one XX}$ \\
    \cline{1-1}
    $XYZ$ & $XZ\one$ & $Y\one\one$ & $YX\one$ & $YY\one$ & $YZX$ & $YZY$ & $YZZ$ & $\cz{Z\one\one}$ & $ZX\one$ & $ZY\one$ & $ZZX$ & $ZZY$ & $\cz{ZZZ}$ & $\cz{Z\one Z}$ \\
    \hline

\multicolumn{1}{|c|}{22} & $\cx{\one X\one}$ & $\one YX$ & $\one YY$ & $\one YZ$ & $\cz{\one Z\one}$ & $\cx{X\one\one}$ & $\cx{XX\one}$ & $XYX$ & $XYY$ & $XYZ$ & $XZ\one$ & $Y\one\one$ & $YX\one$ & $\cx{X\one X}$ \\
    \cline{1-1}
    $YYX$ & $YYY$ & $YYZ$ & $YZ\one$ & $Z\one X$ & $Z\one Y$ & $\cz{Z\one Z}$ & $ZXX$ & $ZXY$ & $ZXZ$ & $ZY\one$ & $ZZX$ & $ZZY$ & $\cz{ZZZ}$ & $\cz{\one\one Z}$ \\
    \hline

\multicolumn{1}{|c|}{23} & $\cx{\one X\one}$ & $\one YX$ & $\one YY$ & $\one YZ$ & $\cz{\one Z\one}$ & $\cx{X\one\one}$ & $\cx{XX\one}$ & $XYX$ & $XYY$ & $XYZ$ & $XZ\one$ & $Y\one X$ & $Y\one Y$ & $\cx{\one\one X}$ \\
    \cline{1-1}
    $Y\one Z$ & $YXX$ & $YXY$ & $YXZ$ & $YY\one$ & $YZX$ & $YZY$ & $YZZ$ & $\cz{Z\one\one}$ & $ZX\one$ & $ZYX$ & $ZYY$ & $ZYZ$ & $\cz{ZZ\one}$ & $\cz{\one\one Z}$ \\
    \hline

\multicolumn{1}{|c|}{24} & $\cx{\one X\one}$ & $\one YX$ & $\one YY$ & $\one YZ$ & $\cz{\one Z\one}$ & $\cx{X\one X}$ & $X\one Y$ & $X\one Z$ & $\cx{XXX}$ & $XXY$ & $XXZ$ & $XY\one$ & $XZX$ & $\cx{\one\one X}$ \\
    \cline{1-1}
    $XZY$ & $XZZ$ & $Y\one\one$ & $YX\one$ & $YYX$ & $YYY$ & $YYZ$ & $YZ\one$ & $\cz{Z\one\one}$ & $ZX\one$ & $ZYX$ & $ZYY$ & $ZYZ$ & $\cz{ZZ\one}$ & $\cz{Z\one Z}$ \\
    \hline

\multicolumn{1}{|c|}{25} & $\cx{\one XX}$ & $\one XY$ & $\one XZ$ & $\one Y\one$ & $\cz{\one Z\one}$ & $\cx{X\one\one}$ & $\cx{XXX}$ & $XXY$ & $XXZ$ & $XY\one$ & $XZ\one$ & $Y\one\one$ & $YXX$ & $\cx{X\one X}$ \\
    \cline{1-1}
    $YXY$ & $YXZ$ & $YY\one$ & $YZ\one$ & $Z\one X$ & $Z\one Y$ & $\cz{Z\one Z}$ & $ZX\one$ & $ZYX$ & $ZYY$ & $ZYZ$ & $ZZX$ & $ZZY$ & $\cz{ZZZ}$ & $\cz{\one ZZ}$ \\
    \hline

\multicolumn{1}{|c|}{26} & $\cx{\one XX}$ & $\one XY$ & $\one XZ$ & $\one Y\one$ & $\cz{\one Z\one}$ & $\cx{X\one\one}$ & $\cx{XXX}$ & $XXY$ & $XXZ$ & $XY\one$ & $XZ\one$ & $Y\one X$ & $Y\one Y$ & $\cx{\one\one X}$ \\
    \cline{1-1}
    $Y\one Z$ & $YX\one$ & $YYX$ & $YYY$ & $YYZ$ & $YZX$ & $YZY$ & $YZZ$ & $\cz{Z\one\one}$ & $ZXX$ & $ZXY$ & $ZXZ$ & $ZY\one$ & $\cz{ZZ\one}$ & $\cz{\one ZZ}$ \\
    \hline

    \multicolumn{1}{|c|}{27} & $\cx{\one XX}$ & $\one XY$ & $\one XZ$ & $\one Y\one$ & $\cz{\one Z\one}$ & $\cx{X\one X}$ & $X\one Y$ & $X\one Z$ & $\cx{XX\one}$ & $XYX$ & $XYY$ & $XYZ$ & $XZX$ & \cx{$\one\one X$} \\
    \cline{1-1}
    $XZY$ & $XZZ$ & $Y\one\one$ & $YXX$ & $YXY$ & $YXZ$ & $YY\one$ & $YZ\one$ & $\cz{Z\one\one}$ & $ZXX$ & $ZXY$ & $ZXZ$ & $ZY\one$ & $\cz{ZZ\one}$ & \cz{$ZZZ$} \\
    \hline
    \multicolumn{1}{|c|}{28} & $\cx{\one XX}$ & $\one XY$ & $\one XZ$ & $\one YX$ & $\one YY$ & $\one YZ$ & $\one ZX$ & $\one ZY$ & $\cz{\one ZZ}$ & $\cx{X\one X}$ & $X\one Y$ & $X\one Z$ & $\cx{XX\one}$ & \cx{$XXX$} \\
    \cline{1-1}
    $XY\one$ & $XZ\one$ & $Y\one X$ & $Y\one Y$ & $Y\one Z$ & $YX\one$ & $YY\one$ & $YZ\one$ & $Z\one X$ & $Z\one Y$ & $\cz{Z\one Z}$ & $ZX\one$ & $ZY\one$ & $\cz{ZZ\one}$ & \cz{$ZZZ$} \\
    \hline
   \end{tabular}%
   }
   }
\end{figure}

\end{document}